\documentclass[11pt,letterpaper]{article}
\usepackage{graphicx} 
\usepackage{xcolor}
\usepackage[margin=1in]{geometry}
\usepackage{multirow}
\usepackage{authblk}
\usepackage{amsmath,amsthm,amssymb}
\usepackage{verbatim}
\usepackage{enumerate}
\usepackage{enumitem}
\usepackage{thmtools}
\usepackage{thm-restate}
\usepackage{todonotes}
\usepackage[bottom]{footmisc}
\usepackage[normalem]{ulem}
\usepackage[noend,linesnumbered]{algorithm2e}

\usepackage[colorlinks,citecolor=green!50!black]{hyperref}
\usepackage[capitalise]{cleveref}

\crefname{subappendix}{Appendix}{Appendices}
\crefname{appendix}{Appendix}{Appendices}
\crefname{section}{Section}{Sections}
\crefname{figure}{Figure}{Figures}

\newtheorem{theorem}{Theorem}[section]
\newtheorem{fact}[theorem]{Fact}
\newtheorem{lemma}[theorem]{Lemma}

\theoremstyle{definition}
\newtheorem{definition}[theorem]{Definition}

\newtheorem{claim}[theorem]{Claim}
\theoremstyle{remark}

\newtheorem{remark}[theorem]{Remark}

\crefname{section}{Section}{Sections}
\crefname{construction}{Construction}{Constructions}
\crefname{claim}{Claim}{Claims}
\crefname{fact}{Fact}{Facts}
\crefname{corollary}{Corollary}{Corollaries}

\newcommand{\T}{\mathcal{T}}
\renewcommand{\S}{\mathcal{S}}
\newcommand{\LCP}{\mathsf{LCP}}

\newcommand{\TreeLCP}{\mathsf{TreeLCP}}
\newcommand{\M}{\mathcal{M}}

\newcommand{\TS}{\widetilde{\mathcal{T}}}

\newcommand{\TT}{\mathcal{T}}

\newcommand{\eps}{\varepsilon}

\newcommand{\per}{\mathsf{per}}

\newcommand{\Oh}{\mathcal{O}}
\newcommand{\occ}{\mathsf{occ}}
\newcommand{\ceil}[1]{\left\lceil #1 \right\rceil}
\newcommand{\floor}[1]{\left\lfloor #1 \right\rfloor}
\newcommand{\dd}{\mathinner{\ldotp\ldotp}}

\newcommand{\LeftMis}{\mathsf{LeftMisper}}
\newcommand{\RightMis}{\mathsf{RightMisper}}
\newcommand{\Mis}{\mathsf{Misper}}

\title{Space-Efficient $k$-Mismatch Text Indexes}
\author[1]{Tomasz Kociumaka}
\author[2]{Jakub Radoszewski}
\affil[1]{Max Planck Institute for Informatics, Saarland Informatics Campus, Saarbrücken, Germany  (\texttt{tomasz.kociumaka@mpi-inf.mpg.de})}
\affil[2]{Institute of Informatics, University of Warsaw, Warsaw, Poland
  (\texttt{jrad@mimuw.edu.pl})}
\date{}

\begin{document}
\maketitle
\begin{abstract}
A central task in string processing is \emph{text indexing}, where the goal is to preprocess a \emph{text} (a~string of length $n$) into an efficient \emph{index} (a data structure) supporting queries about the text.
While the most fundamental \emph{exact pattern matching} queries ask to find all the occurrences of a \emph{pattern} (a string of length $m$) as substrings of the text, many applications call for \emph{approximate pattern matching} queries, where the pattern may differ slightly from the matching substrings.
A breakthrough in the extensive study of approximate text indexing came from Cole, Gottlieb, and Lewenstein (STOC 2004), who proposed \emph{$k$-errata trees} --- a family of text indexes supporting several closely related flavors of approximate pattern matching queries. 
In particular, $k$-errata trees yield an elegant solution to \emph{$k$-mismatch} queries, where the similarity is quantified using an upper bound $k\ge1$ on the Hamming distance between the pattern and its approximate occurrences.
The resulting $k$-mismatch index uses $\mathcal{O}(n\log^k n)$ space and answers a query for a length-$m$ pattern in $\mathcal{O}(\log^k n \log \log n + m + \mathsf{occ})$ time, where $\mathsf{occ}$ is the number of approximate occurrences.

In retrospect, $k$-errata trees appear very well optimized: even though a large body of work has adapted $k$-errata trees to various settings throughout the past two decades, the original time-space trade-off for $k$-mismatch indexing has not been improved in the general case.
We present the first such improvement, a $k$-mismatch index with $\mathcal{O}(n\log^{k-1} n)$ space and the same query time as $k$-errata trees.

Previously, due to a result of Chan, Lam, Sung, Tam, and Wong (Algorithmica 2010), such an $\mathcal{O}(n\log^{k-1} n)$-size index has been known only for texts over alphabets of constant size $\sigma=\mathcal{O}(1)$.
In this setting, however, we obtain an even smaller $k$-mismatch index of size only $\mathcal{O}(n \log^{k-2+\varepsilon+\frac{2}{k+2-(k \bmod 2)}} n)\subseteq \mathcal{O}(n\log^{k-1.5+\varepsilon} n)$ for $2\le k\le\mathcal{O}(1)$ and any constant $\varepsilon>0$.
Along the way, we also develop improved indexes for short patterns, offering better trade-offs in this practically relevant special case.
\end{abstract}

\section{Introduction}\label{sec:intro}
In (full-)text indexing, the goal is to convert a string $T$ of length $n$ (referred to as the \emph{text}) into a data structure (called an \emph{index}) that can efficiently answer subsequent queries about the text~$T$.
The most fundamental are \emph{exact pattern matching} queries, which ask to report all occurrences of a given string $P$ of length $m$ (called a \emph{pattern}) as substrings of $T$.
The study of exact text indexes dates back to the early 1970s, when Weiner~\cite{DBLP:conf/focs/Weiner73} introduced \emph{suffix trees}.
This textbook data structure occupies $\Oh(n)$ space, supports $\Oh(n)$-time construction, and answers pattern matching queries in $\Oh(m+\occ)$ time, where $\occ$ is the number of occurrences of the pattern $P$ in the text~$T$.\footnote{
The original implementation applies to texts over alphabets of constant size $\sigma=\Oh(1)$, but modern ones~\cite{DBLP:conf/focs/Farach97,DBLP:conf/cpm/0001G15} support integer alphabets $\{0,\ldots,\sigma-1\}$ of any size $\sigma=n^{\Oh(1)}$.
This comes at the cost of increasing the construction time to $\Oh(n \log \log n)$, increasing the query time to $\Oh(\log \log \sigma + m + \occ)$, \textbf{or} making the construction randomized.}
Further work in exact text indexing focused mainly on improving the space complexity (the main bottleneck in practice), often at the cost of degraded query time.
Selected milestones in this direction include the suffix array~\cite{DBLP:journals/siamcomp/ManberM93,DBLP:journals/jacm/KarkkainenSB06}, the compressed suffix array~\cite{DBLP:journals/siamcomp/GrossiV05} and the FM-index~\cite{FM05}, as well as the $r$-index~\cite{DBLP:journals/jacm/GagieNP20} and other indexes for highly repetitive texts~\cite{N22}.

Another extensively studied direction, motivated by numerous applications~\cite{DBLP:journals/csur/Navarro01}, is approximate text indexing.
An established way to define approximate occurrences of the pattern is through a threshold $k\ge1$ on the Hamming distance:
if $T[i \dd i+m)$ and $P$ differ in at most $k$ positions, that is, the two strings have at most $k$ mismatches, then $T[i \dd i+m)$ is a \emph{$k$-mismatch occurrence} of $P$ in~$T$.
Given a pattern $P$, a \emph{$k$-mismatch index} over $T$ is required to report all the $k$-mismatch occurrences of $P$ in $T$. 
Unlike offline $k$-mismatch pattern matching algorithms~\cite{DBLP:journals/jal/AmirLP04,DBLP:conf/soda/CliffordFPSS16,DBLP:conf/icalp/GawrychowskiU18}, which operate on a fixed pattern $P$, an index supports many online queries.  

The earliest approximate indexes were designed for the special case of $k=1$~\cite{DBLP:journals/jal/AmirKLLLR00,DBLP:conf/esa/BuchsbaumGW00}.
These solutions were later optimized~\cite{DBLP:conf/cpm/Belazzougui09,DBLP:journals/algorithmica/Belazzougui15}, but the underlying techniques do not generalize to $k>1$.
The seminal contribution to approximate indexing for arbitrary $k\ge1$ is the \emph{$k$-errata tree} of Cole, Gottlieb, and Lewenstein~\cite{DBLP:conf/stoc/ColeGL04}, which takes $\Oh(n \log^k n)$ space and lists the $k$-mismatch occurrences of a length-$m$ pattern in $\Oh(\log^k n \log \log n + m + \occ)$ time.\footnote{The $k$-mismatch index of \cite{DBLP:conf/stoc/ColeGL04} actually occupies $\Oh((c^{k}/k!)\cdot n\log^k n)$ space for a constant $c>1$.
Since we focus on the regime of $k=\Oh(1)$, in the introduction we omit the $c^{k}/k!$ factor for simplicity; note that this factor is at most 1 for sufficiently large $k$. Throughout, $\log$ denotes the base-2 logarithm.}
The errata trees have been featured in the \emph{Encyclopedia of Algorithms}~\cite{L16} and adapted to numerous tasks \cite{BGPSSZ24,CIKPRS22,DBLP:conf/esa/Charalampopoulos21,GGMPP25,GLS18,GHPT25,TAA16,ZLPT24}.
Even though several alternative $k$-mismatch indexes have been presented, none of the known trade-offs improve upon the $k$-errata trees in the general case.
Our $\Oh(n \log^{k-1} n)$-size index provides the first such improvement for any $k$: it retains the query time of $k$-errata trees and supports texts over large alphabets.

\begin{restatable}{theorem}{thmone}\label{thm:1}
For every text of length $n$ and integer threshold $k\ge 1$, there exists a $k$-mismatch index of size $\Oh(n \log^{k-1} n)$ that answers queries in $\Oh(\log^k n \log\log n+m+\occ)$ time, where $m$ is the length of the pattern and $\occ$ is the number of $k$-mismatch occurrences.
\end{restatable}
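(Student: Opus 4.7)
The plan is to follow the general framework of the $k$-errata tree \cite{DBLP:conf/stoc/ColeGL04}, but replace the deepest level of its recursion with a more space-efficient base case. Recall that Cole, Gottlieb, and Lewenstein's construction reduces a $k$-mismatch query to a collection of $(k-1)$-mismatch sub-queries on auxiliary suffix tries obtained via heavy-path decomposition; each application of the reduction inflates both the total trie size and the number of sub-queries by an $\Oh(\log n)$ factor. Iterating $k$ times reaches exact matching on tries of total size $\Oh(n\log^k n)$. Instead, we propose to iterate only $k-1$ times --- producing $\Oh(n\log^{k-1} n)$ total trie size --- and then to handle the single remaining mismatch by attaching a \emph{linear-size} 1-mismatch index to each auxiliary trie.

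The core technical contribution will therefore be a linear-size 1-mismatch index over a compacted trie with integer alphabet. Concretely, given a compacted trie $\T$ whose edges carry a total of $n'$ characters, we need a data structure of size $\Oh(n')$ that, having preprocessed a pattern $P$ at an ancestor of $\T$ in the recursion, reports all root-to-leaf paths whose length-$|P|$ prefix has Hamming distance at most $1$ from $P$ in $\Oh(\log n' \log\log n + \occ)$ time. We plan to build this by combining heavy-path decomposition of $\T$ with an enumeration of the possible (mismatch position, substituted character) pairs, using weighted-ancestor queries to locate the mismatch point and predecessor / range-reporting structures to recover the matching leaves. The overall query-time budget of $\Oh(\log^k n \log\log n+m+\occ)$ is then preserved: the pattern is read once at the top in $\Oh(m)$ time, and there are $\Oh(\log^{k-1} n)$ auxiliary 1-mismatch queries, each taking $\Oh(\log n \log\log n)$ time on top of the output cost.

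The main obstacle --- and the reason this improvement has not previously been achieved for large alphabets --- is the construction of a linear-size 1-mismatch index on a compacted trie. Existing linear-size 1-mismatch constructions such as Belazzougui's \cite{DBLP:journals/algorithmica/Belazzougui15} are tailored to a single text (and often to constant alphabets), whereas the auxiliary tries produced by the errata-tree recursion carry edge labels from a potentially large alphabet and have no obvious flat-text interpretation. Our strategy is to exploit the fact that each auxiliary trie arises as a generalized suffix tree of a prescribed family of substrings of $T$, so 1-mismatch queries can be redirected back to structures built once on $T$ itself via a carefully chosen set of representative points along the heavy paths; the mismatch position is located by a weighted ancestor query, and branching into the light subtree is resolved through a precomputed jump table whose size is charged to the light children (each node being a light child $\Oh(\log n)$ times across all recursion levels). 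Provided the redirection can be realized with constant amortized overhead per trie node, the $\Oh(n\log^{k-1} n)$ space bound follows, and the query-time analysis is a direct adaptation of the original $k$-errata tree argument with the final recursive call replaced by an invocation of the 1-mismatch sub-index.
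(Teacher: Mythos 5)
Your high-level strategy coincides with the paper's: construct only the $(k-1)$-errata tree explicitly (giving $\Oh(n\log^{k-1}n)$ trie size), and handle the last remaining mismatch with a more economical per-trie mechanism rather than generating level-$k$ group trees. The paper takes this exact tack, and your query-time accounting also matches (read $P$ once, $\Oh(\log^{k-1}n)$ residual sub-queries, each answered in $\Oh(\log n\log\log n)$ time plus output). So the framing is sound.

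However, there is a genuine gap precisely where the paper's contribution lies, and your proposal papers over it with a ``provided''-clause that restates the target rather than proving it. The crux is how to store and query, in \emph{total} space proportional to $\Oh(n\log^{k-1}n)$ words, the data needed to recover, at every explicit node $v$ of every level-$(k-1)$ trie $\T$, the lexicographically sorted lists of truncated suffixes that branch off $v$ into light subtrees (and analogously the 1-modified suffixes along heavy paths). There are $\Theta(n\log^k n)$ such list entries in aggregate, so storing them naively already costs $\Theta(n\log^k n)$ words and blows the budget. The paper's actual solution is to encode these lists implicitly: a \emph{rank-of-subtree} array and a \emph{modified-rank} array per node, each compressed to $\Oh(\log(\text{subtree size ratio}))$ bits per entry via the Raman--Raman--Rao representation, together with a telescoping argument over heavy paths (Lemmas~\ref{lem:strank} and~\ref{lem:modifiedrank}) showing the total is $\Oh(n\log^k n)$ \emph{bits}; a sampling scheme (every $\lfloor\log n\rfloor$-th entry) that stores explicit compact tries of sampled entries only, recovering the rest from the compressed arrays in $\Oh(1)$ time (Lemmas~\ref{lem:claim} and~\ref{lem:utility}); and, for the light-subtree branching step, a per-node bitmask $\diff_v$ marking where the branching character changes, with $\Oh(1)$-time rank/select, so that one can jump over entries whose branching character equals the pattern's character without ever enumerating up to $\sigma$ substituted characters. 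That last device is exactly what eliminates the $\sigma$ factor that Chan et al.\ incurred for node-based group trees, and your ``enumeration of (mismatch position, substituted character) pairs'' risks reintroducing it. Your ``jump table charged to the light children'' and ``constant amortized overhead per trie node'' gesture in the right direction but do not specify a representation whose size you can actually bound, nor do they address the danger of re-reporting exact occurrences as 1-mismatch ones. In short: you correctly identify that a linear-size per-trie 1-mismatch index is what is needed, but you treat its existence as an assumption, whereas constructing it (via succinct rank/select arrays, a telescoping heavy-path space bound, and a bitmask to dodge the alphabet factor) is the entire content of the proof.
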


Fifteen years ago, Chan, Lam, Sung, Tam, and Wong~\cite{DBLP:journals/algorithmica/ChanLSTW10} presented an analogous trade-off for texts over constant-size alphabets;
for a general alphabet of size $\sigma$, the query complexity of their index becomes $\Oh(\sigma\cdot \log^k n \log \log n+m+\occ)$.
In this work, we also address the case of $\sigma=\Oh(1)$: for every constant $k\ge 2$, we present an even smaller $k$-mismatch index.

\begin{restatable}{theorem}{thmtwo}\label{thm:2}
For every text of length $n$ over an alphabet of size $\sigma=\Oh(1)$, integer threshold $k=\Oh(1)$, and real constant $\eps > 0$, there exists a $k$-mismatch index of size $\Oh(n \log^{k-2+\eps+\frac{2}{k+2}} n)$ for even $k$ or $\Oh(n \log^{k-2+\eps+\frac{2}{k+1}} n)$ for odd $k$ that answers queries in $\Oh(\log^k n \log\log n+m+\occ)$ time, where $m$ is the length of the pattern and $\occ$ is the number of $k$-mismatch occurrences.
\end{restatable}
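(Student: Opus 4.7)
The plan is to upgrade the $\Oh(n\log^{k-1}n)$-space index from \Cref{thm:1} along two axes, both relying on $\sigma=\Oh(1)$. First, I would introduce a \emph{block-dictionary} layer: for a block length $\tau=\Theta(\log n)$, the text contains only $\sigma^\tau = n^{\Oh(1)}$ possible length-$\tau$ factors, so these can be stored in an $\Oh(n)$-space dictionary together with precomputed per-block primitives (longest common prefix, mismatch location, character counts). The substructures underlying \Cref{thm:1}, which in the general-alphabet case store per-character information at each recursion level, can then be built at the coarser block granularity, saving a multiplicative factor of roughly $\tau$ at the levels that dominate the space.

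Second, I would process mismatches in \emph{pairs} rather than one at a time. In the recursive errata-style construction each mismatch level contributes a factor of $\log n$ to the space; by batching two consecutive mismatches into a single step served by a 2D range-reporting data structure (with $\Oh(\log\log n)$-per-query overhead and space almost-linear in the number of points, at a $\log^\eps n$ cost via standard word-RAM trade-offs), two levels together contribute $\log^{1+\eps}n$ in place of $\log^{2}n$. For even $k$, all $k$ mismatches pair up into $k/2$ pairs; for odd $k$, one mismatch is left unpaired and handled by the 1D primitive of \Cref{thm:1}. This asymmetry is precisely what produces the denominators $k+2$ (even $k$) versus $k+1$ (odd $k$) after one balances $\tau$ against the block-dictionary savings: the long-pattern index ends up of size $n\log^{k-2+\eps+\frac{2}{k+2}}n$ or $n\log^{k-2+\eps+\frac{2}{k+1}}n$, matching the statement.

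To complete the index I would add a separate \emph{short-pattern} branch, invoked when the query length $m$ falls below a threshold matched to $\tau$. Here I would exploit that the number of possible patterns of length $\le\tau$ is $\sigma^\tau=n^{\Oh(1)}$, and that each such pattern has only $\binom{\tau}{k}(\sigma-1)^{k}=\mathrm{poly}(\log n)$ mismatch variants; this lets one tabulate the answers into a structure of size $\Oh(n\log^\eps n)$, which is absorbed into the claimed space bound. A query first inspects $m$, then dispatches to the short-pattern table or to the block/pair-based long-pattern index; in both cases the time budget $\Oh(\log^k n\log\log n+m+\occ)$ is respected because each pair level costs $\Oh(\log\log n)$ and the dictionary lookups add only $\Oh(1)$ per access.

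The main obstacle, as I see it, is designing the pair-of-mismatches primitive so that its savings multiply (rather than collide) with the block-dictionary savings: the 2D structure must encode exactly the geometry produced by two simultaneous mismatches in the errata-tree recursion, achieve $\Oh(\log\log n+\occ)$ query time, and store only $\Oh(n\log^{\eps}n)$ points \emph{after} blocks have been quotiented out by the dictionary. A secondary difficulty is verifying that, for odd $k$, the leftover unpaired mismatch can be inserted at any level of the recursion (since different levels dominate for different pattern lengths) without spoiling the balance that yields the $\frac{2}{k+1}$ term. Once these two pieces are in place, the remaining bookkeeping — combining the short-pattern index, the block dictionary, and the pair-based recursion — should follow the template of \Cref{thm:1} with parameters reoptimized in terms of $\tau$.
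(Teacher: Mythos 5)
Your high-level shape—split by pattern length into a short-pattern branch and a long-pattern branch, with the split calibrated to exploit $\sigma=\Oh(1)$—matches the paper's strategy, but the internal mechanisms you propose for both branches have genuine gaps, and the arithmetic that is supposed to produce the exponents $\frac{2}{k+2}$ and $\frac{2}{k+1}$ is not actually performed.

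For the \emph{short-pattern} branch, you propose to directly tabulate answers for all patterns of length at most $\tau=\Theta(\log n)$ into a structure of size $\Oh(n\log^{\eps}n)$, arguing that there are only $\sigma^{\tau}=n^{\Oh(1)}$ distinct patterns and $\operatorname{poly}(\log n)$ mismatch variants per pattern. This does not give $\Oh(n\log^{\eps}n)$: even if you replace each ``answer'' with a constant-size handle (say, a range in a sparse suffix array), the table has $\sigma^{\tau}\cdot\binom{\tau}{k}(\sigma-1)^{k}=n^{\Theta(1)}$ entries, which is a polynomial with exponent that you do not control; for any fixed $\eps>0$ this will generally exceed $n\log^{\eps}n$. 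The paper instead builds a genuinely non-trivial short-pattern index (\Cref{thm:sk1}): it stores compact tries of $h$-modified suffixes indexed by a \emph{pivot} position $j$, queries them with $(k-h)$-modified patterns, reduces the pivot overhead from $\Theta(m)$ to $\Oh(\log m)$ via a dyadic base-interval scheme, and compresses the trie terminals by storing only $(\le h-1)$-modified suffixes explicitly and recovering the rest with Raman--Raman--Rao rank/select (\cref{thm:Raman}, \cref{lem:jensen}, \cref{lem:claim}). The resulting space is $\Oh(n\mu^{h}\log^{2}\mu/\log n)$ for patterns of length $\le\mu$, a space--time trade-off in $h$ that your proposal has no analogue of.

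For the \emph{long-pattern} branch, the paper does not batch mismatches into pairs at all. Its $\Oh(n+n\log^{k+\eps}n/\gamma)$ index (\Cref{thm:long}) comes from selecting $\Oh(n/\gamma)$ \emph{anchors} in $T$ via $\tau$-synchronizing sets and run-misperiods, building $(\le k)$-errata trees on the anchored suffixes and reversed prefixes, bounding terminal-label multiplicity in errata trees (\cref{lem:kerrata_copies}), and gluing prefix/suffix matches with 2D orthogonal range reporting; a separate interval-stabbing structure (\cref{lem:nearly_per}) handles nearly-periodic occurrences that anchors cannot cover. The 2D structure is thus used to join two halves of the pattern across a chosen anchor, not to collapse two levels of the errata recursion. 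Your ``pair of mismatches'' primitive is never specified concretely, and your own accounting (``two levels contribute $\log^{1+\eps}n$ instead of $\log^{2}n$'') would, for even $k$, give roughly $\log^{(k/2)(1+\eps)}n$, which does not match the claimed exponent $k-2+\eps+\frac{2}{k+2}$ (for $k=2$ your estimate is $\log^{1+\eps}n$ versus the target $\log^{0.5+\eps}n$). Likewise, the denominators $k+2$ versus $k+1$ do \emph{not} arise from parity of the paired mismatches: in the paper they come from balancing the short-pattern space $\approx n\mu^{h}/\log n$ (with $h=\lfloor k/2\rfloor$) against the long-pattern space $\approx n\log^{k+\eps}n/\mu$, which yields $\mu=\log^{\frac{2k+2}{k+2}}n$ for even $k$ and a slightly different exponent for odd $k$. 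Your proposal omits this balance entirely and replaces it with an unspecified ``block dictionary'' saving that plays no role in the paper's argument. Without concrete versions of the short-pattern index, the pair primitive, and the balancing calculation, the claimed bounds are not established.
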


In particular, our $2$-mismatch index is of size $\Oh(n\log^{0.5+\eps}n)$, that is, $\log^{0.5-\eps} n$ times smaller than the index of~\cite{DBLP:journals/algorithmica/ChanLSTW10} and $\log^{1.5-\eps} n$ times smaller than the $2$-errata tree. 
The savings become more prominent, growing arbitrarily close to $\log n$ and $\log^2 n$, respectively, as $k$ increases.
Not only does \cref{thm:1} provide the first general-case improvement in the two-decade-long history of errata trees, but the restricted case of $\sigma=\Oh(1)$ has also seen no progress since 2010.

\Cref{tab:main} summarizes our and previously known upper bounds for $k$-mismatch indexing. We assume the word RAM model with word size $\Omega(\log n)$.
Besides the contributions discussed above, earlier work also offers two results that extend the original $k$-errata tree trade-off in two directions.
The $k$-mismatch index by Tsur~\cite{DBLP:journals/jda/Tsur10} allows decreasing the query time at the cost of a larger space. In particular, for constant $k$ and $\sigma$, a query time of $\Oh(\log^k n / \log^k \log n + m + \occ)$ can be achieved using $\Oh(n \log^{k+\eps} n)$ space, for any constant $\eps > 0$.
Alternatively, $\Oh(\log \log n + m + \occ)$ query time can be obtained using $\Oh(n^{1+\eps})$ space. A different trade-off, proposed by Chan, Lam, Sung, Tam, and Wong~\cite{DBLP:journals/jda/ChanLSTW11}, allows decreasing the space at the cost of slower queries. Among others, they obtain an $\Oh(n)$-size index with query time $\Oh(\log^{k(k+1)}  n\cdot  \log \log n + m + \occ)$ and an $\Oh(n \log^{k-1} n)$-size index with $\Oh(\log^{2k} n \log \log n + m + \occ)$-time queries. 

\begin{table}[t!]
\renewcommand{\arraystretch}{1.3}
\centering
\begin{tabular}{c|c|c|p{2.75cm}}
\textbf{Space} & \textbf{Query time} (plus $\Oh(m+\occ)$) & \textbf{Reference} & \textbf{Comment} \\\hline
\textcolor{gray}{$\Oh(n \log^k n)$} & \multirow{4}{*}{$\Oh(\log^k n \log \log n)$} & \cite{DBLP:conf/stoc/ColeGL04} & improved here\\\cline{1-1}\cline{3-4}
\textcolor{gray}{$\Oh(n \log^{k-1} n)$} & & \cite{DBLP:journals/algorithmica/ChanLSTW10} & \textcolor{gray}{$\sigma=\Oh(1)$,}\newline improved here\\\cline{1-1}\cline{3-4}
$\Oh(n \log^{k-1} n)$ & & \textbf{Thm.~\ref{thm:1}} & \\\cline{1-1}\cline{3-4}
$\Oh(n \log^{k-2+\eps+\frac{2}{k+2-(k \bmod 2)}} n)\vphantom{n^{n^{n^{n^n}}}}$ & & \textbf{Thm.~\ref{thm:2}} & $\sigma=\Oh(1),$\newline constant $\eps>0$\\\hline
$\Oh(n (\alpha \log \alpha \log n)^k)$ & $\Oh((\log_{\alpha} n)^k\log \log n)$ & \cite{DBLP:journals/jda/Tsur10} & $2 \le \alpha \le n/2$\\\hline
$\Oh(n \log^{k-h+1} n)$ & $\Oh(\log^{\max(kh,k+h)} n \log \log n)$ & \cite{DBLP:journals/jda/ChanLSTW11} & $1 \le h \le k+1$
\end{tabular}
\caption{The $k$-mismatch indexes for arbitrary pattern lengths $m$, for $k=\Oh(1)$.}\label{tab:main}
\end{table}

\begin{table}[b!]
\renewcommand{\arraystretch}{1.3}
\centering
\begin{tabular}{c|c|c|p{3.9cm}}
\textbf{Space} & \textbf{Query time} (plus $\Oh(m+\occ)$) & \textbf{Reference} & \textbf{Comment} \\\hline
\multirow{5}{*}{$\Oh(n)$} & \textcolor{gray}{$\Oh(m^{k+2}\sigma^k \log(m+\sigma))$} & \cite{DBLP:conf/cpm/Ukkonen93} & improved by \cite{DBLP:conf/cpm/Cobbs95} \\\cline{2-4}
& $\Oh(m^{k+2}\sigma^k)$ & \cite{DBLP:conf/cpm/Cobbs95} &\\\cline{2-4}
& \textcolor{gray}{$\Oh((\sigma m)^k\log n)$} & \cite{DBLP:conf/cpm/HuynhHLS04} & improved by \cite{DBLP:journals/algorithmica/ChanLSTW10,DBLP:journals/algorithmica/LamSW08}\\\cline{2-4}
& $\Oh((\sigma m)^k\log \log n)$ & \cite{DBLP:journals/algorithmica/LamSW08} & \\\cline{2-4}
& $\Oh((\sigma m)^{k-1}\log n \log \log n)$ & \cite{DBLP:journals/algorithmica/ChanLSTW10} &\\\hline
$\Oh(n\mu^h\log^2 \mu/\log n)$ & $\Oh(m^{k-h}\log m\log \log n)$ & \textbf{Thm.~\ref{thm:sk1}} & $m \le \mu = \Omega(\log n)$,\newline $h \in [1 \dd k)$, $\sigma=\Oh(1)$\\\hline
$\Oh(n \log_\sigma^k n)$ & $\Oh(m^k)$ & \cite{DBLP:journals/jcb/Al-Okaily15} & \\
\end{tabular}
\caption{The $k$-mismatch indexes for specific pattern lengths $m$, for $k=\Oh(1)$.}\label{tab:main_small}
\end{table}

One of the technical contributions behind \cref{thm:2} is a new $k$-mismatch index optimized for handling short patterns, of lengths polylogarithmic in $n$.
Our and previous solutions relevant for this setting are presented in \cref{tab:main_small}.
They are all characterized by the product of the index size and query time being $\Omega(nm^{k-1})$.
In the query complexity of the indexes from \cite{DBLP:conf/cpm/Cobbs95,DBLP:conf/cpm/Ukkonen93}, one can trade the $m^{k+1}\sigma^k$ factor by an $n$ factor (which is not compelling for $k=\Oh(1)$, since offline $k$-error pattern matching works in $\Oh(nk)=\Oh(n)$ time for a string over an integer alphabet~\cite{DBLP:journals/jal/LandauV89}).

Beyond the worst-case $k$-mismatch indexes, data structures with good average-case complexity for general $k$ \cite{DBLP:journals/jcb/Al-Okaily15,DBLP:journals/pvldb/ChenN21,DBLP:conf/spire/CoelhoO06,DBLP:journals/tcs/EpifanioGMRS07,DBLP:conf/ciac/GabrieleMRS03,DBLP:journals/algorithmica/Maass06,DBLP:journals/jda/MaassN07} as well as for $k=1$ \cite{DBLP:journals/ipl/MaassN05} were proposed.
Solutions that are efficient for particular data have also been studied~\cite{Chen2018,DBLP:journals/tcs/KucherovST16,DBLP:conf/bibm/LamLTWWY09,DBLP:journals/bioinformatics/LiD09}. Approximate indexing in external memory was considered in~\cite{DBLP:journals/tcs/HonLSTV11}.
Indexes for a small $k$ (e.g., $k \le 7$) were evaluated in practice~\cite{DBLP:journals/pvldb/ChenN21,DBLP:conf/bibm/LamLTWWY09}.

On the lower bound side, Cohen-Addad, Feuilloley, and Starikovskaya~\cite{DBLP:conf/soda/Cohen-AddadFS19} showed that if a $k$-mismatch index answers queries in $\Oh(\log^k n / (2k)^k + m + \occ)$ time in the pointer machine model, then it requires $\Oh(C^k n)$ space for a constant $C>0$. This result holds for every even $k$ that satisfies $8\sqrt{\log n}/\sqrt{3} \le k = o(\log n)$. Furthermore, they showed that, under the Strong Exponential Time Hypothesis, no $k$-mismatch index for $k=\Theta(\log n)$ can be constructed in polynomial time and decide in $\Oh(n^{1-\delta})$ time (for some constant $\delta>0$) whether a given pattern of length $m=\Theta(\log n)$ has any $k$-mismatch occurrence. These results are a step forward in our understanding of the complexity of approximate indexing, but they are still quite far from the complexity of $k$-errata trees. In particular, no non-trivial lower bounds can be deduced for the basic case of $k=\Oh(1)$.

\subsection{Related indexing problems}
A related problem is indexing a text to support queries for patterns with wildcards (i.e., don't care symbols). A significant amount of work has been devoted to this problem~\cite{DBLP:journals/mst/BilleGVV14,DBLP:conf/stoc/ColeGL04,DBLP:conf/isaac/LamSTY07,DBLP:journals/tcs/LewensteinMRT14,DBLP:conf/stacs/LewensteinNV14,DBLP:conf/sofsem/RahmanI07}.
Lower bounds of similar flavor as in \cite{DBLP:conf/soda/Cohen-AddadFS19} for indexing for patterns with wildcards were presented earlier by Afshani and Nielsen~\cite{DBLP:conf/icalp/AfshaniN16}.
The algorithms presented in~\cite{DBLP:journals/mst/BilleGVV14,DBLP:conf/stoc/ColeGL04,DBLP:journals/tcs/LewensteinMRT14} all employ variants of errata trees.
The problem of indexing for patterns with wildcards appears to be easier, in terms of complexity, than approximate text indexing.
In particular, the index of Cole, Gottlieb, and Lewenstein~\cite{DBLP:conf/stoc/ColeGL04} occupies $\Oh(n \log^{k} n)$ space and answers queries in $\Oh(2^{k}\log\log n + m + \occ)$ time, that is, with a $2^{k}$ rather than a $\log^{k} n$ factor. 
For constant $k$ and $\sigma$, a trivial linear-space index, already noted in~\cite{DBLP:conf/stoc/ColeGL04}, tests all possible substitutions of wildcards in the pattern.
This yields $\Oh(n)$ space (the suffix tree of $T$) and $\Oh(m\sigma^{k}+\occ)=\Oh(m+\occ)$ query time.
As a warm-up for the proof of \cref{thm:1}, we present the first general improvement over the complexities of~\cite{DBLP:conf/stoc/ColeGL04} in the trade-off for $k$-errata trees when the pattern contains $k$ wildcards, as stated in the next theorem.

\begin{restatable}{theorem}{wildP}\label{thm:kwildP}
For a string $T$ of length $n$, there exists an index using $\Oh(n\log^{k-1} n)$ space that answers pattern matching queries for patterns of length $m$ with up to $k$ wildcards in $\Oh(2^k \log\log n + m + \occ)$ time.
\end{restatable}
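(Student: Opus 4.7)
The plan is to follow the $k$-errata tree construction of Cole, Gottlieb, and Lewenstein for wildcards, terminating the recursion after $k-1$ of the $k$ heavy-path-decomposition levels and handling the last wildcard via a separate linear-space one-wildcard index. This saves one $\log n$ factor in space while preserving the original query time.

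Recall that the Cole--Gottlieb--Lewenstein wildcard-errata tree is built by iterating a \emph{wildcard augmentation} $k$ times: each iteration takes the current collection of augmented tries (initially, the suffix tree of $T$), applies a heavy path decomposition to each trie, and for every heavy path creates $\Oh(\log n)$ auxiliary tries that allow one more wildcard to be placed along that path. Each iteration inflates the total size by an $\Oh(\log n)$ factor, giving $\Oh(n\log^k n)$ after $k$ iterations. A query with $k$ wildcards descends the structure while spelling the pattern; at each wildcard the construction needs only $\Oh(1)$ branching (one continuation along the current heavy path and one aggregated jump into the light off-path subtrees, via weighted-ancestor and predecessor lookups in $\Oh(\log\log n)$ time), so $\Oh(2^k)$ loci are reached in $\Oh(2^k\log\log n+m+\occ)$ total time.

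My modification stops the recursion after $k-1$ iterations, producing a collection $\mathcal{S}_{k-1}$ of augmented tries of total size $\Oh(n\log^{k-1}n)$. This already supports up to $k-1$ wildcards. For the $k$-th wildcard, instead of paying another $\log n$ factor by iterating once more, I would attach to every augmented trie $\mathcal{T}\in \mathcal{S}_{k-1}$ a \emph{linear-size one-wildcard index} of size $\Oh(|\mathcal{T}|)$ that, given a pattern with at most one wildcard, reports all matches within $\mathcal{T}$ in $\Oh(\log\log n+m+\occ)$ time. Summed over all constituents, these indexes use $\Oh(n\log^{k-1}n)$ extra space, matching the budget. A query with $j\le k$ wildcards is handled by descending $\mathcal{S}_{k-1}$ and branching at each of the first $\min(j,k-1)$ wildcards to reach $\Oh(2^{k-1})$ loci; if $j=k$, each locus invokes its attached one-wildcard index, reporting local occurrences in $\Oh(\log\log n+\occ_{\mathrm{local}})$ time, for a total of $\Oh(2^k\log\log n+m+\occ)$.

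The main obstacle is supplying the linear-space one-wildcard index with $\Oh(\log\log n+m+\occ)$ query time; this is precisely the $k=1$ special case of the theorem and goes beyond the original Cole--Gottlieb--Lewenstein $1$-wildcard index, which uses $\Oh(n\log n)$ space. For a plain suffix tree the problem reduces to a two-dimensional orthogonal range reporting query on the ``suffix-tree preorder of $T$'' $\times$ ``suffix-tree preorder of $T^R$'' point set; since generic 2D reporting in $\Oh(n)$ space is too slow, I would exploit the structure of both axes (each organized by a heavy path decomposition) so that the query reduces to $\Oh(1)$ predecessor queries along heavy paths, which fit in linear space. Extending this construction from the plain suffix tree to the augmented tries of $\mathcal{S}_{k-1}$ is the secondary difficulty; here the key observation is that every leaf of $\mathcal{T}\in \mathcal{S}_{k-1}$ corresponds to a suffix of $T$ with a fixed set of marked wildcard positions, so a one-wildcard query on $\mathcal{T}$ reduces to a one-wildcard query on the suffix tree of $T$ restricted to the leaves of $\mathcal{T}$, implemented via a colored range reporting structure whose total size across all $\mathcal{T}\in \mathcal{S}_{k-1}$ is $\Oh(n\log^{k-1}n)$.
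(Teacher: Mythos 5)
Your high-level strategy (build only the first $k-1$ levels, then handle the last wildcard more cheaply) matches the paper, but your route to the last level has a genuine gap that makes the proposal circular.

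You reduce the problem to a ``linear-size one-wildcard index'' of size $\Oh(|\T|)$ with query time $\Oh(\log\log n+m+\occ)$ attached to each level-$(k-1)$ trie $\T$. As you yourself note, this is precisely the $k=1$ case of the theorem you are trying to prove; no known black box supplies it, and your sketch for it (``exploit the structure of both axes\ldots\ so the query reduces to $\Oh(1)$ predecessor queries along heavy paths'') is not developed to the point where one can check it. Generic linear-space 2D orthogonal range reporting does not give $\Oh(\log\log n)$ per reported point, so the reduction to ``suffix tree preorder $\times$ reverse suffix tree preorder'' does not close the argument by itself; and even if a linear-space $1$-wildcard structure existed for a plain suffix tree, you would still need to argue that it extends to the augmented tries at level $k-1$ and that the \emph{total} size across all of $\mathcal{S}_{k-1}$ stays $\Oh(n\log^{k-1}n)$ --- this requires the heavy-path telescoping bound (\cref{lem:strank} in the paper) that your proposal never establishes. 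The ``colored range reporting whose total size across all $\T$ is $\Oh(n\log^{k-1}n)$'' line is asserted, not argued.

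The paper avoids the circularity altogether. Rather than querying a $1$-wildcard index, it stores --- for every node $v$ of a level-$(k-1)$ trie --- the (implicit) level-$k$ group tree as a sorted list of trimmed suffixes $S'_{v,1},\ldots,S'_{v,b_v}$. The list is never materialized: it is represented by the bit-compressed arrays $\strank'_v$, $\treepointer'_v$, $\modifiedrank'_{v,u_a}$, together with the sparse suffix array of $\T$. \cref{lem:strank} and \cref{lem:modifiedrank} show these arrays fit in $\Oh(n\log^k n)$ \emph{bits}, i.e.\ $\Oh(n\log^{k-1}n)$ words; \cref{lem:label_wild} gives $\Oh(1)$-time access to any $S'_{v,i}$; a $\lfloor\log n\rfloor$-sampled compact trie $\T''_v$ plus \cref{lem:utility} then locates the matching range of trimmed suffixes in $\Oh(\log\log n)$ time. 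This is a different mechanism from your proposal: there is no recursive ``one-wildcard index,'' only a compact encoding of what the level-$k$ group trees would have stored. To repair your argument you would essentially have to rediscover these compact-storage lemmas; the reduction to a hypothetical linear-space $1$-wildcard black box does not stand on its own.
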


A variant of the problem in which the text contains wildcards was also considered~\cite{DBLP:conf/stoc/ColeGL04,DBLP:journals/jda/HonKSTV13,DBLP:conf/isaac/LamSTY07,DBLP:conf/spire/TamWLY09,DBLP:journals/tcs/Thachuk13}. Here, for $\sigma,k=\Oh(1)$, one can fill in all the wildcards in the text in every possible way and then construct a suffix tree for every resulting string text. This leads to $\Oh(n\sigma^k)=\Oh(n)$ space and $\Oh(m+\sigma^k \occ)=\Oh(m+\occ)$-time queries.

Following~\cite{DBLP:conf/stoc/ColeGL04}, one can also consider \emph{$k$-edit indexes}—indexes that report substrings of the text whose edit (Levenshtein) distance to the pattern is at most~$k$.
In this work, we focus on the Hamming distance, which is technically simpler: for example, a $k$-mismatch occurrence of the pattern in the text always has the same length as the pattern, unlike in the case of $k$-edit occurrences.
In particular, all known $k$-edit indexes can be adapted (and in fact simplified) to operate as $k$-mismatch indexes.

\subsection{Technical overview}\label{sec:techov}
\subsubsection*{Indexing for short patterns with $\sigma,k=\Oh(1)$} As a stepping stone to \cref{thm:2}, we present a $k$-mismatch index that supports patterns of length $m \le \mu$, where $\mu = \Omega(\log n)$ is a parameter specified at construction time. Existing linear-space $k$-mismatch indexes (see \cref{tab:main_small}) follow a shared blueprint: exhaustively apply all up to $k$ modifications to the pattern $P$ in every possible way and then perform exact pattern matching for each variant. This approach yields varying query times depending on how efficiently the exact search is implemented.

Our index, in contrast, balances the cost of these modifications between index size and query time.
Suppose we are searching for occurrences of a pattern $P[0\dd m)$ with exactly $k$ mismatches.
For a trade-off parameter $h \in [1\dd k)$, we consider each possible \emph{pivot} $j \in [0\dd m)$, aiming to find occurrences in which $P[0\dd j)$ has $h$ mismatches and $P[j\dd m)$ has $k-h$ mismatches.
To support such queries, we preprocess the text~$T$ by modifying every suffix $T[i\dd n)$ through all possible $h$ substitutions within its prefix $T[i\dd i+j)$, and then we build a compact trie containing all resulting modified suffixes.
At query time, we modify~$P$ in all possible ways by applying $k-h$ substitutions within $P[j\dd m)$, and we search for the resulting patterns in the trie.
Because this scheme must consider multiple pivots~$j$, care is needed to avoid reporting duplicates, particularly when also handling $(<k)$-mismatch occurrences of~$P$, while ensuring that no additional $\Omega(m\cdot\occ)$ query-time overhead is incurred.

We refine the basic scheme through several improvements:
\begin{itemize}
    \item Naively, locating each modified pattern in the trie takes $\Theta(m)$ time. We reduce this to $\Oh(\log\log n)$ using a generalization of rooted LCP queries~\cite{DBLP:conf/stoc/ColeGL04} to tries of modified suffixes (see \cref{lem:modifiedLCP} in \cref{sec:unrooted}).
    \item  Another $\Theta(m)$ factor arises from the need to examine all possible pivots $j \in [0 \dd m)$, as a given modified pattern may be compatible with up to $\Omega(m)$ pivots. We limit the number of such pivots to $\Oh(\log m)$ by favoring those divisible by larger powers of two.
    \item Finally, we save a nearly-$\log n$ factor in space by explicitly storing only the modified suffixes with up to $h-1$ changes. Suffixes with exactly $h$ changes can still be retrieved indirectly via the rank/select data structures of Raman, Raman, and Rao~\cite{DBLP:journals/talg/RamanRS07}.
\end{itemize}
Together, these refinements yield the following frugal data structure described in \cref{sec:short}.

\begin{theorem}\label{thm:sk1}
Let $T$ be a string of length $n$ over an integer alphabet of size $\sigma$, and let $\mu = \Omega(\log n)$ with $k,\sigma = \Oh(1)$. For every $h \in [1 \dd k)$, there exists a $k$-mismatch index using $\Oh(n\mu^h \log^2 \mu/\log n)$ space that answers queries for patterns of length $m \le \mu$ in $\Oh(m^{k-h} \log m \log\log n + \occ)$ time.
\end{theorem}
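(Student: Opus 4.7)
The plan is to instantiate the three-step blueprint of \cref{sec:techov} and combine it with the auxiliary ingredients it alludes to. Fix the trade-off parameter $h \in [1 \dd k)$ and the length bound $\mu$. The first step is to define a small \emph{pivot set} $J \subseteq [0 \dd \mu)$ that works simultaneously for all query lengths $m \le \mu$. I would build $J$ from positions of the form $2^b \cdot c$ with $c$ odd, biased toward large $b$, so that $|J \cap [0 \dd m)| = \Oh(\log m)$ and, crucially, every multiset $S$ of at most $k$ positions in $[0 \dd m)$ admits a unique \emph{canonical pivot} $j(S) \in J \cap [0 \dd m)$ with $|S \cap [0 \dd j(S))| = \min(|S|, h)$. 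This canonicality both ensures that some pivot witnesses each $(\le k)$-mismatch occurrence and lets duplicate candidates be discarded in $\Oh(1)$ time per report.

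For the data structure, for each text position $i$, each pivot $j \in J$, and each choice of at most $h-1$ non-trivial substitutions within $T[i \dd i+j)$, I would insert the resulting modified suffix into a compact trie. Over a constant-size alphabet this yields $\Oh(n\mu^{h-1} \log \mu)$ stored suffixes, comfortably within the target budget of $\Oh(n\mu^h \log^2 \mu / \log n)$ words. To recover the implicit modified suffixes with exactly $h$ substitutions, I would overlay each relevant locus with a Raman--Raman--Rao rank/select bitmask~\cite{DBLP:journals/talg/RamanRS07} over the $\Oh(\mu)$ candidate sites of the extra substitution, so that navigating to any implicit $h$-substitution suffix costs $\Oh(\log\log n)$ time. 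A careful bit-level accounting of these bitmasks, at $\Oh(\log \mu)$ bits per candidate site, is what produces the extra $\mu \log \mu/\log n$ factor in the space bound.

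The query algorithm iterates over the $\Oh(\log m)$ pivots $j \in J \cap [0 \dd m)$ and, for each pivot, over the $\Oh(m^{k-h})$ strings obtained from $P[j \dd m)$ by up to $k-h$ substitutions. For every such right-hand modification I would use the generalized rooted-LCP primitive of \cref{lem:modifiedLCP} to locate the concatenation of $P[0 \dd j)$ with this modification inside the compact trie in $\Oh(\log\log n)$ time, bypassing the $\Theta(m)$ cost of a naive descent. Summed, this gives the claimed $\Oh(m^{k-h} \log m \log\log n)$ search time. At each locus I enumerate its subtree, checking in $\Oh(1)$ whether $j$ equals the canonical pivot of the mismatch multiset of the candidate, and reporting only the survivors; this keeps the output term additive $\Oh(\occ)$.

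The main obstacle will be making the canonicality check genuinely $\Oh(1)$ despite the fact that only $h-1$ of the $h$ left-side substitutions are explicitly encoded in each trie locus. I anticipate achieving this by choosing $J$ so that $j(S)$ depends only on a few low-order bits of the minimum and maximum elements of $S$, and by caching at each locus the set of explicitly stored sites together with a pointer into the rank/select bitmask that reveals the $h$-th site in $\Oh(1)$ time. A secondary subtlety is extending the scheme to $(<k)$-mismatch occurrences: since there are only $\Oh(1)$ relevant values of the true mismatch count $s \le k$, I would either run the construction for each pair $(s,\min(s,h))$ in parallel, or, more economically, formulate the canonical-pivot lemma uniformly for all $s \le k$ and absorb the resulting case analysis into the constants hidden in the stated bounds.
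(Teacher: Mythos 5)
Your proposal captures several of the right ingredients—tries of modified suffixes that split the mismatch budget between a text side of $h$ and a pattern side of $k-h$, a power-of-two bias to reduce the number of pivots to $\Oh(\log m)$, rank/select compression and sampling—but the central device you introduce, a \emph{fixed} pivot set $J\subseteq[0\dd\mu)$ with $|J\cap[0\dd m)|=\Oh(\log m)$ together with a \emph{canonical pivot} $j(S)\in J$ satisfying $|S\cap[0\dd j(S))|=\min(|S|,h)$ for \emph{every} $S$ of at most $k$ positions, cannot exist. When $|S|\ge h$ the required pivot must lie strictly between the $h$-th and $(h{+}1)$-st smallest elements of $S$, and that gap can be a single integer: taking $S=\{a,a+1,\dots,a+k-1\}$ for every admissible $a$ forces $a+h\in J$, so $|J\cap[0\dd m)|=\Omega(m)$, contradicting the sparsity you require. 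The paper avoids this altogether. It keys each text-side trie $\T_{\kappa,j}$ by the position $j$ of the \emph{rightmost} text-side substitution, picks the query-time pivot $j$ as the \emph{first} position where the modified pattern $P'$ disagrees with $P$, and then batches the $\Oh(j)$ lookups into $\T_{h,i}$ for $i<j$ into $\Oh(\log j)$ lookups over dyadic base-interval tries $\T'_{h,t}$ for $t$ in the ``$f$-sequence'' of $j$, where $[t-f(t)\dd t)$ is aligned and $f(t)$ is the largest power of two dividing $t$. Because the rightmost-modification position is part of each trie's key, every $k$-mismatch occurrence is discovered for exactly one $(\kappa,j)$ and one $P'$, so there is no report-time filtering at all. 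Your plan to ``enumerate the subtree, check canonicality, report only the survivors'' would not obviously yield an additive $\Oh(\occ)$ term, since enumerated but rejected candidates could dominate.

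The compression step is also more delicate than your sketch suggests. The paper's \cref{lem:claim} stores the $\le(h{-}1)$-modified suffixes as a sorted list of size $\Oh(n\mu^{h-1})$, one triad $(p,c_{\mathrm{old}},c_{\mathrm{new}})$ per terminal recording the rightmost substitution, and two families of increasing integer sequences compressed via a Jensen-averaged extension of Raman--Raman--Rao rank/select (\cref{lem:jensen}); together these give $\Oh(1)$-time random access to any terminal of the \emph{virtual} trie $\T'_{\kappa,j}$ within $\Oh(n\mu^h\log^2\mu/\log n)$ space. That $\Oh(1)$-time access is then combined in \cref{lem:utility} with a sampled trie $\T''_{\kappa,j}$ containing every $\floor{\log n}$-th terminal and a $\Oh(\log\log n)$-time binary search over an $\Oh(\log n)$-length window to compute the locus. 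Your ``bitmask over the $\Oh(\mu)$ candidate sites of the extra substitution'' and ``$\Oh(\log\log n)$-time navigation to an implicit $h$-substitution suffix'' therefore place the $\Oh(\log\log n)$ in the wrong spot and do not transparently reproduce the stated space accounting: in the paper, terminal retrieval is $\Oh(1)$, and the $\Oh(\log\log n)$ cost sits entirely in the locus computation.
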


Compared to existing data structures for short patterns, ours can be viewed as trading space for improved query time.
In particular, for $k,\sigma = \Oh(1)$ and patterns of length up to $\mu = \Oh(2^{\log^{1/3} n}) = \Oh(n^{1/\log^{2/3} n})$, that is, when $\log^3 \mu = \Oh(\log n)$, the product of space and query time is $\Oh(n\mu^k\log^3 \mu \log \log n/\log n)=\Oh(n\mu^k \log \log n)$ matching the corresponding bound of the data structure in~\cite{DBLP:journals/algorithmica/LamSW08}.
For patterns of length $\mu = \log^{\Oh(1)} n$, this product in our data structure is actually smaller by a factor of $(\log\log n)^3 / \log n$.

\subsubsection*{Indexing for long patterns}
At a high level, our strategy for patterns of length at least $m$ is to designate $\Oh(n/m \cdot \operatorname{poly}(k))$ positions in the text $T$ as \emph{anchors} and, at query time, to select $\Oh(\operatorname{poly}(k))$ anchors in $P$ such that, for every $k$-mismatch occurrence of $P$ in $T$, at least one anchor in $P$ coincides with an anchor in $T$.
In general, it is impossible to cover all $k$-mismatch occurrences in this way, e.g., when $P = \mathrm{a}^m$ and $T = \mathrm{a}^n$.
Nonetheless, our anchor selection procedure, which relies on local consistency techniques (string synchronizing sets~\cite{DBLP:conf/stoc/KempaK19}) and the structure of periodic fragments in strings (including Lyndon roots of runs~\cite{DBLP:journals/siamcomp/BannaiIINTT17}), misses only highly structured $k$-mismatch occurrences. These can be listed efficiently using a separate subroutine based on interval stabbing~\cite{DBLP:conf/focs/AlstrupBR00}.

To find typical $k$-mismatch occurrences, we handle each anchor $j$ in $P$ separately. We first identify all anchors $a$ in $T$ for which $T[a\dd n)$ starts with a $k$-mismatch occurrence of $P[j\dd m)$, and all anchors $a$ in $T$ for which $T[0\dd a)$ ends with a $k$-mismatch occurrence of $P[0\dd j)$. These two sets are then intersected, taking into account that $k$ is the total mismatch budget across $P[0\dd j)$ and $P[j\dd m)$.
For these steps, we employ two $k$-errata trees~\cite{DBLP:conf/stoc/ColeGL04} built on a subset of suffixes and reversed prefixes of $T$, together with an orthogonal range reporting data structure~\cite{DBLP:conf/focs/AlstrupBR00,DBLP:conf/esa/BuchsbaumGW00} constructed on top of them.

This high-level blueprint can be seen as a generalization of the approaches in exact indexing~\cite{DBLP:conf/soda/KempaK23} (see \cite[Proposition 6.27]{DBLP:journals/corr/abs-2106-12725}) and~\cite[Theorem 4]{DBLP:journals/vldb/AyadLP25} to the $k$-mismatch setting. Even prior to these works, similar techniques were used for computing the longest common factor with $k$ mismatches~\cite{CCIKPRRW18,DBLP:conf/esa/Charalampopoulos21} and for circular pattern matching with $k$ mismatches~\cite{DBLP:journals/jcss/Charalampopoulos21}.
Exact~\cite{CEKNP21,KNO24} and $k$-mismatch~\cite{GHPT25} indexes for highly repetitive texts also share many aspects of this strategy. The key difference is that they exploit compressibility, rather than a lower bound on $|P|$, to reduce the number of anchors in $T$, and they require additional tools to report so-called secondary occurrences.

We now provide a more detailed overview of our index for long patterns.
Suppose the pattern $P$ has length $m \ge (k+1)\gamma$ for some positive integer~$\gamma$. A prefix of $P$ of length $(k+1)\gamma$ can then be split into $k+1$ substrings $P_1, \ldots, P_{k+1}$ of length $\gamma$ each, so that every $k$-mismatch occurrence of $P$ in $T$ contains an \emph{exact} occurrence of at least one of these substrings, say $P_i$, in $T$.
We compute a synchronizing set in $T$ (cf.~\cite{DBLP:conf/stoc/KempaK19}). In the simpler case where $T$ does not contain substrings that are high-exponent string powers, a $\tau$-synchronizing set for an integer $\tau > 0$ is a subset of $\Oh(n/\tau)$ positions in $T$, selected consistently based on the following $2\tau$ characters. The selection is dense: there is a synchronizing position among every $\tau$ consecutive positions in $T$ that are not within the final $2\tau-1$ positions. A length-$2\tau$ substring of $T$ starting at a synchronizing position is called a \emph{$\tau$-synchronizing fragment}.
Thus, for $\tau = \floor{(\gamma+1)/3}$, the occurrence of $P_i$ in $T$ is guaranteed to contain a $\tau$-synchronizing fragment.

A $\tau$-synchronizing set in $T$ does not necessarily extend to a $\tau$-synchronizing set in $P$. However, since $P_i$ matches a corresponding fragment of $T$ exactly, it must contain a fragment that matches a $\tau$-synchronizing fragment of $T$; we are interested in the leftmost such fragment $P[j\dd j+2\tau)$. We use two $(\le k)$-errata trees: one built on suffixes of $T$ starting at synchronizing positions, and the other on the complementary reversed prefixes. Here, we exploit the fact that the original $k$-errata tree, constructed for $x = \Oh(n/\gamma)$ suffixes of a text, uses only $\Oh(n + x \log^k x)$ space~\cite{DBLP:conf/stoc/ColeGL04}.
For each $P_i$, we query $P[j \dd m)$ in the errata tree of suffixes and $(P[0 \dd j))^R$ in the errata tree of reversed prefixes, considering all possible distributions of the $k$ mismatches between $P[j \dd m)$ and $P[0 \dd j)$. The results of these two queries are merged into $k$-mismatch occurrences of the entire $P$ in $T$ using a 2D orthogonal range query over points generated in a manner similar to a tree cross product~\cite{DBLP:conf/esa/BuchsbaumGW00}.
To keep the space of the range query data structure small, we prove a new but intuitive bound on the number of occurrences of a given terminal label in an errata tree. Overall, the space complexity is $\Oh(n \log^{k+\eps}n \, /\, \gamma)$ for any constant $\eps>0$, where the $\log^\eps n$ factor arises from the 2D orthogonal range reporting structure of Alstrup, Brodal, and Rauhe~\cite{DBLP:conf/focs/AlstrupBR00}. The query time matches that of a standard errata tree.

If $T$ contains highly periodic substrings, $P_i$ may not include a $\tau$-synchronizing fragment. In this case, either the entire pattern $P$ is nearly periodic, or there are $\Theta(k)$ \emph{misperiods} in $P$ with respect to the period of $P_i$~\cite{DBLP:journals/jcss/Charalampopoulos21}, and at least one of these misperiods must align with a corresponding misperiod with respect to a run~\cite{DBLP:conf/focs/KolpakovK99}~of $T$.
In the second case, we can apply the same construction as before, but using misperiods in $T$ and $P$ instead of synchronizing positions. In the nearly periodic case, we design a tailored index based on runs and their Lyndon roots~\cite{DBLP:journals/siamcomp/BannaiIINTT17}. All nearly periodic substrings of $T$ are grouped according to the run that generates them, the number of misperiods, and the position where the Lyndon root of the run starts, modulo the period of the run. Subsequently, all runs are grouped by their Lyndon root. For a nearly periodic pattern, we locate all nearly periodic substrings of $T$ at Hamming distance at most $k$ using an interval stabbing data structure.
In \cref{sec:long}, we formalize this approach in the following theorem.

\begin{restatable}{theorem}{llong}\label{thm:long}
For a text $T$ of length $n$, positive integers $k=\Oh(1)$, $\gamma \in [2 \dd n]$, and constant $\eps>0$, there exists an $\Oh(n+ n \log^{k+\eps} n / \gamma)$-space $k$-mismatch index that answers $k$-mismatch queries for patterns of length $m \ge (k+1)\gamma$ in $\Oh(\log^k n \log \log n + m + \occ)$ time.
\end{restatable}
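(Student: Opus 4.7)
The plan is to realize the synchronizing-set blueprint sketched in \cref{sec:techov}. Setting $\tau = \floor{(\gamma+1)/3}$, I would split the length-$(k+1)\gamma$ prefix of $P$ into $k+1$ blocks $P_1,\ldots,P_{k+1}$ of length $\gamma$ each; by the pigeonhole principle, every $k$-mismatch occurrence of $P$ in $T$ contains an exact occurrence of some $P_i$, so it suffices to handle each $i\in[1\dd k+1]$ separately. During preprocessing I would compute a $\tau$-synchronizing set $\S\subseteq[0\dd n)$ of size $\Oh(n/\tau)=\Oh(n/\gamma)$ in the sense of~\cite{DBLP:conf/stoc/KempaK19}. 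Whenever the occurrence of $P_i$ in $T$ is not entirely periodic, it contains a $\tau$-synchronizing fragment, and by the local consistency of $\S$, that fragment is forced to match a uniquely determined fragment $P[j\dd j+2\tau)$ of~$P$, fixing a pivot~$j$ and splitting the pattern into $P[0\dd j)$ and $P[j\dd m)$.

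For the non-periodic bulk I would build two $(\le k)$-errata trees of Cole, Gottlieb, and Lewenstein~\cite{DBLP:conf/stoc/ColeGL04}: a forward tree over $\{T[s\dd n):s\in\S\}$ and a reverse tree over $\{(T[0\dd s))^R:s\in\S\}$, each occupying $\Oh(n+(n/\gamma)\log^k(n/\gamma))$ space. At query time, for each $(i,j)$ and each split $k=k_1+k_2$ of the mismatch budget, I would query $(P[0\dd j))^R$ in the reverse tree with $k_1$ mismatches and $P[j\dd m)$ in the forward tree with $k_2$ mismatches. Each errata-tree query returns a disjoint union of subtree intervals, and joining the two sides into actual matches at a common $s\in\S$ is a 2D orthogonal range-reporting problem, which I would resolve via~\cite{DBLP:conf/focs/AlstrupBR00}. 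A key auxiliary ingredient, isolated as a stand-alone lemma, is an intuitive upper bound on the number of terminal labels that a single $s\in\S$ contributes to an errata tree; it ensures that the point set fed to the range-reporting structure has size $\Oh((n/\gamma)\log^k n)$, so the total space becomes $\Oh(n+(n/\gamma)\log^{k+\eps} n)$ while the $\Oh(\log\log n+\occ)$ reporting time per errata-tree answer is preserved.

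The periodic case is covered by a dedicated subindex. When the occurrence of $P_i$ sits in a high-exponent periodic region, structural results on runs~\cite{DBLP:conf/focs/KolpakovK99} and misperiods~\cite{DBLP:journals/jcss/Charalampopoulos21} enforce a dichotomy: either $P$ itself is nearly periodic with some short period $p$, in which case I would group nearly periodic substrings of $T$ by their generating run, their number of misperiods, and the residue modulo $p$ of the Lyndon-root position~\cite{DBLP:journals/siamcomp/BannaiIINTT17}, and answer queries with an interval-stabbing structure~\cite{DBLP:conf/focs/AlstrupBR00}; or $P$ has $\Theta(k)$ misperiods with respect to $p$, at least one of which must align with a misperiod of the host run of $T$, which lets me reapply the errata-tree plus range-reporting construction with run-misperiods as anchors. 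Since each run contributes only $\Oh(k)$ misperiods, this second anchor set still has size $\Oh(n/\gamma)$ and fits within the space budget.

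The main obstacle I anticipate is deduplication together with tight time accounting. A single $k$-mismatch occurrence of $P$ may be reachable through several values of $i$, several pivots $j$, or from both the synchronizing and the periodic subindex, and we cannot afford more than $\Oh(1)$ reports per occurrence if the $\Oh(\log^k n\log\log n+m+\occ)$ bound is to survive iteration over $i\in[1\dd k+1]$ and the $\Oh(k^2)$ mismatch splits. My plan is to canonicalize each occurrence by the smallest compatible $i$ and, within that $i$, by the leftmost anchor it contains, and to enforce this canonicalization at query time via predecessor queries on the anchor set; a parallel argument will show that the periodic subindex reports only occurrences invisible to the synchronizing subindex. Threading this canonicalization uniformly through all query branches, while keeping the amortized per-answer cost at $\Oh(\log\log n)$, is the most delicate part of the proof.
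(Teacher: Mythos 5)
Your proposal follows the paper's blueprint very closely: the same block decomposition $P_1,\ldots,P_{k+1}$ of the length-$(k+1)\gamma$ prefix with pigeonhole, the same synchronizing-set anchors in $T$ and $P$, the same misperiod-based fallback for high-exponent periodic blocks, the same pair of $(\le k)$-errata trees on suffixes and reversed prefixes starting/ending at anchors, joined by 2D orthogonal range reporting via~\cite{DBLP:conf/focs/AlstrupBR00}, the same appeal to a terminal-label multiplicity bound (\cref{lem:kerrata_copies}) to keep the point set small, and the same Lyndon-root/interval-stabbing structure (\cref{lem:nearly_per}) for the $k$-nearly periodic occurrences.

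The one place you genuinely diverge is deduplication. You treat it as the delicate bottleneck and propose canonicalizing each occurrence by the smallest compatible $i$ and the leftmost compatible anchor, enforced online with predecessor queries. The paper does not deduplicate at all: it simply observes that every occurrence is recovered at most once for each fixed triple $(k_1,t,b)$ (since an errata-tree query returns a disjoint union of subtree ranges, and each of $\TT_{t,k_1}$ and $\TT'_{t,k_2}$ produces each anchored occurrence in exactly one of the reported ranges), so the global multiplicity is bounded by the number of triples, $\Oh(k^3)$; the analogous count in the periodic subindex is $\Oh(k^2)$. With $k=\Oh(1)$ this is $\Oh(1)$ per occurrence and the $\Oh(\occ)$ term absorbs it. Your worry that "we cannot afford more than $\Oh(1)$ reports per occurrence" is already satisfied by this crude count --- no canonicalization machinery is needed, and introducing it risks complicating the query path (deciding, per reported point, whether its $(i,j)$ is canonical is itself a nontrivial per-output check). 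A much simpler fallback, if a strict set output is required, is an $n$-bit scratch bitmap cleared at the end in $\Oh(\occ)$ time.

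Two small imprecisions worth fixing in your write-up. First, "it contains a $\tau$-synchronizing fragment, and by the local consistency of $\S$, that fragment is forced to match a uniquely determined fragment $P[j\dd j+2\tau)$": the block $P_i$ may contain several synchronizing fragments, so you should specify that you take the leftmost one (this is what makes $|B_1|\le k+1$). Second, the periodic dichotomy should be stated relative to the matched block $P_i$ and its period $p$, not to "$P$ itself": the case split is on whether the misperiod sets $\Mis_{k+1}(T',i\gamma,i\gamma+p)$ and $\Mis_{k+1}(P,i\gamma,i\gamma+p)$ intersect, and \cref{lemma:Bringmann} is what turns the disjoint case into the $k$-nearly periodic definition the interval-stabbing subindex handles.

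Subject to those adjustments, and with the deduplication simplified to the multiplicity-counting argument, your plan matches the paper's proof.
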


By combining this approach with the data structure for short patterns, we obtain our $k$-mismatch index for $k,\sigma = \Oh(1)$ as stated in \cref{thm:2}.

\subsubsection*{General compact index}
A $k$-errata tree~\cite{DBLP:conf/stoc/ColeGL04} consists of compact tries at levels $0,\ldots,k$. Roughly speaking, each compact trie stores selected suffixes of the text. The compact tries at a higher level are generated from those at the previous level in two different ways, based on nodes and heavy paths, in both cases via so-called group~trees.

For a general alphabet of size $\sigma$, we save space by explicitly constructing only the $(k-1)$-errata tree in $\Oh(n \log^{k-1} n)$ space; this corresponds to the compact tries at levels $0,\ldots,k-1$ of the $k$-errata tree. Chan, Lam, Sung, Tam, and Wong~\cite{DBLP:journals/algorithmica/ChanLSTW10} devised a rather involved approach, based on~\cite{DBLP:journals/talg/RamanRS07}, to avoid generating heavy path-based group trees at level $k$ while still reporting $k$-mismatch occurrences. Instead of constructing node-based group trees, they used a naive query algorithm, which incurs an extra factor of $\sigma$. We show that their technique for heavy path-based group trees can be carefully adapted to node-based group trees as well, effectively eliminating the $\sigma$-factor in query time.
As a subroutine, we need to report all suffixes of $T$ in a given sorted list that do not contain a specified character at a given position. This can be done by storing an appropriate rank/select data structure of Jacobson~\cite{DBLP:conf/focs/Jacobson89} over the list of suffixes. Our approach also arguably simplifies that of~\cite{DBLP:journals/algorithmica/ChanLSTW10}.

When indexing a text for pattern matching queries with patterns containing $\le k$ wildcards, heavy path-based group trees are not created. For clarity, in \cref{sec:general}, we first describe an index of size $\Oh(n \log^{k-1} n)$ in this setting, achieving the same query time as the corresponding variant of the errata tree.

\section{Basics, Strings, Compact Tries}
We denote $[i \dd j]=[i \dd j+1)=\{i,i+1,\ldots,j\}$.
A string $U$ is a sequence of characters over a finite alphabet. By $|U|$ we denote the length of $U$. 
For every position $i \in [0 \dd |U|)$, we denote the $i$th character of $U$ by $U[i]$. A substring of $U$ is a string of the form $U[i]\cdot U[i+1]\cdots U[j-1]$ for integers $0\le i \le j \le |U|$. A fragment of $U$ is a positioned substring of $U$, that is, a substring of $U$ together with its specified occurrence in $U$. By $U[i \dd j)=U[i \dd j-1]$ we denote the fragment composed of characters $U[i],U[i+1],\ldots,U[j-1]$; if $i=j$, the substring is empty. A fragment $U[i \dd j)$ is a prefix of $U$ if $i=0$ and a suffix if $j=|U|$. By $U^R$ we denote the reversal of $U$, that is, $U^R=U[|U|-1] \cdots U[0]$. For two strings $U$ and $V$, by $\LCP(U,V)$ we denote the length of their longest common prefix.

The Hamming distance of two strings $U$, $V$ of equal length, denoted as $\delta_H(U,V)$, is the total number of positions $i \in [0 \dd |U|)$ such that $U[i] \ne V[i]$. We say that a pattern $P$ of length $m$ has a $k$-mismatch occurrence in $T$ at position $i$ if $\delta_H(T[i \dd i+m),P) \le k$.

A string $U$ on which $k$ (at most $k$) substitutions were performed is called a \emph{$k$-modified string $U$} (\emph{($\le k$)-modified string $U$}, respectively). Given a string $T$, its $k$-modified fragment can be represented in $\Oh(k)$ space by storing the original fragment together with the positions of substitutions and the new characters at these positions.

Let us consider an increasing sequence $a_1<a_2<\cdots<a_\ell$ composed of integers in $[1 \dd r]$. A (restricted) \emph{rank} query, given $x\in [1\dd r]$, returns $i\in [1\dd \ell]$ such that $a_i=x$ or states that no such index $i$ exists. A \emph{select} query, given $i\in [1\dd \ell]$, returns $a_i$.

\begin{theorem}[{Raman, Raman, and Rao~\cite[Theorem 4.6]{DBLP:journals/talg/RamanRS07}}]\label{thm:Raman}
An increasing sequence $a_1 < a_2 < \cdots < a_\ell$ composed of integers in $[1 \dd r]$ can be represented using $\Oh(\ell \ceil{\log(r/\ell)})$ bits of space%
\footnote{The data structure of \cite{DBLP:journals/talg/RamanRS07} actually uses $\ceil{\log \binom{r}{\ell}} + o(\ell) + \Oh(\log \log r)$ bits.  
If $\ell = r$, then $a_i = i$ for each $i \in [1 \dd r]$ and rank/select queries can trivially be answered in $\Oh(1)$ time (a pointer to the data structure can be replaced by a null pointer).  
Otherwise, the first two terms are clearly in $\Oh(\ell \ceil{\log(r/\ell)})$ since $\binom{r}{\ell} \le (re/\ell)^\ell$, whereas the third term can be bounded as follows: if $\ell > \log \log r$, then $\ell \cdot \ceil{\log(r/\ell)} \ge \ell > \log \log r$, and if $\ell \le \log \log r$, then $\ell \cdot \ceil{\log(r/\ell)} \ge \log(r/\ell) \ge \log r - \log \log r = \omega(\log \log r)$.} 
so that rank and select queries are supported in $\Oh(1)$ time.
\end{theorem}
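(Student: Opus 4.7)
The plan is to view the sequence $(a_i)_{i=1}^\ell$ as a sparse bit vector $B\in\{0,1\}^r$ with $B[a_i]=1$ and to encode $B$ in a form matching the information-theoretic lower bound $\ceil{\log\binom{r}{\ell}}=\Oh(\ell\ceil{\log(r/\ell)})$ bits, while still supporting constant-time rank and select. First, I would partition $B$ into blocks of size $b:=\floor{\tfrac12\log r}$ and encode each block by a pair $(c,o)$: the class $c$ is the block's popcount, and the offset $o$ is the lexicographic rank of the block's pattern among the $\binom{b}{c}$ strings of popcount $c$. Storing the $r/b$ classes in a uniform-width array costs $\Oh(r\log\log r/\log r)$ bits, and the convexity inequality $\sum_j \log\binom{b}{c_j}\le\log\binom{r}{\ell}$ bounds the total offset length by $\ceil{\log\binom{r}{\ell}}+\Oh(r/b)$. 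Together these fit within $\Oh(\ell\ceil{\log(r/\ell)})$ bits in the typical density regime; the pathological case $\ell\le\log\log r$ is handled exactly as in the footnote.

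Because offsets have variable length, random access requires a second layer. I would group every $s:=\Theta(\log r)$ consecutive blocks into a superblock and, for each superblock, store (i)~the cumulative popcount of all preceding bits in $\ceil{\log \ell}$ bits and (ii)~a full-precision pointer to the start of the superblock's region in the packed offset stream. Within a superblock I would keep a table of per-block in-superblock offset pointers of width $\Oh(\log\log r)$ bits, which suffices because the total offset length inside one superblock is $\Oh(bs)=\Oh(\log^2 r)$. The superblock-level pointers total $\Oh(r/(bs)\cdot\log r)$ bits and the in-superblock table totals $\Oh((r/b)\log\log r)$ bits, both absorbed by the target budget.

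For queries I would precompute two universal tables indexed by $(c,o,i)$ with $c\le b$, $o<\binom{b}{c}$, and $i\in[0\dd b]$: one returns the popcount of the first $i$ bits of the associated bit pattern, the other the position of the $i$th~$1$. Because $b\le\tfrac12\log r$, these tables contain $\Oh(\sqrt{r}\log r)$ entries of $\Oh(\log\log r)$ bits. A rank query at position $x$ then sums (a)~the cumulative popcount at the enclosing superblock, (b)~the popcounts of the preceding full blocks within that superblock---computed in $\Oh(1)$ time via word-level parallelism on the uniform-width class array---and (c)~an in-block popcount from the universal table. A select query at rank $i$ descends symmetrically, using an auxiliary select sampling structure (one sample per $\Theta(\log^2 r)$ ones, contributing $o(\ell)$ bits) to locate the superblock in $\Oh(1)$, and then resolving the block and in-block position in the same two steps.

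The hardest part, in my view, will be the interlocking bookkeeping that makes every pointer simultaneously wide enough to address the stream it points into and narrow enough to fit in the $o(\ell)$ slack. The reason $\Oh(\log\log r)$ bits suffice for in-superblock pointers is precisely the $\Oh(\log^2 r)$ cap on a superblock's total offset length, and the reason the $\Oh(1)$-time in-superblock block scan is viable is that the class fields are uniform-width and can be summed in parallel over $\Oh(\log r)$-bit machine words. Formalizing these two budgets together with the convexity bound $\sum_j\log\binom{b}{c_j}\le\log\binom{r}{\ell}$ is where the argument becomes delicate; once they are in place, rank and select both run in $\Oh(1)$ time on the word RAM.
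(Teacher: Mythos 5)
Your proposal takes a fundamentally different route from the paper: the paper does not prove this theorem at all—it cites \cite[Theorem 4.6]{DBLP:journals/talg/RamanRS07} as a black box whose native space bound is $\ceil{\log\binom{r}{\ell}}+o(\ell)+\Oh(\log\log r)$ bits, and the only ``proof'' content is the footnote's arithmetic showing that this native bound is $\Oh(\ell\ceil{\log(r/\ell)})$. You, by contrast, attempt to reconstruct the underlying data structure from scratch.

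The reconstruction has a genuine gap. What you describe is the classic blocked-bitvector ``RRR'' structure (size-$b$ blocks with class/offset encoding, superblock directories, universal lookup tables), whose space is $\log\binom{r}{\ell}+\Oh(r\log\log r/\log r)$ bits: the class array alone costs $\Theta\!\left(\tfrac{r}{b}\log b\right)=\Theta(r\log\log r/\log r)$ bits, and the superblock/in-superblock pointer tables add further terms of the same order. These additive terms depend on the \emph{universe} size $r$, not on the number of elements $\ell$, and are not bounded by $\Oh(\ell\ceil{\log(r/\ell)})$ when the sequence is sparse. Concretely, for $\ell=\sqrt{r}$ the target budget is $\Theta(\sqrt{r}\log r)$ bits, while your class array alone is $\Theta(r\log\log r/\log r)$, which is asymptotically larger. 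Deferring to ``the pathological case $\ell\le\log\log r$ is handled as in the footnote'' does not help: the footnote only disposes of the single $\Oh(\log\log r)$ term, not of anything resembling $\Theta(r\log\log r/\log r)$. The result you would actually need here is the \emph{indexable dictionary} construction of Raman, Raman, and Rao, which achieves the much tighter $o(\ell)+\Oh(\log\log r)$ redundancy for the (restricted) rank and select interface of the present statement; that construction is substantially different—it partitions by value range, uses a multi-level succinct directory keyed on the elements rather than on the universe, and applies perfect hashing/quotienting—and does \emph{not} reduce to the dense-bitvector scheme you sketch. (Your scheme would also support full rank counting, which the statement does not require; the freedom to support only restricted rank is precisely what makes the $o(\ell)$ overhead achievable.) As a secondary point, the claim that an in-superblock scan of $s=\Theta(\log r)$ class fields of width $\Theta(\log\log r)$ bits runs in $\Oh(1)$ time via word parallelism is also not immediate, since that data spans $\Theta(\log r\log\log r)$ bits, i.e.\ $\omega(1)$ machine words; the usual fix (precomputed intra-superblock prefix sums) adds yet more space of the same $r$-dependent order.
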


\subsection{Compact Tries}
A \emph{trie} $\M$ of a set of strings $\S$ is a rooted tree for which there is a 1-to-1 correspondence between nodes of $\M$ and prefixes of strings in $\S$. There is an edge with label $a$ from node $u$ to node $v$ in $\M$ if and only if $u$ represents a string $U$ and $v$ represents a string $Ua$ for some character $a$. The root node of $\M$ represents the empty string. Nodes of $\M$ that represent strings from $\S$ are called terminal nodes.
For a node $v$, the concatenation of the labels of edges on the path from the root to $v$ is called the \emph{path label} of $v$. The \emph{string depth} of a node $v$ is the length of the path label of $v$. The \emph{locus} of a string $U$ in $\M$, defined only if $U$ is a prefix of a string from $\S$, is the node $v$ in $\M$ with path label $U$.

A \emph{compact trie} $\T$ of a set of strings $\S$ is a trie of $\S$ in which all non-terminal nodes other than the root with exactly one child are dissolved. Nodes that remain (i.e., the root, branching nodes and terminals) are called explicit, whereas the remaining, implicit nodes can still be addressed indirectly, by specifying their nearest explicit descendant and the length of the path to the descendant in the uncompact trie. By a \emph{node} of $\T$ we mean an explicit or implicit node. Each edge that connects two explicit nodes has a label that is represented as a fragment of one of the strings in $\S$. For every $S \in \S$, we also write $S \in \T$.
A compact trie which $t$ terminals contains $\Oh(t)$ explicit nodes.

The best known example of a compact trie is the suffix tree of the text $T$, the compact trie of all suffixes of $T$. We will be considering compact tries of (selected) suffixes of the text $T$, also called \emph{sparse suffix trees} of $T$, or of ($\le k$)-modified suffixes of the text~$T$. In these cases, the label of every edge can be specified in $\Oh(1)$ or $\Oh(k)$ space, respectively, as a fragment or a ($\le k$)-modified fragment of $T$.

For a length-$n$ text $T$ there is an $\Oh(n)$-sized data structure~\cite{DBLP:conf/latin/BenderF00} based on the suffix tree that computes $\LCP(T[i \dd n),T[j \dd n))$ for any $i,j \in [0 \dd n)$ in $\Oh(1)$ time.

\subsection{Longest Common Prefix Queries on a Compact Trie}
In a $\TreeLCP$ query, we are given a compact trie $\T$ and a pattern $P$, and we are to report the longest common prefix between $P$ and the strings $S \in \T$. The locus of this longest common prefix is also called the node where $P$ \emph{diverges} from $\T$ and is denoted as $\TreeLCP(\T,P)$.

$\TreeLCP(\T,P)$ queries can be extended by specifying a node $v$ in the compact trie $\T$. Such a query, denoted $\TreeLCP_v(\T,P)$ and referred to as an \emph{unrooted} $\TreeLCP$ query, is equivalent to a $\TreeLCP(\T_v,P)$ query for $\T_v$ being the subtree of $\T$ rooted at $v$. If $v$ is the root of $\T$, the query $\TreeLCP_v(\T,P)=\TreeLCP(\T,P)$ is called \emph{rooted}.
In computation of $\TreeLCP$, perfect hashing can be used.

\begin{theorem}[\cite{DBLP:journals/jacm/FredmanKS84}]\label{thm:WExp}
A compact trie $\T$ of $n$ static strings can be stored in $\Oh(n)$ space (in addition to the stored strings themselves) so that a $\TreeLCP_v(\T,P)$ query for node $v$ of $\T$ and a pattern $P$ can be answered in $\Oh(d)$ time, where $d$ is the string depth of the returned node in $\T_v$.
\end{theorem}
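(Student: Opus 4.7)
The plan is to augment each explicit node with a perfect hash table for child navigation and then realize $\TreeLCP_v$ as a straightforward top-down walk.

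First, I would preprocess $\T$ so that each explicit node $u$ carries a static dictionary mapping every distinct first character $a$ of an outgoing edge label to the corresponding child of $u$. Using the Fredman--Komlós--Szemerédi scheme~\cite{DBLP:journals/jacm/FredmanKS84}, such a dictionary on $\deg(u)$ keys occupies $\Oh(\deg(u))$ words and answers lookups (including misses) in $\Oh(1)$ worst-case time. A compact trie with $n$ terminals has $\Oh(n)$ explicit nodes and $\Oh(n)$ edges, so $\sum_u \deg(u) = \Oh(n)$, and the combined size of all dictionaries---on top of the stored strings themselves, which are needed because edge labels are stored as fragments---is $\Oh(n)$.

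To answer $\TreeLCP_v(\T,P)$, I would perform a top-down walk starting at $v$ while maintaining a matched length $\ell$, initially $0$. At any explicit node $u$, look up $P[\ell]$ in $u$'s dictionary; if it is absent, return $u$. Otherwise, descend along the identified edge and compare its label against $P[\ell], P[\ell+1], \ldots$ character by character, incrementing $\ell$ on each match. If the label is exhausted, continue from the next explicit node; if $P$ is exhausted or a mismatch occurs first, return the current (possibly implicit) locus. Every edge-label character is retrieved in $\Oh(1)$ time because labels are stored as fragments of the input strings, and each elementary operation---a hash probe or a character comparison---either terminates the walk or advances the string depth by one. The total running time is therefore $\Oh(d)$, where $d$ is the string depth of the returned node in $\T_v$.

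The only point that requires attention is that the search must be confined to $\T_v$; this is automatic because the walk consults only local child pointers at each explicit node, so initializing it at $v$ already restricts exploration to $\T_v$. The main technical ingredient is the $\Oh(1)$-time perfect hashing of outgoing edge characters, and the biggest obstacle is simply invoking it correctly at every explicit node---everything else is a routine traversal argument.
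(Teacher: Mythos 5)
Your proposal is the standard proof that the cited result refers to: attach an FKS perfect-hash dictionary on first characters of outgoing edge labels at each explicit node (total size $\Oh(n)$ since the trie has $\Oh(n)$ explicit nodes and edges), then answer $\TreeLCP_v$ by a top-down walk starting at $v$, with each $\Oh(1)$-time hash probe or character comparison either advancing the string depth or terminating the walk, yielding $\Oh(d)$ total time. The paper states this theorem with a citation rather than proving it, and your argument is precisely the intended one, including the correct handling of the case where the returned locus is an implicit node mid-edge.
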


Data structures for answering rooted and unrooted $\TreeLCP$ queries on sparse suffix trees were proposed by Cole et al.~\cite{DBLP:conf/stoc/ColeGL04}; the space complexity of unrooted queries was improved by Chan et al.~\cite{DBLP:journals/algorithmica/ChanLSTW10}.

\begin{theorem}[\cite{DBLP:journals/algorithmica/ChanLSTW10,DBLP:conf/stoc/ColeGL04}]\label{thm:unrootedLCP}
Assume we are given compact tries $\T_1,\ldots,\T_t$ of total size $N$, each containing suffixes of a length-$n$ text $T$, and that the suffix tree $\TS$ of $T$ is given. There exists a data structure of size $\Oh(n+N)$ that, given a length-$m$ pattern $P$, preprocesses $P$ in $\Oh(m)$ time so that later an unrooted query $\TreeLCP_v(\T_i,P')$, for any tree $\T_i$, $i \in [1 \dd t]$, its node $v$, and suffix $P'$ of $P$, can be answered in $\Oh(\log \log n)$ time.
\end{theorem}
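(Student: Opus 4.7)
The plan is to reduce every $\TreeLCP_v(\T_i,P')$ query to an ``interval-restricted weighted ancestor'' search inside the already given suffix tree $\TS$ of $T$. Since every string stored in any $\T_i$ is a suffix of $T$, each explicit node $x$ of $\T_i$ has a path label that is a substring of $T$ and therefore a well-defined (possibly implicit) locus $\phi_i(x)$ in $\TS$. At construction time I would store $\phi_i(x)$ for every explicit $x$ as the nearest explicit $\TS$-descendant together with a string depth; this adds $\Oh(N)$ on top of the $\Oh(n)$-space suffix tree.

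First, I would preprocess $P$ in $\Oh(m)$ time by running matching statistics in $\TS$ via suffix links, obtaining, for every $i \in [0 \dd m)$, the locus $u_i$ in $\TS$ of the longest prefix of $P[i \dd m)$ that occurs in $T$ together with its string depth. Retrieving $u_i$ later costs $\Oh(1)$, so every subsequent query on a suffix $P' = P[i \dd m)$ of $P$ starts from a fixed root-to-$u_i$ path in $\TS$.

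Given the query $\TreeLCP_v(\T_i,P')$, the returned node $w$ has two defining properties: its path label is a prefix of $P'$ occurring in $T$, so $w$ corresponds to a point on the root-to-$u_i$ path in $\TS$; and $w$ lies inside the subtree of $v$ in $\T_i$. Consequently it suffices to find the deepest explicit $x$ in the subtree of $v$ (inside $\T_i$) with $\phi_i(x)$ an ancestor of $u_i$ in $\TS$; a single depth comparison then locates $w$ at $x$ or on the unique outgoing edge of $x$ toward $u_i$ in $\Oh(1)$ extra time. Using DFS numberings of both trees, the subtree of $v$ becomes a contiguous interval $I_v$ of $\T_i$-ranks and ``ancestor of $u_i$ in $\TS$'' becomes a containment condition on $\TS$-intervals.

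The crux is then to answer this interval-restricted weighted-ancestor search in $\Oh(\log\log n)$ time and $\Oh(n + N)$ space. I would apply a heavy-path decomposition to $\TS$: each $x \in \T_i$ is charged to the unique heavy path from which $\phi_i(x)$ branches off, yielding $\Oh(N)$ buckets whose $y$-fast tries, keyed by $\TS$-depth with $\T_i$-DFS ranks as satellite data, support interval-predecessor queries in $\Oh(\log\log n)$ time; a single such probe on the heavy path containing $u_i$ suffices because the correct $x$ must branch off precisely there. The main obstacle is coupling the unrooted constraint $I_v$ with the weighted-ancestor search without introducing a per-tree weighted-ancestor structure (which would cost $\Oh(N \log n)$); following Chan et al., the heavy-path bucketing of $\TS$ breaks this deadlock, whereas for the purely rooted case a single weighted-ancestor structure on $\TS$ (following Amir et al.) would already suffice.
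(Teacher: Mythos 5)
The paper cites \cref{thm:unrootedLCP} from Chan et al.\ and Cole et al.\ without reproducing a proof, so I can only evaluate your argument on its own merits and against what those works actually do.

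There is a fundamental gap in the reduction step. You define $\phi_i(x)$, for an explicit node $x$ of $\T_i$, to be the $\TS$-locus of $x$'s path label (from the \emph{root} of $\T_i$), and you claim that the answer node $w$ of $\TreeLCP_v(\T_i,P')$ satisfies ``its path label is a prefix of $P'$ occurring in $T$, so $w$ corresponds to a point on the root-to-$u_i$ path in $\TS$.'' That is only true for a \emph{rooted} query ($v=$ root of $\T_i$). In an unrooted query, the path label of $w$ from the root of $\T_i$ is $L\cdot X$, where $L$ is the (arbitrary) path label of $v$ and $X$ is the prefix of $P'$ that we want; $L\cdot X$ is in general \emph{not} a prefix of $P'$, so $\phi_i(w)$ is \emph{not} on the root-to-$u_i$ path. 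Consequently ``find the deepest explicit $x$ in the subtree of $v$ with $\phi_i(x)$ an ancestor of $u_i$'' is not a correct characterization of the query, and the entire subsequent reduction to an interval-restricted weighted ancestor search on $\TS$ collapses. The correct reformulation would require, for each candidate $x$, the $\TS$-locus of the \emph{trimmed} fragment $T[s(x)+d_v \dd s(x)+d_x)$ (the label from $v$ to $x$), but that locus depends on the query node $v$ through the shift $d_v$ and therefore cannot be precomputed per node $x$ in $\Oh(n+N)$ space; making this shift cheap is precisely the technical heart of Cole et al.\ and of Chan et al.'s space improvement, and your proposal silently assumes it away.

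A secondary concern: even granting the $\phi_i$ characterization, the claim that ``a single probe on the heavy path containing $u_i$ suffices because the correct $x$ must branch off precisely there'' does not hold. The root-to-$u_i$ path in $\TS$ crosses $\Theta(\log n)$ heavy paths, and the maximizing $x$ (whichever node it really is) can be bucketed to any of them, not only to the one through $u_i$; you would either need $\Oh(\log n)$ probes or a fractional-cascading / merge structure tying the buckets together, which is again exactly the machinery of the cited proofs. In short, your plan captures the rooted case (which is essentially \cref{lem:rootedLCP_special}), but the unrooted case is where the real work lies, and that step is missing.
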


The following lemma is a component of the data structure behind \cref{thm:unrootedLCP}.

\begin{lemma}[Cole et al.~\cite{DBLP:conf/stoc/ColeGL04}]\label{lem:rootedLCP_special}
The suffix tree $\TS$ of a length-$n$ text $T$ can be enhanced with a data structure of size $\Oh(n)$ that, given a length-$m$ pattern $P$,
returns $\TreeLCP(\TS,U)$ for every suffix $U$ of $P$ in $\Oh(m)$ total time.
\end{lemma}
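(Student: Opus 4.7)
The plan is to reduce the task to computing \emph{matching statistics} of $P$ against $T$: by definition, for each $i \in [0 \dd m)$, the node $\TreeLCP(\TS, P[i \dd m))$ is the locus in $\TS$ of the longest prefix of $P[i \dd m)$ that occurs as a substring of $T$. Producing all $m$ such loci in total $\Oh(m)$ time is the classical matching-statistics problem, which I would solve via the Chang--Lawler scheme tailored to $\TS$.

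I would augment $\TS$ with two standard $\Oh(n)$-space structures: (i) suffix links on every explicit node, and (ii) perfect hash tables at every explicit node that, given a character~$c$, return in $\Oh(1)$ time the outgoing edge whose label begins with $c$, if any. The algorithm maintains a current locus $v_i$ at string depth $\ell_i$, equal to the length of the longest prefix of $P[i \dd m)$ occurring in $T$; it is stored as a pair (nearest explicit ancestor, offset into the incoming edge) so that moving to the nearest explicit ancestor takes $\Oh(1)$ time. At iteration $i$, I first extend the current match character by character from position $i+\ell_i$, using the hashed child lookups at explicit nodes and direct character access into $T$ along implicit parts of edges (edge labels are fragments of $T$), until an extension fails or $P$ is exhausted; the resulting locus is reported as $\TreeLCP(\TS, P[i \dd m))$. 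To pass from $i$ to $i+1$, I move to the nearest explicit ancestor of $v_i$, follow its suffix link, and then re-descend to depth $\ell_i - 1$ via skip-count, jumping over each edge using only its stored length and performing at most one character comparison on the final edge; this lands exactly at the locus of $P[i+1 \dd i+\ell_i)$, from which the next extension phase starts.

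Correctness rests on the classical invariant $\ell_{i+1} \ge \ell_i - 1$, which guarantees that restarting the extension from depth $\ell_i - 1$ along the suffix link indeed yields $\TreeLCP(\TS, P[i+1 \dd m))$. For the $\Oh(m)$ total running time, the standard Chang--Lawler amortization applies: the number of successful character comparisons during extension phases is at most $m$, since each one raises $\ell$ by one and $\ell$ can decrease by at most one per iteration; unsuccessful comparisons contribute one per iteration, totalling $\Oh(m)$; and the aggregate number of explicit edges traversed during skip-count descents telescopes against the explicit edges crossed during extensions, again giving $\Oh(m)$. The main obstacle is the skip-count bookkeeping across implicit-to-explicit transitions, but placing the cursor at the nearest explicit ancestor before each suffix-link application makes it routine, and every elementary operation (child lookup, suffix link, edge-length read, character access into $T$) costs $\Oh(1)$, yielding the claimed $\Oh(m)$ total bound with only $\Oh(n)$ space on top of $\TS$.
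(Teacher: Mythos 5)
Your proposal is correct and follows the standard route: \cref{lem:rootedLCP_special} is precisely the classical $\Oh(m)$-time computation of the matching statistics of $P$ against $T$ by suffix-link traversal of $\TS$, which is also how Cole et al.\ establish it (the paper itself only cites them and gives no independent proof). The one loose spot is your final amortization phrase that the skip-count edges ``telescope against'' the extension edges; the clean accounting is the usual node-depth potential: the node depth of the cursor drops by at most two per iteration (at most one when moving to the nearest explicit ancestor and at most one across the suffix link, since $s(\cdot)$ decreases node depth by at most one) and rises by exactly one per explicit edge crossed in either the skip-count or the extension phase, so the total number of explicit edges crossed is $\Oh(m)$; everything else in your argument goes through as stated.
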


\section{LCPs on Modified Suffixes}\label{sec:unrooted}
The lemmas in this section help us optimize the $k$-mismatch index for short patterns. We believe that the data structures from this section will find further applications. The proofs of lemmas in this section are deferred to \cref{sec:unrooted_proofs}.

Let $\T$ be a compact trie of a given set of fragments of a length-$n$ text $T$.
We say that $\T$ is given in a \emph{canonical form} if, for every terminal node $v$ of $\T$ that corresponds to a fragment $T[a \dd b)$, either $b=n$ or $v$ does not have an outgoing edge along the character $T[b]$ (see \cref{fig:canonical}). Let us note that a sparse suffix tree of $T$ is automatically in a canonical form. Any compact trie $\T$ of fragments of $T$ can be transformed into a tree of the same shape and with the same number of terminals but in a canonical form by repeating the following correcting process: while $\T$ has a terminal $v$ that corresponds to a fragment $T[a \dd b)$ and $v$ has an outgoing edge with label $T[b]$, move the terminal $v$ one character down along $T[b]$, effectively replacing the fragment $T[a \dd b)$ with $T[a \dd b]$.

\begin{figure}[htpb]
\centering
\renewcommand{\tabcolsep}{0pt}
\newcommand{\dy}{0.6}
\begin{tikzpicture}[scale=0.9]

\begin{scope}
\draw (0,0.5*\dy) -- node[sloped,left,rotate=270] {
\begin{tabular}{c}
a
\end{tabular}
} (-2,-1*\dy);

\draw (-2,-1*\dy) -- node[sloped,left,rotate=270] {
\begin{tabular}{c}
a
\end{tabular}
} (-3,-2*\dy);

\draw (-3,-2*\dy) -- node[sloped,right,rotate=90] {
\begin{tabular}{c}
b
\end{tabular}
} (-3,-3*\dy);

\draw (-3,-3*\dy) -- node[sloped,right,rotate=90] {
\begin{tabular}{c}
a\\b
\end{tabular}
} (-3,-5*\dy);

\draw (-2,-1*\dy) -- node[sloped,right,rotate=90] {
\begin{tabular}{c}
b\\a
\end{tabular}
} (-1,-3*\dy);

\draw (-1,-3*\dy) -- node[sloped,right,rotate=90] {
\begin{tabular}{c}
a
\end{tabular}
} (-1,-4*\dy);

\draw (0,0.5*\dy) -- node[sloped,right,rotate=90] {
\begin{tabular}{c}
b
\end{tabular}
} (2,-1*\dy);

\draw (2,-1*\dy) -- node[sloped,right,rotate=90] {
\begin{tabular}{c}
a\\a
\end{tabular}
} (2,-3*\dy);

\draw (2,-3*\dy) -- node[sloped,right,rotate=90] {
\begin{tabular}{c}
b\\a\\a
\end{tabular}
} (2,-6*\dy);

\foreach \x/\y in {2/-1,2/-3,2/-6,-3/-2,-3/-3,-3/-5,-1/-3,-1/-4}{
    \filldraw (\x,\y*\dy) circle (0.04cm);
}

\draw (2,-1*\dy) node[above right=-0.1cm] {\small$T[4 \dd 4]$};
\draw (2,-3*\dy) node[left] {\small$T[1 \dd 3]$};
\draw (2,-6*\dy) node[below] {\small$T[6 \dd 11]$};
\draw (-3,-2*\dy) node[left] {\small$T[7 \dd 8]$};
\draw (-3,-3*\dy) node[left] {\small$T[10 \dd 12]$};
\draw (-3,-5*\dy) node[left] {\small$T[2 \dd 6]$};
\draw (-1,-3*\dy) node[left] {\small$T[3 \dd 5]$};
\draw (-1,-4*\dy) node[left] {\small$T[5 \dd 8]$};
\end{scope}

\begin{scope}[xshift=9cm]
\draw (0,0.5*\dy) -- node[sloped,left,rotate=270] {
\begin{tabular}{c}
a
\end{tabular}
} (-2,-1*\dy);

\draw (-2,-1*\dy) -- node[sloped,left,rotate=270] {
\begin{tabular}{c}
a\\b
\end{tabular}
} (-3,-3*\dy);

\draw (-3,-3*\dy) -- node[sloped,right,rotate=90] {
\begin{tabular}{c}
a
\end{tabular}
} (-3,-4*\dy);

\draw (-3,-4*\dy) -- node[sloped,right,rotate=90] {
\begin{tabular}{c}
b
\end{tabular}
} (-3,-5*\dy);

\draw (-2,-1*\dy) -- node[sloped,right,rotate=90] {
\begin{tabular}{c}
b\\a
\end{tabular}
} (-1,-3*\dy);

\draw (-1,-3*\dy) -- node[sloped,right,rotate=90] {
\begin{tabular}{c}
a
\end{tabular}
} (-1,-4*\dy);

\draw (0,0.5*\dy) -- node[sloped,right,rotate=90] {
\begin{tabular}{c}
b\\a
\end{tabular}
} (2,-2*\dy);

\draw (2,-2*\dy) -- node[sloped,right,rotate=90] {
\begin{tabular}{c}
a\\b\\a
\end{tabular}
} (2,-5*\dy);

\draw (2,-5*\dy) -- node[sloped,right,rotate=90] {
\begin{tabular}{c}
a
\end{tabular}
} (2,-6*\dy);

\foreach \x/\y in {2/-2,2/-5,2/-6,-3/-3,-3/-4,-3/-5,-1/-3,-1/-4}{
    \filldraw (\x,\y*\dy) circle (0.04cm);
}

\draw (2,-2*\dy) node[above right=-0.1cm] {\small\textcolor{red}{$T[4 \dd 5]$}};
\draw (2,-5*\dy) node[left] {\small\textcolor{red}{$T[1 \dd 5]$}};
\draw (2,-6*\dy) node[below] {\small$T[6 \dd 11]$};
\draw (-3,-3*\dy) node[left] {\small$T[10 \dd 12]$};
\draw (-3,-4*\dy) node[left] {\small\textcolor{red}{$T[7 \dd 10]$}};
\draw (-3,-5*\dy) node[left] {\small$T[2 \dd 6]$};
\draw (-1,-3*\dy) node[left] {\small$T[3 \dd 5]$};
\draw (-1,-4*\dy) node[left] {\small$T[5 \dd 8]$};
\end{scope}

\begin{scope}[yshift=-9*\dy cm]
\draw (-0.75,0) node[above] {$T:$};
\foreach \i/\c in {0/a,1/b,2/a,3/a,4/b,5/a,6/b,7/a,8/a,9/b,10/a,11/a,12/b}{
  \draw (\i*0.5,0) node[above] {\c};
  \draw (\i*0.5,-0.5) node[above] {\footnotesize \i};
}
\end{scope}

\end{tikzpicture}
\vspace*{-1cm}
\caption{Left: a compact trie of a set of fragments of $T$. Right: a compact trie of the same shape given in a canonical form; the updated terminals are shown in red.}\label{fig:canonical}
\end{figure}
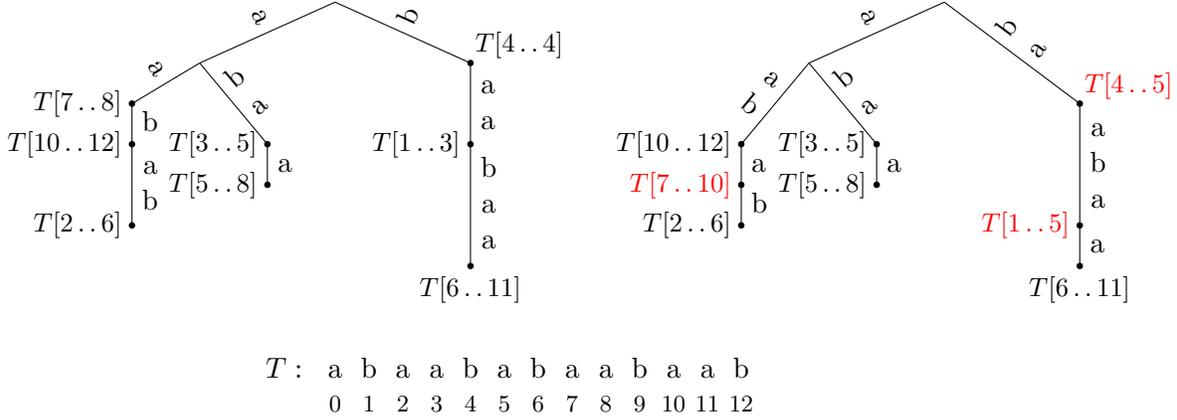

Unrooted $\TreeLCP$ queries can be answered efficiently also on compact tries of fragments of $T$ given in a canonical form. In short, we compute the result of a $\TreeLCP$ query obtained for the compact tries of the corresponding suffixes of $T$ and trim it, which is possible thanks to the original trees being in a canonical form.

\begin{restatable}{lemma}{canonical}\label{thm:unrootedLCPfactors}
\cref{thm:unrootedLCP} holds also if $\T_1,\ldots,\T_t$ are compact tries of fragments of $T$, each given in a canonical form.
\end{restatable}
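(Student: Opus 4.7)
The plan is to reduce to \cref{thm:unrootedLCP} by a \emph{virtualization}: for each compact trie $\T_i$ of fragments in canonical form, I build a compact trie $\T'_i$ of the corresponding full suffixes of $T$, apply \cref{thm:unrootedLCP} to the $\T'_i$'s, and then trim the query answers on $\T'_i$ back to answers on $\T_i$. The canonical-form assumption is exactly what makes the virtualization well-defined and the trimming straightforward.

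Concretely, for every terminal $v$ of $\T_i$ representing a fragment $T[a \dd b)$ with $b < n$, I attach to $v$ a new outgoing edge labeled $T[b \dd n)$ leading to a new leaf that becomes the terminal for the suffix $T[a \dd n)$; if $b = n$, the original terminal already represents the full suffix. When $v$ had no children in $\T_i$, it now has a single child and becomes implicit by compaction. The structural claim I need is that $\T'_i$ is a genuine compact trie of those suffixes: for any two terminals of $\T_i$ with fragments $X = T[a_1 \dd b_1)$ and $Y = T[a_2 \dd b_2)$ whose lowest common ancestor in $\T_i$ lies at string depth $d$, the corresponding suffixes diverge at position $d$. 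If $d < \min(|X|, |Y|)$, this is immediate. Otherwise WLOG $d = |X| \le |Y|$; if $b_1 = n$, then $T[a_1 \dd n)$ has length $d$ and is a prefix of $T[a_2 \dd n)$, and if $b_1 < n$, canonical form says the terminal for $X$ has no outgoing $T[b_1]$-edge in $\T_i$, whereas the outgoing edge toward $Y$'s terminal starts with $T[a_2 + d]$, so $T[a_1 + d] = T[b_1] \ne T[a_2 + d]$. Since each terminal contributes at most one extra leaf and one extra edge, $|\T'_i| = \Oh(|\T_i|)$ and the total size remains $\Oh(N)$.

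Applying \cref{thm:unrootedLCP} to the $\T'_i$'s together with the given suffix tree $\TS$ yields a data structure of size $\Oh(n + N)$ that preprocesses a length-$m$ pattern $P$ in $\Oh(m)$ time and answers $\TreeLCP_{v'}(\T'_i, P')$ queries in $\Oh(\log \log n)$ time. To answer $\TreeLCP_v(\T_i, P')$, I precompute a pointer from each node $v$ of $\T_i$ to its location $v'$ in $\T'_i$ (which may be implicit on a tail), invoke the above query, and translate the returned node $u'$ back. If $u'$ lies on a portion of $\T'_i$ inherited from $\T_i$, I return the corresponding node in $\T_i$; otherwise $u'$ sits on a tail attached at some old terminal $w$, and I return $w$. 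The latter is correct because the first character of that tail is $T[b_w]$, and canonical form forbids any outgoing edge of $w$ in $\T_i$ starting with $T[b_w]$, so $P'$ truly diverges from $\T_i$ at $w$.

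The main subtlety is the structural claim that virtualization preserves the compact-trie property, which rests entirely on a clean case analysis of the canonical-form condition; the surrounding bookkeeping—precomputing the forward/backward correspondences between nodes of $\T_i$ and $\T'_i$, and handling old terminals that may be dissolved, remain explicit, or acquire a tail child—is routine but mildly error-prone. Once this is set up, the preprocessing, query, and space bounds inherit directly from \cref{thm:unrootedLCP}.
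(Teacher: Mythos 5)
Your proposal is correct and takes essentially the same route as the paper's own proof: build the compact trie $\T'_i$ of the corresponding full suffixes, apply \cref{thm:unrootedLCP} to the $\T'_i$'s, and use the canonical-form condition to trim the returned node back to $\T_i$. The one presentational difference is that you make the key structural fact explicit — that $\T'_i$ is exactly $\T_i$ with single-path tails hung off the old terminals, proved by the clean LCP case analysis — whereas the paper's proof constructs $\T'_i$ abstractly, stores node correspondences and a weighted-ancestor structure, picks an arbitrary terminal below the returned node $w'$, and performs a weighted-ancestor query at depth $\min(b-a,d)$ to trim; the canonical-form argument then appears only inside the correctness claim. Your version arguably makes the correctness more transparent, but the bookkeeping you wave at (mapping an implicit $v$ of $\T_i$ to $v'$ in $\T'_i$ when a leaf terminal has been dissolved, and mapping $u'$ back) is precisely what the paper discharges with the weighted-ancestor data structure of Amir, Landau, Lewenstein, and Sokol; one small inaccuracy is your parenthetical that $v'$ ``may be implicit on a tail'' — a node of $\T_i$ always maps into the inherited portion of $\T'_i$ (possibly to a tail junction that became implicit), never strictly onto a tail.
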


In the proof of the next lemma, we decompose each compact trie of $k$-modified suffixes into compact tries of fragments of $T$, each given in a canonical form, and very small tries that correspond to modifications.

\begin{restatable}{lemma}{modified}\label{lem:modifiedLCP}
For all constants $k,k'=\Oh(1)$,
\cref{thm:unrootedLCP} holds also if $\T_1,\ldots,\T_t$ are compact tries of ($\le k$)-modified suffixes of $T$ and queries are for ($\le k'$)-modified suffixes of $P$.
\end{restatable}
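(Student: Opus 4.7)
The plan is to decompose each trie $\T_i$ of ($\le k$)-modified suffixes of $T$ into a hierarchy of compact tries of fragments of $T$ in canonical form, interconnected by modification transitions of constant local size, and then answer queries by chaining invocations of \cref{thm:unrootedLCPfactors} on the pieces. Concretely, for every prefix $\mathbf{p}$ of length $j \le k$ of the modification sequence of some suffix stored in $\T_i$, let $\T_i^{(\mathbf{p})}$ be the compact trie of the $(j{+}1)$-th maximal unmodified runs of those stored suffixes whose modification sequences start with $\mathbf{p}$; equivalently, the stored strings are the fragments of $T$ between the $j$-th and $(j{+}1)$-th modification (or to the end of the suffix if there are only $j$ modifications). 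Each $\T_i^{(\mathbf{p})}$ is a compact trie of fragments of $T$, and we canonicalize it via the correcting process described in \cref{sec:unrooted}. Each terminal of $\T_i^{(\mathbf{p})}$ is either a true terminal of $\T_i$ (a suffix with exactly $j$ modifications) or a transition point; at each transition point we attach a small ``modification trie'' --- a lookup table of outgoing modification edges mapping each replacement character to the root of the corresponding child sub-trie $\T_i^{(\mathbf{p}')}$. Since each node of $\T_i$ lies in a unique sub-trie and each stored suffix contributes to at most $k+1$ sub-tries, the sum of sub-trie sizes over all $\mathbf{p}$ and all $i$ is $\Oh(N)$, so applying \cref{thm:unrootedLCPfactors} to the union of all sub-tries uses $\Oh(n+N)$ space.

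At query time, given $v$ in $\T_i$ and a ($\le k'$)-modified suffix $P^*$ of $P$, we preprocess $P$ in $\Oh(m)$ time as required by \cref{thm:unrootedLCPfactors} and split $P^*$ into at most $2k'+1$ pieces alternating between fragments of $P$ and single modified characters. We then simulate the descent through $\T_i$ from $v$, maintaining the current sub-trie $\T_i^{(\mathbf{p})}$ and current entry node. For each unmodified piece $P[a \dd b)$ of $P^*$, we call $\TreeLCP$ on $\T_i^{(\mathbf{p})}$ with the suffix $P[a \dd m)$ and cap the returned depth at $b-a$. Upon consuming a modified character of $P^*$ or reaching a transition terminal of $\T_i^{(\mathbf{p})}$, we advance by one character via a single-character lookup---either by testing the first character of the current edge in $\T_i^{(\mathbf{p})}$ or by consulting the transition's modification table---and, if the lookup succeeds, continue in the appropriate child sub-trie. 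The descent traverses $\Oh(k+k') = \Oh(1)$ sub-tries, each invocation of \cref{thm:unrootedLCPfactors} costs $\Oh(\log \log n)$, and each transition is $\Oh(1)$, so the total query time is $\Oh(\log \log n)$ as claimed.

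The main obstacle is the bookkeeping at transition events, of which two kinds interleave: modifications recorded inside $\T_i$ (encoded by sub-trie boundaries) and modifications carried by $P^*$ (encoded by the piece splitting). The correctness argument has to verify that truncating $\TreeLCP$ answers at the correct piece boundary and consistently choosing the next sub-trie always identifies the genuine locus of $\TreeLCP_v(\T_i, P^*)$, in particular when piece boundaries in $\T_i$ and in $P^*$ coincide, when $v$ itself lies at or near a transition point, and when canonicalization has shifted a terminal one character past its logical split point. The size bound rests on the observation that each stored modified suffix contributes to at most $k+1$ sub-tries, which keeps the decomposition from blowing up the total size beyond $\Oh(|\T_i|)$.
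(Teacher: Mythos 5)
Your decomposition is different from the paper's and, as described, it has a correctness gap.

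You group stored suffixes by their modification-sequence prefix $\mathbf{p}$ and build a separate compact trie $\T_i^{(\mathbf{p})}$ of the corresponding unmodified runs, attaching modification tables only at the terminals of these sub-tries. The trouble is that these sub-tries are disconnected from the tree structure of $\T_i$, so a $\TreeLCP$ call on one of them can silently overshoot a transition point. Concretely, take $T=\mathrm{abc}$, $S_1=T[0\dd 3)=\mathrm{abc}$ (no modifications) and $S_2$ obtained from $T[1\dd 3)$ by replacing position $0$ ($\mathrm{b}\to\mathrm{a}$), so $S_2=\mathrm{ac}$. Both live in $\T_i$, which branches at string depth $1$ into $\mathrm{bc}$ and $\mathrm{c}$. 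In your $\T_i^{(\epsilon)}$, the unmodified run of $S_1$ is $\mathrm{abc}$ and that of $S_2$ is the empty string, so $\T_i^{(\epsilon)}$ has its root as the transition terminal for $S_2$ and a single outgoing edge labelled $\mathrm{abc}$. Querying with $P^*=\mathrm{ac}$, your algorithm issues $\TreeLCP$ on $\T_i^{(\epsilon)}$, which reports depth $1$ (on the edge $\mathrm{abc}$, not at a terminal), and then stops with the wrong answer; the true locus $\TreeLCP(\T_i,\mathrm{ac})$ is at depth $2$ and can only be reached through the modification transition at the root, whose replacement character $\mathrm{a}$ happens to coincide with the unmodified $T[0]=\mathrm{a}$ read by the $\TreeLCP$ call. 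Your algorithm advances via the modification table only when it is \emph{exactly at} a transition terminal, so transitions that are absorbed into the interior of a $\TreeLCP$ hop (because a modified character coincides with another string's unmodified character at the same trie position) are lost, and in general one cannot ``consistently choose'' between staying in $\T_i^{(\mathbf{p})}$ and transitioning without exploring both, which is the kind of nondeterminism your write-up only gestures at in the final paragraph.

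The paper avoids this by not indexing pieces by modification sequence at all: it repeatedly peels off, from the \emph{current subtree of} $\T_i$, the compact trie of the longest unmodified prefixes of all strings in that subtree, and, crucially, it maintains an explicit two-way correspondence between explicit nodes of $\T_i$ and nodes of the decomposition trees, attaching next-phase trees at nodes that were explicit in $\T_i$ rather than only at terminals of the piece. This lets the query jump to the correct next piece even when a $\TreeLCP$ hop has passed a modification that coincides with another string's character. Your proposal would need an analogous mechanism — e.g., storing, for every explicit node of $\T_i$, all decomposition nodes it corresponds to, and consulting that map after every $\TreeLCP$ call — but as written it does not have one, and the ``at most $k+1$ sub-tries per suffix'' size argument does not by itself imply that the single-threaded descent ever reaches the right sub-trie.
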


\noindent
For superconstant $k,k'$, the space of the data structure in the lemma is $\Oh(n+Nk)$ and the query time for an unrooted $\TreeLCP$ query is $\Oh((k+k'+1) \log \log n)$.

The next lemma follows by applying kangaroo jumps~\cite{DBLP:journals/tcs/GalilG87,DBLP:journals/tcs/LandauV86} with the aid of the data structure of \cref{lem:rootedLCP_special}.

\begin{restatable}{lemma}{kangaroo}\label{lem:kangaroo_k}
A length-$n$ text $T$ can be enhanced with a data structure of size $\Oh(n)$ that, given a length-$m$ pattern $P$, preprocesses $P$ in $\Oh(m)$ time so that later, given a $k$-modified suffix $S$ of $T$ and a $k'$-modified suffix $P'$ of $P$, we can compute $\LCP(S,P')$ in $\Oh(k+k'+1)$ time.
\end{restatable}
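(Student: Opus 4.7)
The plan is to use classical kangaroo jumps, with the suffix tree $\TS$ of $T$ (augmented for constant-time LCA and string-depth queries) together with the data structure of \cref{lem:rootedLCP_special} forming the $\Oh(n)$-size static component. In the $\Oh(m)$-time preprocessing of $P$, I will invoke \cref{lem:rootedLCP_special} to compute, for every $j \in [0 \dd m)$, the locus $v_j := \TreeLCP(\TS, P[j\dd m))$, storing it as a pair $(w_j, d_j)$ where $d_j$ is the string depth of $v_j$ and $w_j$ is the nearest explicit descendant of $v_j$ in $\TS$.

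With this in hand, the value $\LCP(T[i\dd n), P[j\dd m))$ can be returned in $\Oh(1)$ time as follows: let $u_i$ be the leaf of $\TS$ corresponding to $T[i\dd n)$; if $w_j$ is an ancestor of (or equal to) $u_i$ in $\TS$ — equivalently, $v_j$ is an ancestor of $u_i$ in the uncompact suffix trie — the answer is $d_j$; otherwise, the answer is the string depth of $\mathrm{LCA}(u_i, w_j)$ in $\TS$. Both subroutines reduce to $\Oh(1)$-time LCA operations on $\TS$.

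For the query, let $S = T[a\dd n)$ carry modifications at positions $p_1 < \cdots < p_k$ and $P' = P[b\dd m)$ at positions $q_1 < \cdots < q_{k'}$. I will merge the two sorted lists into $r_1 < \cdots < r_s$ with $s \le k+k'$ in $\Oh(k+k')$ time and then run kangaroo jumps. Starting from $\ell := 0$, in each iteration I let $e$ be the minimum of the next modification position $r_j \ge \ell$ (if any) and $\min(|S|, |P'|)$, and I compute $L := \LCP(T[a+\ell\dd n), P[b+\ell\dd m))$ via the constant-time primitive above. If $L < e - \ell$, I return $\ell + L$. Otherwise, if $e = \min(|S|, |P'|)$, I return $e$; if $e = r_j$, I compare $S[r_j]$ and $P'[r_j]$ using the stored modification information, returning $r_j$ on mismatch and setting $\ell := r_j + 1$ on match. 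The loop performs $\Oh(k+k'+1)$ LCP queries and character comparisons, giving the claimed $\Oh(k+k'+1)$ query time.

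The main technical point to verify carefully will be the $\Oh(1)$-time implementation of $\LCP(T[i\dd n), P[j\dd m))$ from the pair $(w_j, d_j)$: namely, ensuring correctness when $v_j$ is implicit on the edge terminating in $w_j$, when $v_j$ is explicit (so $v_j = w_j$), when $w_j$ is a leaf, and when $P[j\dd m)$ is fully contained as a substring of $T$ (so $d_j = m - j$). Once this primitive is settled, the merge of the two modification lists and the kangaroo loop itself are routine.
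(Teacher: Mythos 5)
Your proof is correct, and it takes a genuinely different route from the paper's. The paper also starts by computing $\TreeLCP(\TS,U)$ for every suffix $U$ of $P$ via \cref{lem:rootedLCP_special}, but it uses these loci only to \emph{chop} $P'$ into pieces that are each substrings of $T$: it decomposes $S$ into $\Oh(k)$ blocks (maximal $T$-fragments separated by single modified characters), decomposes $P'$ analogously into $\Oh(k')$ blocks, and then further cuts each multi-character block of $P'$ into maximal prefixes that occur in $T$. This recursive cutting can produce far more than $\Oh(k')$ pieces, so the paper truncates after $y+2x+1$ pieces and proves, via a ``special fragments'' claim, that every processed piece either exhausts a block of $S$ or forces a mismatch --- a careful amortization needed to keep the loop length $\Oh(k+k'+1)$. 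You avoid all of that by instead turning the loci $(w_j,d_j)$ into an $\Oh(1)$-time primitive for $\LCP(T[i\dd n),P[j\dd m))$ directly: either $w_j$ is an ancestor of the leaf $u_i$, giving answer $d_j$, or the answer is the string depth of $\mathrm{LCA}(u_i,w_j)$. Your case analysis on implicit/explicit $v_j$ and $P[j\dd m)$ being wholly contained in $T$ is correct, and both branches are $\Oh(1)$ given constant-time LCA and ancestor tests. With that primitive, the kangaroo loop hops strictly past the next modification position on every iteration (or terminates), so the $\Oh(k+k'+1)$ bound is immediate, with no truncation or extra counting lemma. The two approaches buy roughly the same complexity; yours trades the paper's amortization argument for a slightly more involved (but standard) constant-time text-vs-pattern LCP primitive, and is arguably cleaner. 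One small point worth stating explicitly: when merging the modification positions of $S$ and $P'$, collapse duplicates (a position modified in both), which only shortens the merged list; and as you already note, test $e=\min(|S|,|P'|)$ before $e=r_j$ to avoid reading $P'[r_j]$ when $r_j\ge|P'|$.
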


\section{\texorpdfstring{$k$}{k}-Mismatch Index for \texorpdfstring{$\sigma=\Oh(1)$}{σ=\O(1)} for Short Patterns}\label{sec:short}
We proceed with an index for small $m$ designed for constant $k$ and $\sigma$. We start with a simpler index with worse complexity. Then, in \cref{thm:sk1}, we optimize its complexity.

\begin{fact}\label{fct:sk1slow}
Let $T$ be a string of length $n$ over an alphabet of size $\sigma=\Oh(1)$ and $k=\Oh(1)$. For every $\mu \in [1\dd n]$ and $h \in [1 \dd k)$, there exists a $k$-mismatch index using $\Oh(n\mu^h)$ space that can answer queries for patterns of length $m \le \mu$ in $\Oh(m^{k-h+1}\log\log n+\occ)$ time.
\end{fact}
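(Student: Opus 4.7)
The plan is to implement the pivot strategy sketched in \cref{sec:techov} while stratifying the modified-suffix tries by both the size and the maximum of the modification set, so that each $k$-mismatch occurrence is reported by exactly one search and no postfiltering of candidates is needed. Concretely, I would build the suffix tree $\TS$ of $T$ together with, for every $\ell\in[0\dd\mu)$ and every $s\in[1\dd h]$, a compact trie $\T_{\ell,s}$ containing the following $s$-modified suffixes of $T$: for each $i\in[0\dd n-\ell)$, each set $S\subseteq[0\dd\ell]$ of size $s$ with $\max S=\ell$, and each assignment of characters $c_\tau\neq T[i+\tau]$ to every $\tau\in S$, the trie contains the string obtained from $T[i\dd n)$ by replacing $T[i+\tau]$ with $c_\tau$ at each $\tau\in S$. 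Since $\sigma,k=\Oh(1)$, trie $\T_{\ell,s}$ has $\Oh\bigl(n\binom{\ell}{s-1}\bigr)=\Oh(n\mu^{s-1})$ terminals, hence the total trie size is $\sum_{s=1}^h \mu\cdot \Oh(n\mu^{s-1})=\Oh(n\mu^h)$. Enhancing this family with the data structure of \cref{lem:modifiedLCP} adds $\Oh(n+n\mu^h)=\Oh(n\mu^h)$ space.

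A query for a pattern $P$ of length $m\le\mu$ begins by preprocessing $P$ in $\Oh(m)$ time for \cref{lem:modifiedLCP} and by listing the exact occurrences of $P$ via $\TS$. Then, for every pivot $j\in[1\dd m]$ I would execute two batches of $\TreeLCP$ queries: in Case~A, I would enumerate every pair $(S',c')$ with $S'\subseteq[j\dd m)$, $|S'|\le k-h$, and $c'_\tau\neq P[\tau]$ for $\tau\in S'$, and search the resulting $(\le k-h)$-modified copy of $P$ in $\T_{j-1,h}$; in Case~B, for each $s\in[1\dd h-1]$, I would search the unmodified $P$ in $\T_{j-1,s}$. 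Whenever the matched locus reaches depth $m$, I would traverse its subtree and report the text position $i$ stored at every terminal within it.

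Correctness follows by case analysis on the number $\delta$ of mismatches of an occurrence at position $i$ with mismatch-position set $M=\{p_1<\cdots<p_\delta\}$. The key observation is that equality between a trie terminal and the searched string on the first $m$ positions forces, at the unmodified coordinates, the absence of mismatches, so the stored $S$ and the searched $S'$ must equal $M\cap[0\dd j)$ and $M\cap[j\dd m)$, respectively. Hence $\delta=0$ is captured solely by $\TS$; occurrences with $\delta\ge h$ appear exactly once, at the unique pivot $j=p_h+1$, through Case~A of $\T_{j-1,h}$; and occurrences with $1\le\delta<h$ appear exactly once, at the unique pivot $j=p_\delta+1$, through Case~B of $\T_{j-1,\delta}$ (the $|S'|=0$ requirement matches the absence of mismatches in $[j\dd m)$). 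Every reported terminal thus corresponds to a distinct occurrence, so the subtree traversals cost $\Oh(\occ)$ in total. Case~A issues $m\cdot \Oh(m^{k-h})$ $\TreeLCP$ queries and Case~B issues $\Oh(mh)=\Oh(m)$ of them; each query costs $\Oh(\log\log n)$ by \cref{lem:modifiedLCP}, yielding the claimed $\Oh(m^{k-h+1}\log\log n+\occ)$ running time.

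The delicate step is the stratification by $(\ell,s)$: folding all $(\le h)$-modified suffixes into a single trie would not change the asymptotic space but would force an explicit postfilter on every reported terminal to verify that $j$ is its canonical pivot, at a per-terminal cost that could dominate $\occ$. Pairing each trie with a fixed $(\ell,s)$ and enforcing $\max S=\ell$ and $|S|=s$ structurally bakes the canonicity condition into the index, so the correspondence between ``terminal in the matched subtree'' and ``occurrence reported exactly once'' holds without any filtering.
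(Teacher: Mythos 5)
Your proof is correct and is essentially the paper's own construction: the tries you denote $\T_{\ell,s}$ are the paper's $\T_{\kappa,j}$ with the two indices swapped (number of substitutions, position of rightmost substitution), and your pivot-indexed query loop is just a reordering of the paper's two-phase query (first $\le h$-mismatch occurrences via unmodified $P$, then $> h$-mismatch occurrences via $(\le k{-}h)$-modified $P'$). Your explicit uniqueness argument via $S = M\cap[0\dd j)$ and $S' = M\cap[j\dd m)$ (using the disjointness forced by the pivot) makes precise the paper's one-line claim that each occurrence is reported exactly once, but the underlying index and the $\Oh(m^{k-h+1}\log\log n+\occ)$ accounting are the same.
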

\begin{proof}
\emph{Data structure.}
For every $\kappa \in [1 \dd h]$ and $j \in [\kappa-1 \dd \mu)$, the data structure stores a compact trie $\T_{\kappa,j}$ of $\kappa$-modified suffixes $S$ of $T$ such that all substitutions in $S$ are at positions $[0 \dd j]$ of $S$ and the rightmost ($\kappa$th) substitution in $S$ is at the $j$th position of $S$. Moreover, the data structure contains the suffix tree of $T$, which we denote as $\T_{0,0}$; for $j > 0$, we assume that $\T_{0,j}$ are empty. For answering rooted $\TreeLCP$ queries for compact tries of ($\le k$)-modified suffixes of $T$ and ($\le k$)-modified patterns $P$, we use \cref{lem:modifiedLCP}.

For a given $\kappa \in [0 \dd h]$, there are $\Oh(n\binom{\mu}{\kappa}\sigma^\kappa) = \Oh(n \mu^h)$ $\kappa$-modified suffixes with modifications at the first $\mu$ positions. Each ($\le h$)-modified suffix belongs to exactly one of the compact tries. Overall, the compact tries use $\Oh(n\mu^h)$ space (as each edge is represented in $\Oh(h)=\Oh(1)$ space). The data structure of \cref{lem:modifiedLCP} uses $\Oh(n\mu^h)$ total space.

\emph{Queries.}
Assume we are given a query pattern $P$ of length $m \le \mu$. First, we list $h$-mismatch occurrences of $P$. To this end, for each $\kappa \in [0 \dd h]$ and $j \in (\kappa \dd m)$, we look for the locus of $P$ in $\T_{\kappa,j}$ and, if it exists, report all the terminals in the subtree of the computed node. Next, we report $k$-mismatch occurrences of $P$ that contain at least $h+1$ substitutions. To this end, we generate all strings $P'$ that are at Hamming distance at least 1 and at most $k-h$ from $P$. For each such string $P'$, if $j$ is the first position where $P$ and $P'$ mismatch, we look for the locus of $P'$ in each compact trie $\T_{h,i}$ for $i \in [h-1 \dd j)$ and, if it exists, report all the terminals in the subtree of the computed node. Each $k$-mismatch occurrence of $P$ is reported exactly once.

The locus of a ($\le k-h$)-modified pattern in a compact trie can be computed in $\Oh(\log \log n)$ time using \cref{lem:modifiedLCP}.
In total, we compute $\Oh(m)$ times the locus of $P$ to compute $h$-mismatch occurrences of $P$ and $\Oh(m^{k-h+1}\sigma^{k-h})$ times the locus of a $(\le k-h)$-modified $P$ to compute the remaining $k$-mismatch occurrences of $P$. The total query time is $\Oh(m^{k-h+1}\log \log n + \occ)$.
\end{proof}

\paragraph{First improvement.}
We use a folklore construction that covers the interval $[1 \dd \mu]$ by $\Oh(\mu)$ (here: just $\mu$) so-called base intervals so that each position in $[1 \dd \mu]$ is covered by $\Oh(\log \mu)$ base intervals and each interval $[1 \dd j]$, for $j \in [1 \dd \mu]$, can be decomposed into $\Oh(\log j)$ base intervals. For a positive integer $t$, by $f(t)$ we denote the maximum integer power of two that divides $t$. With the formula $f(t)=\frac12 ((t\ \mathrm{xor}\ (t-1))+1)$, values $f(t)$ can be computed in $\Oh(1)$ time in the word RAM. We use base intervals $[t-f(t) \dd t)$ for $t \in [1 \dd \mu]$. For a positive integer $j$, its \emph{$f$-sequence} is defined as a decreasing sequence of positive integers that is constructed as follows. We start with element $j$. If the last element was $t$, then the next element is $t-f(t)$, unless $t-f(t)=0$, in which case the sequence is terminated. Then the set of right endpoints of base intervals that cover the interval $[1 \dd j]$ is exactly the $f$-sequence of $j$.

The improved data structure stores, for every $\kappa \in [0 \dd h]$ and $j \in [1 \dd \mu]$, a compact trie $\T'_{\kappa,j}$ that consists of modified suffixes from $\T_{\kappa,i}$ for all $i \in [j-f(j) \dd j)$.
This way the modified suffixes from compact trie $\T_{\kappa,i}$ are stored in $\Oh(\log \mu)$ compact tries $\T'_{\kappa,j}$. Thus, the space complexity grows just by a factor of $\Oh(\log \mu)$. Naturally, we construct data structures for unrooted $\TreeLCP$ queries in tries $\T'_{\kappa,j}$. The total space complexity is $\Oh(n \mu^h \log \mu)$.

The query algorithm is adjusted as follows.
First, $h$-mismatch occurrences of $P$ are reported by computing the locus of $P$ in $\T'_{\kappa,t}$ for all $\kappa \in [0 \dd h]$ and $t$ in the $f$-sequence of $m$. Then, to compute $k$-mismatch occurrences with at least $h+1$ substitutions, we generate all modified patterns $P'$ that are at Hamming distance at least 1 and at most $k-h$ from $P$. For each such string $P'$, if $j$ is the first position where $P$ and $P'$ mismatch, we find the locus of $P'$ in each compact trie $\T'_{h,t}$ for $t$ in the $f$-sequence of $j$. Each $k$-mismatch occurrence of $P$ is still reported exactly once.

The length of an $f$-sequence of $j \in [1 \dd m]$ is $\Oh(\log m)$. We ask $\Oh(k\log m)$ $\TreeLCP$ queries in the compact tries to compute $h$-mismatch occurrences of $P$ and $\Oh(m^{k-h}\log m)$ $\TreeLCP$ queries to compute the remaining $k$-mismatch occurrences. Thus, the time complexity of a query is $\Oh(m^{k-h}\log m \log\log n + \occ)$.

\paragraph{Second improvement.} In the lemma below, we extend rank and select queries of Raman et al.~\cite{DBLP:journals/talg/RamanRS07} to storing a collection of increasing sequences. The proof applies Jensen's inequality.

\begin{lemma}\label{lem:jensen}
A collection of $c$ increasing sequences of total length $\ell$, each composed of integers in $[1 \dd r]$, can be stored using $\Oh(\ell(1+\log(cr/\ell)))$ bits to support rank and select queries on each sequence in $\Oh(1)$ time.
\end{lemma}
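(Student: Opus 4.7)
The plan is to store each of the $c$ sequences independently using the data structure of \cref{thm:Raman}: this already provides $\Oh(1)$-time rank and select on each sequence and occupies $\Oh(\ell_i(1+\log(r/\ell_i)))$ bits for the $i$-th sequence of length $\ell_i \le r$ (the additive $1$ absorbs the ceiling in $\lceil \log(r/\ell_i)\rceil$, which is justified precisely because $\ell_i \le r$). Summing over $i$, the total space becomes
\[
\Oh\Bigl(\sum_i \ell_i\Bigr) + \Oh\Bigl(\sum_{i:\ell_i>0} \ell_i \log(r/\ell_i)\Bigr) \;=\; \Oh(\ell) + \Oh\Bigl(\sum_{i:\ell_i>0} \ell_i \log(r/\ell_i)\Bigr)
\]
bits, so it remains to bound the second sum by $\Oh(\ell\log(cr/\ell))$.

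This is where Jensen's inequality enters, as the lemma hints. Letting $c'\le c$ denote the number of non-empty sequences, the weights $w_i := \ell_i/\ell$ form a probability distribution on $\{i:\ell_i>0\}$. Because $\log$ is concave and every term $\log(r/\ell_i)$ is non-negative (since $\ell_i \le r$), Jensen's inequality yields
\[
\sum_{i:\ell_i>0} \ell_i \log(r/\ell_i)
\;=\; \ell \sum_{i:\ell_i>0} w_i \log(r/\ell_i)
\;\le\; \ell \log\!\Bigl(\sum_{i:\ell_i>0} w_i \cdot \tfrac{r}{\ell_i}\Bigr)
\;=\; \ell \log(c'r/\ell) \;\le\; \ell\log(cr/\ell),
\]
which delivers the claimed $\Oh(\ell(1+\log(cr/\ell)))$-bit bound. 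Bookkeeping to jump to the substructure for sequence $i$ in $\Oh(1)$ time is handled by a standard array of $c$ offsets into the concatenated bit representation; the cost of this array is absorbed by the main term in the intended regime, and empty sequences can be short-circuited by a separate $c$-bit indicator vector if necessary.

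The main obstacle here is essentially this single analytic step. One must apply Jensen in the right direction (concavity of $\log$, pushing sums inside), check that restricting to non-empty sequences does no harm (it only reduces $c$ to $c'$), and verify that the logarithms being averaged are non-negative so the inequality has the useful sign — all of which follow from $1 \le \ell_i \le r$.
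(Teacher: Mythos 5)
Your proof is correct and follows essentially the same route as the paper's: store each sequence separately via \cref{thm:Raman} and bound the total space using Jensen's inequality. The paper phrases Jensen as ``the concave $f(x)=x\log(r/x)$ is maximized over $\sum_i \ell_i=\ell$ at $\ell_i=\ell/c$'' and absorbs the per-sequence $\Oh(1)$-bit overhead using $c\le\ell$, whereas you use the weighted form of Jensen on $\log$ with weights $\ell_i/\ell$ and restrict to the $c'\le c$ nonempty sequences; these are algebraically equivalent executions of the same idea, with yours being marginally more careful about $\ell_i=0$.
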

\begin{proof}
Let $\ell_1,\ell_2,\ldots,\ell_c$ be the lengths of the subsequent sequences. By \cref{thm:Raman}, the total number of bits required to represent all sequences is $\Oh(c + \sum_{i=1}^c \ell_i (1+\log(r/\ell_i)))$. Since $f(x)=x\log(r/x)$ is concave, this value is maximized for $\ell_i=\ell/c$; then, the total size is as required (as $c \le \ell$).
\end{proof}

Let us fix a tree $\T'_{\kappa,j}$. Let $S_1,S_2,\ldots,S_t$ ($S_1 < S_2 < \cdots < S_t$) be strings corresponding to terminals of $\T'_{\kappa,j}$. We select every $\floor{\log n}$th terminal from the sequence as well as $S_1$ and $S_t$ and form a compact trie $\T''_{\kappa,j}$ of the strings corresponding to the selected terminals. The terminal $S_i$ in $\T''_{\kappa,j}$ is numbered by $i$. The new tries take $\Oh(\mu+n\mu^h\log \mu/\log n)$ space in total. As before, after the improvement we construct data structures for unrooted $\TreeLCP$ queries in tries $\T''_{\kappa,j}$. The trie $\T'_{\kappa,j}$ is not stored; instead, we design a compact data structure that allows us to retrieve $S_i$ in $\Oh(1)$ time using \cref{lem:jensen}, as shown in the lemma below.

\newcommand{\LL}{\mathcal{L}}
\begin{lemma}\label{lem:claim}
If $\mu=\Omega(\log n)$, there is a data structure of size $\Oh(n\mu^h\log^2 \mu/\log n)$ that, given $\kappa \in [0 \dd h]$, $j \in [1 \dd \mu]$, and $i \in [1 \dd |\T'_{\kappa,j}|]$, returns the $i$-th lexicographically terminal of $\T'_{\kappa,j}$ in $\Oh(1)$ time.
\end{lemma}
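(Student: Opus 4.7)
The plan is to encode the lex-sorted terminal lists of the tries $\T'_{\kappa,j}$ compactly via \cref{lem:jensen}, treating each list as an increasing sequence in a fixed universe.

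I first fix a canonical lex-sorted enumeration $\mathcal{L}_\kappa$ of all $\kappa$-modified suffixes of $T$ whose modifications lie at positions in $[0\dd\mu)$. Since $\sigma,\kappa=\Oh(1)$, its length is $N_\kappa=\Oh(n\mu^\kappa)$. Every terminal of $\T'_{\kappa,j}$ has a unique index in $\mathcal{L}_\kappa$, so the terminals in lex order form an increasing sequence of $t_{\kappa,j}=|\T'_{\kappa,j}|$ elements in $[1\dd N_\kappa]$. For each fixed $\kappa$, the family of $\mu$ such sequences (one per $j$) has total length $\ell_\kappa=\sum_j t_{\kappa,j}=\Oh(n\mu^\kappa\log\mu)$, since every $\kappa$-modified suffix appears in $\Oh(\log\mu)$ tries $\T'_{\kappa,j}$ (one per element of its $f$-sequence in the base-interval construction). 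Applying \cref{lem:jensen} with $c=\mu$, $r=N_\kappa$, and $\ell=\ell_\kappa$ gives $\Oh(\ell_\kappa(1+\log(cr/\ell_\kappa)))=\Oh(n\mu^\kappa\log\mu\cdot\log(\mu/\log\mu))=\Oh(n\mu^\kappa\log^2\mu)$ bits. Summed over $\kappa\in[0\dd h]$ and divided by the $\Theta(\log n)$-bit word size, this matches the claimed $\Oh(n\mu^h\log^2\mu/\log n)$-word space. A single select on the $(\kappa,j)$-sequence yields in $\Oh(1)$ time the global index $q\in[1\dd N_\kappa]$ of the $i$-th lex-smallest terminal of $\T'_{\kappa,j}$.

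The remaining step is to decode $q$ into the actual modified suffix, i.e., a start in $[0\dd n)$ plus an $\Oh(\log\mu)$-bit modification template. A flat random-access array over $\mathcal{L}_\kappa$ costs $\Oh(N_\kappa\log n)$ bits per $\kappa$, totaling $\Oh(n\mu^h)$ words, which fits within the budget whenever $\log^2\mu=\Omega(\log n)$. The main obstacle I foresee is the complementary regime ($\mu$ close to $\log n$), where the average per-entry budget drops below $\log n$ bits and the $\log n$-bit start must be shared across the $\Omega(\mu^\kappa)$ modified suffixes that use it. To handle that regime, I would replace the flat array with a two-level decoder: the $\Oh(\log\mu)$-bit modification template at each position $q$ is stored in a flat array (total $\Oh(n\mu^h\log\mu)$ bits, comfortably within budget), while the start is recovered via an additional invocation of \cref{lem:jensen} that, for each fixed template, compactly represents the increasing sub-sequence of positions in $\mathcal{L}_\kappa$ occupied by modified suffixes with that template. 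A rank query on that sub-sequence converts $q$ into the lex rank of the desired start within the fixed-template class, and this rank is then resolved through the suffix tree of $T$ (already present outside the lemma's space) augmented by the template. Arguing carefully that the resulting composite decoder runs in $\Oh(1)$ time and respects the $\Oh(n\mu^h\log^2\mu/\log n)$-word budget across all admissible $\mu$ is the principal technical step.
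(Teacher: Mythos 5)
Your first step is sound: enumerating the terminals of each $\T'_{\kappa,j}$ as an increasing sequence of indices into a lex-sorted universe $\mathcal{L}_\kappa$ of $\kappa$-modified suffixes, and applying \cref{lem:jensen}, does fit the $\Oh(n\mu^h\log^2\mu)$-bit budget. The gap lies in the decoding step, and you have essentially flagged it yourself. Once you have the global index $q\in[1\dd N_\kappa]$, you still need to materialize $\mathcal{L}_\kappa[q]$ in $\Oh(1)$ time. Your plan is to store the modification \emph{template} of $\mathcal{L}_\kappa[q]$ in a flat array (fine) and to recover the \emph{start} via a rank-within-template query ``resolved through the suffix tree of $T$ augmented by the template.'' That last step does not go through. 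For a fixed template with modification positions $p_1<\cdots<p_\kappa$, the lexicographic order of the corresponding modified suffixes is a \emph{segmented} comparison on the pieces $T[i\dd i{+}p_1)$, $T[i{+}p_1{+}1\dd i{+}p_2)$, \ldots, which is generally \emph{not} the lex order of any single family of suffixes of $T$ (a small example: $T=\mathrm{abab}$ with position $0$ changed to a fresh symbol already permutes the order nontrivially), so the suffix tree of $T$ does not give you the $a$-th element of this order. To realize that bijection in $\Oh(1)$ time you would need a per-template permutation of the admissible starts, which costs $\Theta(n\log n)$ bits per template and $\Theta(n\mu^h\log n)$ bits in total — exactly the $\Oh(n\mu^h)$-word flat-array cost you already ruled out for $\mu$ near $\log n$. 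In short, you have pushed the small-$\mu$ obstacle one level down rather than removing it.

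The paper's proof avoids this entirely by peeling off \emph{only the rightmost} modification rather than the whole template. It explicitly stores $\LL$, the lex-sorted list of all $(\le h{-}1)$-modified suffixes; this is affordable ($\Oh(n\mu^{h-1})$ entries $\subseteq \Oh(n\mu^h/\log n)$ words) precisely thanks to $\mu=\Omega(\log n)$. For each terminal, it stores the rightmost-modification \emph{triad} $(p,c_{\mathrm{old}},c_{\mathrm{new}})$ directly ($\Oh(\log\mu)$ bits), and — this is the crux you are missing — observes that among terminals sharing the same triad, removing that modification \emph{preserves} lexicographic order, because all of them agree at position $p$ both before ($c_{\mathrm{old}}$) and after ($c_{\mathrm{new}}$) the change. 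Hence the map from ``$a$-th terminal with this triad in this trie'' to ``index in $\LL$'' is an increasing sequence ($\Cut$), which \cref{lem:jensen} then stores within budget; a rank in $\MyTriad$ plus a select in $\Cut$ completes the $\Oh(1)$-time decode. Your proposal lacks both the single-modification peel and the monotonicity observation that makes the peel cheap, and these are the two load-bearing ideas of the lemma.
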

\begin{proof}
\emph{Data structure.} We store a lexicographically sorted list $\LL$ of all ($\le h-1$)-modified suffixes of $T$ with modifications at the first $\mu$ positions only. The subsequent elements of $\LL$ are denoted as $\LL_1,\LL_2,\ldots,\LL_{|\LL|}$. Each modified suffix can be represented in $\Oh(h)=\Oh(1)$ space and $|\LL| = \Oh(n\mu^{h-1})$, so $\Oh(n\mu^{h-1}) \subseteq \Oh(n \mu^h/\log n)$ space is required.

\newcommand{\oold}{\mathit{old}}
\newcommand{\nnew}{\mathit{new}}
\newcommand{\Short}{\mathit{Short}}
\newcommand{\MyTriad}{\mathit{MyTriad}}
\newcommand{\Cut}{\mathit{Cut}}
\newcommand{\Triads}{\mathit{Triads}}

For each terminal ($\le h$)-modified suffix $S$ in $\T'_{\kappa,j}$, a triad $(p,c_\oold,c_\nnew)$ is stored such that the rightmost modification in $S$ is at position $p$, the character before modification is $c_\oold$, and the character after modification is $c_\nnew$. The sequence of triads for all terminals of $\T'_{\kappa,j}$ in lexicographic order is denoted as $\Triads_{\kappa,j}$. If $S$ contains no modifications, we can set $p=-1$. Each triad is represented in $\Oh(\log \mu)$ bits and the number of triads is the same as the total size of tries $\T'_{\kappa,j}$, that is, $\Oh(n\mu^h \log \mu)$. Hence, the space complexity of this component $\Oh(n\mu^h \log^2 \mu/\log n)$.

For each trie $\T'_{\kappa,j}$ and triad $(p,c_\oold,c_\nnew)$, we store an increasing sequence $\MyTriad_{\kappa,j,p,c_\oold,c_\nnew}$ of integers in $[1 \dd |\T'_{\kappa,j}|]$ such that $x$ occurs in this sequence if and only if the lexicographically $x$th terminal of $\T'_{\kappa,j}$ has a triad $(p,c_\oold,c_\nnew)$ (i.e., $\Triads_{\kappa,j}[x]=(p,c_\oold,c_\nnew)$). Thus, for a given trie $\T'_{\kappa,j}$, we form $\Oh(\mu)$ increasing sequences of total length $|\T'_{\kappa,j}|$ and values in $[1 \dd |\T'_{\kappa,j}|]$ and store them using \cref{lem:jensen}. This part of the data structure takes $\Oh(|\T'_{\kappa,j}| \log \mu/\log n)$ space per each trie, for a total of $\Oh(n\mu^h\log^2 \mu/\log n)$ space.

Finally, for each trie $\T'_{\kappa,j}$ and valid triad $(p,c_\oold,c_\nnew)$, we store a sequence $\Cut_{\kappa,j,p,c_\oold,c_\nnew}$ of integers in $[1 \dd |\LL|]$ such that $x$ occurs as the $i$th element in this sequence if and only if the lexicographically $i$th terminal with triad $(p,c_\oold,c_\nnew)$ in $\T'_{\kappa,j}$ (i.e., the globally $\MyTriad_{\kappa,j,p,c_\oold,c_\nnew}[i]$'th terminal in the trie) without its rightmost modification (i.e., the one at position $p$) equals $\LL_x$. The key observation is that the sequence $\Cut_{\kappa,j,p,c_\oold,c_\nnew}$ is \emph{increasing} because all the considered terminals have the same character at position $p$ ($c_\nnew$) and all the terminals without the last modification also have the same character at this position ($c_\oold$). Here $\Oh(\mu^2)$ increasing sequences are stored using \cref{lem:jensen}. Each terminal of each trie implies exactly one element in one of the sequences, so the total length of the sequences is $\Oh(n\mu^k\log \mu)$ and their values are in $[1 \dd |\LL|]$; thus again $\Oh(n\mu^h \log^2 \mu/\log n)$ space is used.

\emph{Queries.} Say we need to compute a terminal specified by indices $\kappa,j,i$. First we recover the triad $\Triads_{\kappa,j}[i]=(p,c_\oold,c_\nnew)$. With a rank query in sequence $\MyTriad_{\kappa,j,p,c_\oold,c_\nnew}$, we recover an index $a$ of this sequence that stores element $i$. With a select query on sequence $\Cut_{\kappa,j,p,c_\oold,c_\nnew}$, we compute the $a$th element of this sequence, say $x$. 
In the end, we return the ($\le h-1$)-modified string $\LL_{x}$ with an additional substitution $\LL_{x}[p]:= c_\nnew$. Each step takes $\Oh(1)$ time by \cref{lem:jensen}.
\end{proof}

\begin{remark}
Without the assumption that $\mu = \Omega(\log n)$, the space complexity in \cref{lem:claim} becomes $\Oh(n\mu^{h-1}+n \mu^h/\log n)$.
\end{remark}

The next simple technical lemma is also reused in our general compact index. We use static predecessor/successor data structure. Such a data structure is constructed for a specified set $A$ of integers. A predecessor query to $A$, given an integer $x$, returns $\max\{y \in A\,:\,y \le x\}$.
A successor query returns $\min\{y \in A\,:\,y \ge x\}$.
Using $y$-fast tries~\cite{DBLP:journals/ipl/Willard83} one can obtain a static $\Oh(m)$-size data structure on a set $A$ of $m$ sorted keys from $[1 \dd n]$ that answers predecessor/successor queries in $A$ in $\Oh(\log \log n)$ time.

\begin{lemma}\label{lem:utility}
Let $k=\Oh(1)$ and $\T_1,\ldots,\T_t$ be compact tries of ($\le k$)-modified suffixes of $T$. Further, let $\T'_i$, for $i \in [1 \dd t]$, be a compact trie formed by selecting every $\floor{\log n}$th terminal of $\T_i$ in the lexicographic order as well as the first and the last terminal. We assume that tries $\T_1,\ldots,\T_t$ are not stored; instead, we can compute in $\Oh(1)$ time the lexicographically $j$th terminal of $\T_i$. 

Let $k'=\Oh(1)$ and $P$ be a string pattern such that for every ($\le k'$)-modified suffix $P'$ of $P$, any query $\TreeLCP(\T'_i,P')$ can be answered in $\Oh(\log \log n)$ time and any ($\le k$)-modified suffix of $T$ can be lexicographically compared with $P'$ in $\Oh(1)$ time. Using $\Oh(\sum_{i=1}^t|\T'_i|)$ additional space, for a given index $i \in [1 \dd t]$ and ($\le k'$)-modified suffix $P'$ of $P$, we can compute the range of terminals in $\T_i$ that have $P'$ as a prefix in $\Oh(\log \log n)$ time.
\end{lemma}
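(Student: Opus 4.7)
The plan is to use one $\TreeLCP$ query on $\T'_i$ to pin the answer down to an $\Oh(\log n)$-wide band of $\T_i$, and then close the remaining $\Oh(\log n)$ gap by binary search inside that band. For each $i$, I would precompute the array $\pi_i\colon[1\dd|\T'_i|]\to[1\dd|\T_i|]$ recording the rank in $\T_i$ of each selected terminal of $\T'_i$ (padded by the sentinels $\pi_i(0):=0$ and $\pi_i(|\T'_i|+1):=|\T_i|+1$), together with, for each explicit node of $\T'_i$, the rank interval of its subtree terminals. These auxiliary tables contribute $\Oh(\sum_i|\T'_i|)$ extra space, within the stated budget.

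To answer a query $(i,P')$, I first invoke $\TreeLCP(\T'_i,P')$, which by hypothesis returns a locus $v$ in $\Oh(\log\log n)$ time; let $d$ be its string depth. If $d=|P'|$, I read off the rank interval $[a,b]\subseteq[1\dd|\T'_i|]$ of terminals under $v$. Otherwise $P'$ is not a prefix of any terminal of $\T'_i$, and the character $P'[d]$ together with the local structure of $\T'_i$ at $v$ identifies the unique consecutive pair of selected terminals that lexicographically brackets $P'$; I then set $b:=a-1$ to denote an empty subrange. In either case the desired range $[\alpha,\beta]$ in $\T_i$ lies in the window $[\pi_i(a-1)+1,\pi_i(b+1)-1]$, which contains at most $2\lfloor\log n\rfloor$ indices because the sub-sampling retains every $\lfloor\log n\rfloor$-th terminal. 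When $a\le b$, I further constrain the search for $\alpha$ to $[\pi_i(a-1)+1,\pi_i(a)]$ and the search for $\beta$ to $[\pi_i(b),\pi_i(b+1)-1]$, each of width at most $\lfloor\log n\rfloor$.

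Finally, I would recover $\alpha$ and $\beta$ by binary search in the respective sub-windows. The predicate ``$\T_i[j]$ has $P'$ as a prefix'' has an interval level set by a convexity argument: if $A\le B\le C$ lex and both $A,C$ start with $P'$, then so does $B$, because any earliest disagreement of $B$ with $P'$ would break one of the two lex inequalities. In the $\alpha$-sub-window, the right endpoint $\T_i[\pi_i(a)]$ already starts with $P'$, so by convexity (with $P'$ itself playing the lower bracket) the predicate collapses to $\T_i[j]\ge P'$, resolved in $\Oh(1)$ by the hypothesized lex comparator. For the half of the $\beta$-sub-window strictly above $\pi_i(b)$, and for the empty-subrange case, no prefix-match flanks the search on the required side, so I would evaluate each probe in $\Oh(1)$ using the LCP oracle of \cref{lem:kangaroo_k} between ($\le k$)-modified suffixes of $T$ and ($\le k'$)-modified suffixes of $P$, testing whether $\LCP(\T_i[j],P')\ge|P'|$. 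Combined with the initial $\TreeLCP$ call this yields the target $\Oh(\log\log n)$ query time. The main obstacle is keeping each probe truly $\Oh(1)$ despite the one-sided nature of lex comparison against $P'$ on the upper flank of the band; pairing the hypothesized lex comparator with the kangaroo-jump LCP oracle from \cref{lem:kangaroo_k}, which is already part of the ambient small-$k$ framework, is what makes this go through.
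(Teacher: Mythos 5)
Your proposal follows essentially the same route as the paper's proof: one $\TreeLCP(\T'_i,P')$ query on the sampled trie, a case split on whether the returned node has string depth $|P'|$, narrowing to $\Oh(\log n)$-width windows flanked by adjacent sampled terminals, and binary search within those windows using $\Oh(1)$-time probes against terminals of $\T_i$ reconstructed on demand. The paper's inner claim spells out the $\Oh(|\T'_i|)$ auxiliary tables (per-node min/max terminal rank, predecessor/successor over sampled ranks, a $y$-fast-trie on children by first edge character, and a ``max terminal $\le$ path label'' field) that implement the ``local structure of $\T'_i$ at $v$'' you appeal to for locating the bracketing pair when $d<|P'|$, but these are routine and within budget. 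One small remark: you are right that a bare three-way lexicographic comparison does not by itself decide ``$P'$ is a prefix of $\T_i[j]$'' on the upper flank, and you repair this via \cref{lem:kangaroo_k}; the paper glosses over the same point, but since the hypothesized comparator is instantiated by \cref{lem:kangaroo_k} (which returns an LCP length and therefore subsumes both the lexicographic test and the prefix test) in the lemma's only use, this is a presentational nuance rather than a genuine gap.
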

\begin{proof}
Assume that terminals in $\T'_i$ are numbered the same as the original terminals from $\T_i$.
Whenever the locus of a $(\le k')$-modified suffix $P'$ of $P$ in $\T_i$ is to be computed, a $\TreeLCP(\T'_i,P')$ query is issued instead.
Let $v$ be the returned node. The remainder of the algorithm depends on if the string depth of $v$ is $|P'|$. If so, let $x_{\mathit{min}}$ and $x_{\mathit{max}}$ be the minimum and maximum number of a terminal in the subtree of $v$ in $\T'_i$. The desired sublist of numbers of terminals of $\T_i$ contains the interval $[x_{\mathit{min}} \dd x_{\mathit{max}}]$. Let $y_{\mathit{max}}$ and $z_{\mathit{min}}$ be the maximum number of a terminal in $\T'_i$ that is smaller than $x_{\mathit{min}}$ and the minimum number of a terminal that is greater than $x_{\mathit{max}}$, respectively. If $y_{\mathit{max}}$ exists, we need to further inspect terminals of $\T_i$ with numbers in $(y_{\mathit{max}}\dd x_{\mathit{min}})$. Similarly, if $z_{\mathit{min}}$ exists, we need to further inspect terminals of $\T_i$ with numbers in $(x_{\mathit{max}}\dd z_{\mathit{min}})$.

Now assume that the string depth $d$ of $v$ is smaller than $|P'|$. Let $x_{\mathit{left}}$ be the maximum number of a terminal in $\T'_i$ such that the represented modified suffix $S$ satisfies $S < P'[0 \dd d]$. If $x_{\mathit{left}}$ does not exist, we are done. Otherwise, let $x_{\mathit{right}}$ be the minimum number of a terminal in $\T'_i$ that is greater than $x_{\mathit{min}}$. Then we need to inspect terminals of $\T'_i$ with numbers in $(x_{\mathit{left}}\dd x_{\mathit{right}})$.

\begin{claim}
In each case, the required terminal numbers can be retrieved from $\T'_i$ in $\Oh(\log \log n)$ time using $\Oh(|\T'_i|)$ data stored for $\T'_i$.
\end{claim}
\begin{proof}
We store the following data for $\T'_i$:
(1) for each explicit node $v$, the minimal and maximal number of terminal in $v$'s subtree;
(2) for each terminal number in $\T'_i$, its predecessor and successor terminal number;
(3) for each explicit node $v$, the maximum terminal with label that does not exceed the path label of $v$;
(4) for each explicit node $v$, a predecessor data structure of~\cite{DBLP:journals/ipl/Willard83} on children of $v$ indexed by the first characters of their edges.

If the string depth of $v$ is $|P'|$, we compute $x_{\mathit{min}}$ and $x_{\mathit{max}}$ using (1) and $y_{\mathit{max}}$, $z_{\mathit{min}}$ using (2). Otherwise, $x_{\mathit{left}}$ is computed as follows. Using (4), we compute the child of $v$ along the maximum character smaller than $P'[d]$; if such a child exists, we report the maximal terminal in its subtree using (1). If it does not exist, we choose the terminal according to (3). The predecessor query takes $\Oh(\log \log n)$ time and the remaining queries take $\Oh(1)$ time. The space of the data structures (1)--(4) is $\Oh(|\T'_i|)$.
\end{proof}

In both cases, after $\Oh(\log \log n)$ time, we are left with $\Oh(1)$ intervals of terminal numbers of length $\Oh(\log n)$ to be inspected. We can retrieve the $j$th terminal of $\T_i$ in $\Oh(1)$ time and compare it lexicographically with $P'$ in $\Oh(1)$ time. Thus, we can use binary search to find a subinterval of a given interval of terminals that corresponds to terminals whose prefix is $P'$. This requires just $\Oh(\log \log n)$ time.
\end{proof}

We apply \cref{lem:utility} to tries $\T'_{\kappa,j}$ and tries after sampling $\T''_{\kappa,j}$.
By \cref{lem:claim}, we can compute the $j$th terminal of $\T'_{\kappa,j}$ in $\Oh(1)$ time using $\Oh(n \mu^h \log^2 \mu /\log n)$ space. By \cref{lem:kangaroo_k}, we can compare any ($\le k'$)-modified suffix $P'$ of $P$ with any ($\le k$)-modified suffix $S$ of $T$ in $\Oh(1)$ time using just $\Oh(n)$ space. With \cref{lem:utility}, we can compute an interval of terminals of $\T'_{\kappa,j}$ that have $P'$ as a prefix. This gives a total of $\Oh(m^{k-h} \log m \log \log n)$ time. Then all occurrences from the interval can be reported in $\Oh(1)$ time per occurrence, using \cref{lem:claim}.

\medskip
With both improvements, we obtain \cref{thm:sk1}. This, in turn, yields the trade-off data structure.

\thmtwo*
\begin{proof}
Recall that the index of \cref{thm:sk1} takes $\Oh(n\mu^h \log^2 \mu/\log n)$ space and answers queries for patterns of length $m \le \mu$ in $\Oh(m^{k-h} \log m \log\log n + \occ)$ time.
The index of \cref{thm:long} takes $\Oh(n+ n \log^{k+\eps} n / \gamma)$ space and answers queries for patterns of length $m \ge (k+1)\gamma$ in $\Oh(\log^k n \log \log n + m + \occ)$ time.

Assume first that $k$ is even.
We use the data structure of \cref{thm:sk1} for $\mu = \floor{ \log^{\frac{2k+2}{k+2}} n }$ with $h=k/2$ and the data structure for long patterns (\cref{thm:long}) for $\gamma=\mu/(k+1)$.
The former uses
$$\Oh(n \log^{\frac{k(k+1)}{k+2}-1} n \cdot (\log\log n)^2) = \Oh(n \log^{k-2+\frac{2}{k+2}} n \cdot (\log \log n)^2) \subset \Oh(n \log^{k-2+\eps+\frac{2}{k+2}} n)$$
space and answers queries in $\Oh(\log^{k-1+\frac{2}{k+2}} n (\log \log n)^2 + \occ) \subset \Oh(\log^k n + \occ)$ time.
The latter uses $\Oh(n \log^{k-2+\eps+\frac{2}{k+2}} n)$ space and answers queries in $\Oh(\log^k n \log \log n + m + \occ)$ time.
The bounds for even $k$ follow.

For odd $k$, we use the data structure of \cref{thm:sk1} for $\mu = \floor{ \log^{\frac{2k}{k+1}-\eps'} n }$ with $\eps'=\frac{2}{k+1}\eps$ and $h=(k-1)/2$ and the data structure for long patterns (\cref{thm:long}) for $\gamma=\mu/(k+1)$.
The former uses
$$\Oh(n \log^{\frac{k(k-1)}{k+1}-1-\frac{k-1}{2}\eps'} n (\log\log n)^2) \subset \Oh(n \log^{k-3+\eps+\frac{2}{k+1}} n (\log \log n)^2) \subset \Oh(n \log^{k-2+\eps+\frac{2}{k+1}} n)$$
space and answers queries in $\Oh(\log^{k-\frac{k+1}{2}\eps'} n (\log \log n)^2 + \occ)=\Oh(\log^{k-\eps} n + \occ)$ time.
The latter uses $\Oh(n \log^{k-2+\eps'+\frac{2}{k+1}} n)\subset\Oh(n \log^{k-2+\eps+\frac{2}{k+1}} n)$ space; query time is $\Oh(\log^k n \log \log n + m + \occ)$.
The bounds for odd $k$ follow.
\end{proof}

\section{Errata Tree for Hamming Distance}\label{sec:errata}
The data structure for $k$-mismatch indexing of Cole et al.~\cite{DBLP:conf/stoc/ColeGL04} 
actually solves an ``all-to-all'' problem, in which we are given $x$ suffixes $R_1,R_2,\ldots,R_x$ of the text $T$, and for a query with pattern $P$, given a suffix $P'$ of $P$ we are to report all $i \in [1 \dd x]$ such that $P'$ matches a prefix of $R_i$ with up to $k$ substitutions. It is assumed that the suffix tree of the text $T$ is accessible. The next theorem summarizes the complexity of the data structure.

\begin{theorem}[\cite{DBLP:conf/stoc/ColeGL04}]\label{thm:kerrata}
Let $T$ be a text of length $n$. The $k$-errata tree for Hamming distance for $x$ suffixes $R_1,\ldots,R_x$ of $T$ uses space $\Oh(n+x \log^k x)$ and, for a pattern $P$ of length $m$, after $\Oh(m)$ time preprocessing, for any suffix $P'$ of $P$ can compute the set $\{i \in [1 \dd x]\,:\,\delta_H(P',R_i[0 \dd |P'|)) \le k\}$ in $\Oh(\log^k x\log\log x + \occ)$ time, where $\occ$ is the size of the set.
\end{theorem}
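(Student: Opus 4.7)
The plan is to recap the classical \emph{errata-tree} construction of~\cite{DBLP:conf/stoc/ColeGL04} and argue that its components can be realised via the tools already set up in this paper. I would build a hierarchy of compact tries at levels $0,1,\ldots,k$, where level $0$ is the sparse suffix tree $\T^{(0)}$ of $R_1,\ldots,R_x$ (built as a subgraph of the suffix tree of $T$, so its edge labels are fragments of $T$ storable in $\Oh(1)$ space each). For $\kappa\ge 1$, level $\kappa$ is a forest of compact tries of ($\le \kappa$)-modified suffixes of $T$, obtained from level $\kappa-1$ by heavy-path-decomposing every tree at the previous level.

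For each heavy path $H=v_0\to\cdots\to v_h$ at level $\kappa-1$, I would create two auxiliary compact tries. The \emph{node-based group tree} indexes the terminals of all subtrees hanging off the light children of $v_0,\ldots,v_h$, while the \emph{heavy-path group tree} indexes all terminals below $v_0$. In both, the strings are appropriately shifted and a fresh substitution is inserted so that querying them corresponds to forcing a single extra substitution either on a light edge leaving $H$ or along the heavy edge at some depth. A standard heavy-path accounting then shows that every original terminal is carried into $\Oh(\log x)$ group trees per level, so level $\kappa$ has $\Oh(x \log^\kappa x)$ terminals. Since each edge label can be stored as an $\Oh(\kappa)=\Oh(1)$-sized descriptor of a $\kappa$-modified fragment of $T$, the total space is $\Oh(n + x \log^k x)$, with the $\Oh(n)$ summand contributed by the suffix tree of $T$ that also powers the $\TreeLCP$ infrastructure (\cref{lem:modifiedLCP}) on every level.

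Given a suffix $P'$ of $P$, I would trace it recursively through the hierarchy. At a state $(\T,v,P'')$ at level $\kappa$ with mismatch budget $k-\kappa$, I perform an unrooted $\TreeLCP_v(\T,P'')$ in $\Oh(\log\log x)$ time (\cref{lem:modifiedLCP}). If $P''$ is fully absorbed, I dump the terminals of the reached subtree and translate them back to their original indices. Otherwise, the next substitution must resolve the current divergence, so I branch into either the node-based or the heavy-path group tree for the heavy path containing the divergence node, decrement the budget, and continue at level $\kappa+1$ with $P''$ advanced past the substituted position. A standard analysis of heavy-path decomposition bounds the recursion tree by $\Oh(\log^k x)$ leaves, each performing a single $\Oh(\log\log x)$-time query; together with the $\Oh(\occ)$ cost of reporting, this gives the claimed $\Oh(\log^k x\log\log x+\occ)$ query time after the $\Oh(m)$-time preprocessing of $P$ inherited from \cref{lem:modifiedLCP}.

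The main obstacles I anticipate are (i) obtaining the $\Oh(n)$ additive (rather than multiplicative) space term, which requires every $\kappa$-modified suffix indexed on level $\kappa$ to be stored as a pointer into $T$ plus an $\Oh(\kappa)$-sized list of substitutions so that no characters are ever materialised; and (ii) ensuring each $k$-mismatch match is reported exactly once, which I would enforce by always branching on the \emph{leftmost} still-uncharged mismatch, making the sequence of branches a deterministic function of the pair $(P',R_i)$.
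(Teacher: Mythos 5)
The paper does not actually prove \cref{thm:kerrata}: it cites the result from Cole, Gottlieb, and Lewenstein and, after the statement, only recalls the relevant \emph{structure} of the $k$-errata tree (levels, heavy paths, substitution trees, group trees) and proves the auxiliary \cref{lem:kerrata_copies}. So you are reconstructing CGL's proof, and while your sketch has the right skeleton (levels $0,\dots,k$, heavy-path decomposition, group trees, modified suffixes stored as $\Oh(\kappa)$-sized descriptors into $T$), the query algorithm as you describe it is broken.

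The sentence ``the next substitution must resolve the current divergence'' is false, and the consequence — ``branch into either the node-based or the heavy-path group tree for the heavy path containing the divergence node'' — produces only a constant number of recursive calls per level, which could never yield the $\Oh(\log^k x)$ recursion leaves you then invoke. More importantly it is incomplete: the leftmost mismatch between $P'$ and an admissible $R_i$ need not be at the locus where $P'$ diverges from the tree. It sits at the first position $j$ with $P'[j]\ne R_i[j]$, and at that position the tree may well have a sibling edge labelled $P'[j]$, so $P'$'s walk continues past $j$ while $R_i$'s path branches off. Such a $j$ lies on one of the $\Oh(\log x)$ heavy paths that $P'$ traverses \emph{before} reaching the divergence locus. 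The CGL query therefore issues group-tree probes for every heavy path encountered on the root-to-divergence path (using, for each one, the position where the walk leaves that heavy path), and it is precisely this $\Oh(\log x)$-ary branching per level — bounded by the weighted-search-tree telescoping that also powers \cref{lem:kerrata_copies} — that yields the $\Oh(\log^k x\log\log x)$ query time. As written, your algorithm misses every match whose first mismatch lies on an earlier heavy path, and your deduplication rule (``always charge the leftmost uncharged mismatch'') cannot be enforced because the branching you propose never even examines those candidate positions. A secondary issue is that you merge all type-(a) substitution trees of a heavy path into a single ``node-based'' group tree and drop the weighted-search-tree hierarchy; CGL keep the type-(a) group trees per node (merged over characters) and organize both types hierarchically, which is what makes the telescoping tight and lets each probe cost only $\Oh(\log\log x)$.
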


Let us list basic properties of the $k$-errata tree.
The $k$-errata tree is a collection of compact tries. Each compact trie is assigned a level in the data structure. The compact trie at level 0 is the compact trie of the suffixes $R_1,\ldots,R_x$. At each level $i \in [1 \dd k]$, each compact trie is of one of the following two types:
\begin{enumerate}[label=(\arabic*)]
\item\label{it1} a sparse suffix tree of $T$ attached to a root with a wildcard-character edge; or 
\item\label{it2} a rooted path $\pi$ with several sparse suffix trees of $T$ attached at different locations of $\pi$.
\end{enumerate}
The rooted path $\pi$ is called the \emph{main path} of the compact trie. Each main path has a string label being a suffix of $T$. For each sparse suffix tree in the $k$-errata tree (in particular, for each main path), an unrooted $\TreeLCP$ data structure is stored (\cref{thm:unrootedLCP}).

A $(k-1)$-errata tree is simply composed of compact tries at levels $0,1,\ldots,k-1$ of a $k$-errata tree.

Let us describe in more detail how compact tries at subsequent levels of the $k$-errata tree are formed. 
Let $\T$ be a compact trie at level $i \in [0 \dd k)$. We use a heavy-light decomposition of $\T$, that is, we classify explicit nodes of $\T$ as light and heavy. The root is always light. A non-root node is called heavy if the size of its subtree (measured as the number of terminals in the subtree) is maximal among all the subtrees of its siblings (breaking ties arbitrarily); otherwise the node is light. A heavy path is a path that starts at a light node and goes down through heavy nodes until a heavy leaf is reached.

Let $\pi$ be a heavy path of $\T$. A subtree of $\T$ rooted at an explicit child $u$ of an explicit node $v$ from $\pi$, together with the compact edge from $v$ to $u$, is called an \emph{off-path subtree} if $u$ is not on $\pi$. The off-path subtree such that the first character on the edge from $v$ to $u$ is $a$ is denoted as $\T_{v,a}$.

The following types of \emph{substitution trees} are formed from off-path subtrees:
\begin{enumerate}[label=(\alph*)]
\item\label{a} $\T^s_{v,a}$ which is $\T_{v,a}$ with the first character substituted from $a$ to a special character $\psi$
\item\label{b} $\T^s_v$ which is the merge of all substitution trees $\T^s_{v,a}$, over all characters $a$ different from the next character $b$ after $v$ on the heavy path $\pi$, with character $\psi$ replaced by character $b$.
\end{enumerate}
Here, a merge of subtrees is defined naturally as a compact trie representing all terminals from subsequent subtrees.

Finally, substitution trees are organized into group trees. Specifically, substitution trees of type~\ref{a} are merged into group trees for each node $v$ separately, and substitution trees of type \ref{b} are merged over a whole heavy path. For a collection of substitution trees, a group tree being the merge of all substitution trees is formed, and then it is subdivided into smaller group trees consisting of subsets of substitution trees. The order of merging is lexicographic by character $a$ in type~\ref{a} and by the string depth of node $v$ in type~\ref{b}. Let $\T_1,\T_2,\ldots,\T_r$ be all substitution trees being merged into group trees, for a given node $v$ in case \ref{a} or for a given path $\pi$ in case \ref{b}. Then the group trees are organized as in weighted search trees so that for a given substitution tree $\T_j$, the sizes (i.e., numbers of terminals) of group trees that contain $\T_j$ decrease top to bottom, and halve at least in every second step. The group trees become compact tries at level $i+1$ of the data structure.

Technically, for substitution trees of type \ref{b}, in order to make the merging meaningful, each substitution tree is assumed to contain the path from $v$ to the root of the heavy path $\pi$. Then the substitution tree (hence, the resulting group tree) is actually a compact trie of 1-modified suffixes of $T$. The modification takes place only on the heavy path $\pi$, so the trees hanging from the main path in the group tree are sparse suffix trees of $T$. We treat the main path as heavy in the next step of recursion.


Let $\T$ be a compact trie in the $k$-errata tree. Each terminal in $\T$ can be assigned a \emph{label} $\ell$ being a position in $[0 \dd n)$. (Actually, it is a position in $\{n-|R_j|\,:j \in [1\dd x]\}$.) 
If a node $v$ in $\T$ is computed as a result of a query, then for every terminal in the subtree of $v$, the query algorithm reports a $k$-mismatch occurrence at position equal to the terminal's label.

We prove the following lemma using the recursive construction of $k$-errata trees.

\begin{lemma}\label{lem:kerrata_copies}
In a $k$-errata tree built from $x$ suffixes of $T$ for $k=\Oh(1)$, each terminal label occurs in compact tries at most $\Oh(\log^k x)$ times overall.
\end{lemma}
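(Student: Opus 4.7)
I would proceed by induction on the level $i\in[0\dd k]$, showing that every terminal label $\ell$ occurs at most $\Oh(\log^i x)$ times across the compact tries at level $i$. Summing the geometric progression over the $k+1$ levels then yields the claimed $\Oh(\log^k x)$ bound. The base case $i=0$ is trivial: the only compact trie at level $0$ is the compact trie of $R_1,\ldots,R_x$, where all labels are distinct.

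For the inductive step, I would fix a single occurrence of label $\ell$ at level $i$, corresponding to some terminal $t$ in a compact trie $\T$, and bound by $\Oh(\log x)$ the number of level-$(i+1)$ terminals with label~$\ell$ produced from this particular copy of $t$. To this end, consider the heavy--light decomposition of $\T$ and the sequence of heavy paths $\pi_0,\pi_1,\ldots,\pi_{L-1}$ traversed by the root-to-$t$ path, where $\pi_{j-1}$ is left at a node $v_j$ via an off-path character $a_j$. Let $c_j:=|\T_{v_j,a_j}|$ for $j\ge 1$, and set $c_0:=|\T|$. The standard light-edge property gives $c_j\le c_{j-1}/2$, so the $c_j$ form a geometric sequence and $L=\Oh(\log x)$. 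The level-$(i+1)$ tries containing $t$ are group trees of two kinds: in the type-(a) scheme, $\T^s_{v_j,a_j}$ appears in $\Oh(\log(W_{v_j}/c_j))$ group trees with $W_{v_j}=\sum_{a}|\T^s_{v_j,a}|$, while in the type-(b) scheme, $\T^s_{v_j}$ (of size at least $c_j$) appears in $\Oh(\log(W_{\pi_{j-1}}/|\T^s_{v_j}|))$ group trees with $W_{\pi_{j-1}}=\sum_{v\in\pi_{j-1}}|\T^s_v|$.

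The key geometric fact I would establish is that both $W_{v_j}$ and $W_{\pi_{j-1}}$ are at most $c_{j-1}$. Since $v_j$ lies on $\pi_{j-1}$, whose topmost (light) node is the $a_{j-1}$-child of $v_{j-1}$, the entire subtree rooted at $v_j$ sits inside $\T_{v_{j-1},a_{j-1}}$, which bounds $W_{v_j}$; for $W_{\pi_{j-1}}$, the off-path collections attached to different nodes of $\pi_{j-1}$ are pairwise disjoint and disjoint from the heavy-path continuation, so their sizes telescope to at most the size of $\pi_{j-1}$'s light root's subtree, which is again $c_{j-1}$. Given this, each logarithm is dominated by $\log(c_{j-1}/c_j)$, and summing over $j=1,\ldots,L$ telescopes to $\log(c_0/c_L)\le\log|\T|=\Oh(\log x)$ separately for each scheme. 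Thus each occurrence at level $i$ yields $\Oh(\log x)$ occurrences at level $i+1$, and the induction closes with an overall constant factor depending only on $k=\Oh(1)$.

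I expect the main obstacle to be the nesting/telescoping bound $W_{\pi_{j-1}}\le c_{j-1}$, as this requires tracking how successive heavy paths are nested inside previous off-path subtrees and ruling out that the type-(b) collection for a single heavy path could be much larger than any individual off-path subtree along it. Everything else is a double geometric telescoping across the two kinds of substitution trees, consistent with the known $\Oh(x\log^k x)$ total size of the $k$-errata tree.
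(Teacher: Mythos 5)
Your proof is correct and follows essentially the same approach as the paper's: induction over levels, with the per-level $\Oh(\log x)$ bound obtained by telescoping $\log(c_{j-1}/c_j)$ across the $\Oh(\log x)$ heavy paths, relying on the weighted-search-tree organization of the group trees. The only cosmetic differences are that the paper first records the observation that each label appears at most once per compact trie (which you sidestep by counting terminal occurrences directly), and that the paper folds both group-tree types into a single $\Oh(1+\log|\T_{C_j}|-\log|\T_{C_{j+1}}|)$ bound whereas you telescope each type separately and make the inequalities $W_{v_j},W_{\pi_{j-1}}\le c_{j-1}$ explicit.
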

\begin{proof}
Let $\ell$ be a terminal label in the compact trie at level 0. If $\T$ is a compact trie at level $i$ and contains a terminal with label $\ell$, then by construction every group tree formed from $\T$ contains at most one terminal with label $\ell$. By induction, every terminal label is present in every compact trie of a $k$-errata tree at most once.

Let $\T$ be a compact trie at level $i \in [0 \dd k)$ in the $k$-errata tree that contains a terminal with label $\ell$. For a heavy path $C$ in $\T$, by $\T_C$ we denote the subtree of $\T$ rooted at the root of $C$. Let $C_1,C_2,\ldots,C_t$ be all heavy paths, top-down, visited on the path from the root of $\T$ to the terminal with label $\ell$. The organization of group trees implies that, for every $j \in [1 \dd t)$, the terminal label $\ell$ occurs in $\Oh(1 + \log |\T_{C_j}| - \log |\T_{C_{j+1}}|)$ group trees (of both types) formed from heavy path $C_j$; for $j=t$, in $\Oh(\log |\T_{C_t}|)$ group trees. This telescopes to $\Oh(\log |\T_{C_1}|)=\Oh(\log |\T|)=\Oh(\log x)$ group trees, i.e., $\Oh(\log x)$ compact tries at level $i+1$ formed from $\T$ that contain a terminal with label $\ell$. Overall, terminal label $\ell$ is present in $\Oh(\sum_{i=0}^k \log^i x)=\Oh(\log^k x)$ compact tries.
\end{proof}

A $k$-mismatch query is asked for a pattern $P$ for the compact trie at level 0 and with a budget of $k$ mismatches and is handled recursively, level by level. A query in a compact trie $\T$ at level $i$ for a suffix $P'$ of $P$ with budget $k'$ reduces to possibly several queries for suffixes of $P'$ in compact tries at level $i+1$ and budget at most $k'-1$. In case a $k$-mismatch occurrence of $P$ in $T$ is actually a $k'$-mismatch occurrence for some $k'<k$, it is reported in a compact trie at level $\le k'$. In this case, a node $v$ in the compact trie is computed and all terminals in the subtree of $v$ are reported. In total, $\Oh(\log^k x)$ such nodes are returned in a query of \cref{thm:kerrata}.

When the query algorithm is called recursively for a compact trie $\T$, a suffix $P'$ of the pattern $P$, and budget $k'$, its goal is to report labels of all terminals in $\T$ whose length-$|P'|$ prefix is at Hamming distance at most $k'$ from $P'$. Some of these terminals are located directly by traversing $\T$; others are identified in recursive calls.

\section{\texorpdfstring{$k$}{k}-Mismatch Index for Long Patterns}\label{sec:long}
For a string $U$, an integer $p \in [1 \dd |U|]$ is called a \emph{period} of $U$ if $U[i]=U[i+p]$ holds for all $i \in [0 \dd |U|-p)$. The smallest period of $U$ is denoted as $\per(U)$.

We use the notion of string synchronizing sets of Kempa and Kociumaka~\cite{DBLP:conf/stoc/KempaK19}. For a string $T$ of length $n$ and a positive integer $\tau \leq n/2$, a set $A\subseteq [1\dd n-2\tau+1]$ is a \emph{$\tau$-synchronizing set} of $T$ if it satisfies the following two conditions:
\begin{enumerate}
    \item\label{item1} Consistency: If $T[i\dd i+2\tau)$ matches $T[j\dd j+2\tau)$, then $i\in A$ if and only if $j\in A$.
    \item\label{item2} Density: For $i\in [1\dd n-3\tau +2]$, $A\cap [i\dd i+\tau)=\emptyset$ if and only if $\per(T[i\dd i+3\tau -1))\leq \tau/3$.
\end{enumerate}
Let us fix a $\tau$-synchronizing set $A$. If a position $i$ belongs to $A$, we call $T[i \dd i+2\tau)$ a \emph{$\tau$-synchronizing fragment} of $T$.

\begin{theorem}[{\cite[Proposition 8.10]{DBLP:conf/stoc/KempaK19}}]\label{thm:sss}
For a string $T$ of length $n$ and $\tau\le n/2$, there exists a $\tau$-synchronizing set of $T$ of size $\Oh(n/\tau)$.
\end{theorem}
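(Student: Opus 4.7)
My plan is to construct the $\tau$-synchronizing set $A$ explicitly as the preimage of $1$ under a function $\phi\colon\Sigma^{2\tau}\to\{0,1\}$ defined on length-$2\tau$ substrings of $T$, setting
\[
A=\{i\in[1\dd n-2\tau+1]\,:\,\phi(T[i\dd i+2\tau))=1\}.
\]
This design automatically satisfies consistency, since membership of $i$ in $A$ depends only on the matched $2\tau$-window. To pick $\phi$, I would fix an arbitrary strict total order $\prec$ on $\Sigma^{2\tau}$ and, for each length-$(3\tau-1)$ substring $U$ of $T$ with $\per(U)>\tau/3$, set $\phi(U[j^*\dd j^*+2\tau))=1$ for the unique $j^*\in[0\dd\tau)$ minimising $U[j\dd j+2\tau)$ under $\prec$; every other length-$2\tau$ string receives $\phi=0$ by default.

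Two of the three required guarantees then fall out easily. The forward direction of density is immediate from the definition: every non-periodic $i\in[1\dd n-3\tau+2]$ contributes some position $j^*\in[i\dd i+\tau)$ to $A$, so $A\cap[i\dd i+\tau)\neq\emptyset$. The size bound $|A|=\Oh(n/\tau)$ follows from a standard sliding-window minimum argument: each $i\in A$ realises a local $\prec$-minimum inside a window of length at most $\tau$, so distinct minimum values must be separated by at least $\tau$ positions, while plateaus of equal minima can be charged to a single length-$\tau$ segment.

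The main obstacle is the reverse half of density: for periodic $i$ with $\per(T[i\dd i+3\tau-1))\le\tau/3$, I must ensure $\phi(T[j\dd j+2\tau))=0$ for every $j\in[i\dd i+\tau)$. This is not automatic, because the same $2\tau$-string $S$ may sit both inside a highly periodic region of $T$ and inside a non-periodic $(3\tau-1)$-context elsewhere in $T$, in which case the latter selection would force $\phi(S)=1$. I would resolve this by coupling the rule with the runs structure of $T$~\cite{DBLP:conf/focs/KolpakovK99}: a $2\tau$-string $S$ occurring inside a run of period $p\le\tau/3$ inherits period $p$ (as $2\tau>p$), and Fine--Wilf forces every other occurrence of $S$ in $T$ to sit inside an extension of a run with the same root and period. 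Refining $\phi$ to flag $S$ only when no occurrence of $S$ in $T$ lies inside a run of period $\le\tau/3$ enforces the reverse density direction by construction, and a Fine--Wilf argument applied to the $\tau$ overlapping $2\tau$-windows inside any non-periodic $U$ shows that at least one legal candidate for $j^*$ still exists, preserving the forward direction. Combining the refined rule with the size bound above yields a $\tau$-synchronizing set of size $\Oh(n/\tau)$.
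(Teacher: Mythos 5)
The paper does not prove this statement itself; it simply cites Proposition~8.10 of Kempa and Kociumaka~\cite{DBLP:conf/stoc/KempaK19}, so the comparison is against that cited construction. Your high-level idea is in the right family (a consistent local-minimum rule over short windows), but there is a genuine gap in the size bound, and a definitional ambiguity that the refinement introduces.

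The size bound does not follow from the sliding-window-minimum argument you sketch. Because $\phi$ is a function on $\Sigma^{2\tau}$, once a string $S$ is flagged because it is $\prec$-minimal inside \emph{some} non-periodic $(3\tau-1)$-context, every occurrence of $S$ anywhere in $T$ becomes a member of $A$---including occurrences where $S$ is nowhere near locally $\prec$-minimal. So the claim that ``each $i\in A$ realises a local $\prec$-minimum inside a window of length at most $\tau$'' is false: it realises a local minimum at the position $j^*$ that caused $S$ to be flagged, not at $i$. What the sliding-minimum counting controls is the number of positions $j^*$ that serve as local minima (equivalently, the number of distinct flagged strings, up to periodicity collisions), not the number of positions whose $2\tau$-window happens to equal a flagged string. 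You would need a separate argument that a flagged string cannot recur too densely, and the natural tool (periodicity from nearby equal $2\tau$-windows) only yields a period smaller than $\tau$, not one $\le\tau/3$, so the illegality filter does not close the gap. The construction in~\cite{DBLP:conf/stoc/KempaK19} avoids this issue by assigning identifiers to length-$\tau$ (not $2\tau$) windows and declaring $i\in A$ when the minimum identifier over positions $[i\dd i+\tau]$ is attained at an \emph{endpoint}; this makes membership a genuinely local predicate of $T[i\dd i+2\tau)$, and the size bound is proved by directly counting endpoint minima.

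Separately, after you refine $\phi$ to flag only windows with $\per>\tau/3$, the description of $j^*$ becomes inconsistent: you first define $j^*$ as the global $\prec$-minimiser over $j\in[0\dd\tau)$, then appeal to the existence of a ``legal candidate for $j^*$''. If $j^*$ remains the global minimiser and it happens to be illegal, nothing in that context is flagged and forward density fails; if instead $j^*$ ranges only over legal windows, the rule must be restated that way, and the size-bound argument would then concern minima over a position-dependent subset of windows, which is even further from a clean sliding-minimum argument. (The Fine--Wilf step showing that at least one legal window exists in a non-periodic context is fine, and the reverse-density direction after the refinement is also correct.)
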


We note that the synchronizing set can be constructed in $\Oh(n)$ time if $T$ is over an integer alphabet.

The case that a long substring of a string does not contain synchronizing positions is covered by considering periodic substrings of $T$ called \emph{runs}.
String $U$ is called \emph{periodic} if $2 \cdot \per(U) \le |U|$. A \emph{run} in $T$ is a periodic substring $T[i \dd j]$ that is maximal, i.e.\ such that $T[i-1] \ne T[i-1+p]$ and $T[j+1] \ne T[j+1-p]$ for $p=\per(T[i \dd j])$ provided that the respective positions exist. All runs in a string can be computed in linear time~\cite{DBLP:journals/siamcomp/BannaiIINTT17,DBLP:conf/icalp/Ellert021,DBLP:conf/focs/KolpakovK99}. A run $R$ is called a \emph{$\tau$-run} if $|R| \ge 3\tau-1$ and $\per(R) \le \tau/3$. We use the following lemma.

\begin{lemma}[{\cite[Section 6.1.2]{DBLP:conf/stoc/KempaK19}}, {\cite[Lemma 10]{DBLP:conf/esa/Charalampopoulos21}}]\label{lem:tauruns}
A string of length $n$ contains $\Oh(n/\tau)$ $\tau$-runs, for $\tau \in [1 \dd n]$.
\end{lemma}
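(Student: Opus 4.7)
The plan is to show that the cores of distinct $\tau$-runs are pairwise disjoint and then to deduce the bound from a trivial length inequality. For each $\tau$-run $R = T[a \dd b]$, I would define its \emph{core} as the fragment $T[a+\tau \dd b-\tau+1)$, whose length is $|R| - 2\tau + 1 \ge \tau$ by the defining condition $|R| \ge 3\tau - 1$. Pairwise disjoint cores of length at least $\tau$ have total length at most $n$, immediately yielding at most $n/\tau = \Oh(n/\tau)$ $\tau$-runs.

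To establish disjointness, I would argue by contradiction. Assume distinct $\tau$-runs $R_1 = T[a_1 \dd b_1]$ and $R_2 = T[a_2 \dd b_2]$ with shortest periods $p_1, p_2 \le \tau/3$ and $a_1 \le a_2$ have overlapping cores. A direct calculation shows that the overlap $O = T[a_2 \dd \min(b_1, b_2)]$ then has length at least $2\tau \ge p_1 + p_2$, so Fine--Wilf implies that $O$ has period $d := \gcd(p_1, p_2)$. I would split into cases based on whether $p_1 = p_2$. If $p_1 = p_2 = p$, the fact that $|O| \ge p$ lets me glue $R_1$ and $R_2$ into the single fragment $T[a_1 \dd \max(b_1, b_2)]$ of period $p$, and the maximality of both runs then forces $R_1 = R_2$, contradicting distinctness. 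If $p_1 \neq p_2$, WLOG $p_1 < p_2$, so $d \le p_1 < p_2$ and $d \mid p_2$. Since $|O| \ge 2p_2$, the length-$p_2$ prefix of $R_2$ lies inside $O$ and inherits the period $d$; combined with $R_2$ having period $p_2$ (a multiple of $d$), this propagates to show that $d$ is a period of all of $R_2$, contradicting the minimality of $p_2$. The symmetric subcase $p_2 < p_1$ is handled analogously by transporting the $d$-periodicity from a length-$p_1$ window of $O$ to the fundamental block of $R_1$, using that the property of having period $d$ is preserved under cyclic rotation whenever $d \mid p_1$.

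The main obstacle is the cross-period case: upgrading the period $d$ of a length-$p_j$ fragment lying inside the overlap $O$ to a period of the entire run $R_j$, so as to contradict $p_j$ being its shortest period. I expect the key ingredients to be the divisibility $d \mid p_j$ (inherent in $d = \gcd(p_1, p_2)$), the fact that every character of a run is determined by its position modulo the period, and, in the subcase where the fundamental block of $R_j$ is not covered by $O$ as a prefix, the cyclic-rotation invariance of the property of having period $d$ when $d \mid p_j$.
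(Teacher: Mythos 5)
The paper cites this lemma from \cite{DBLP:conf/stoc/KempaK19} and \cite{DBLP:conf/esa/Charalampopoulos21} without reproducing a proof, so there is no in-paper argument to compare against; I will evaluate your proposal on its own terms.

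Your approach---assign each $\tau$-run a middle ``core'' of length $\Theta(\tau)$, show cores of distinct $\tau$-runs are disjoint via Fine--Wilf and run maximality, then bound the count by $n/\Theta(\tau)$---is the standard argument for this fact, and the structure of your case analysis (glue two runs of equal period into one maximal periodic fragment; otherwise push $d=\gcd(p_1,p_2)<\max(p_1,p_2)$ out to be a period of the run with the larger shortest period) is correct. The two ingredients you flag as potential obstacles are indeed the right ones, and both go through: the overlap $O$ contains a full fundamental block of $R_j$ because $|O|\ge 2\tau \ge 6\max(p_1,p_2)$, and cyclic rotations of a length-$p_j$ string preserve period $d$ whenever $d\mid p_j$ (the string is a power of a length-$d$ block, and rotating a power rotates the block). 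Two small slips worth noting. First, under the paper's half-open convention, $T[a+\tau\dd b-\tau+1)$ has length $|R|-2\tau$, not $|R|-2\tau+1$, so the core only has length $\ge \tau-1$; this still gives $\Oh(n/\tau)$ after handling $\tau=1$ separately (trivial, since a string has $\Oh(n)$ runs in total), or you can shift the core by one to $T[a+\tau-1\dd b-\tau+1)$. Second, for the gluing step in the $p_1=p_2$ case you should state explicitly that the overlap of the \emph{runs} (not just the cores) has length $\ge p$, which you do implicitly via $|O|\ge 2\tau\ge 6p$; making that chain explicit tightens the write-up.
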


We also use the notion of \emph{misperiods} that originates from Charalampopoulos et al.~\cite{DBLP:journals/jcss/Charalampopoulos21}. We say that position $a$ in string $S$ is a \emph{misperiod} with respect to a substring $S[i \dd j)$ if $S[a] \ne S[b]$ where $b$ is the unique position such that $b \in [i \dd j)$ and $(j-i)$ divides $(b-a)$.
We define the set $\LeftMis_k(S,i,j)$ as the set of $k$ maximal misperiods that are smaller than $i$ and $\RightMis_k(S,i,j)$ as the set of $k$ minimal misperiods that are at least $j$.
Each of the sets can have less than $k$ elements if the corresponding misperiods do not exist.
We also define $\Mis_k(S,i,j)=\LeftMis_k(S,i,j) \cup \RightMis_k(S,i,j)$ and $\Mis(S,i,j) = \Mis_{|S|}(S,i,j)$.

\begin{lemma}[{\cite[Lemma 13]{DBLP:journals/jcss/Charalampopoulos21}}]\label{lemma:Bringmann}
  Assume that $|U|=|V|$, $\delta_H(U,V) \le k$, and that $U[i \dd j) = V[i \dd j)$.
  Let
    $I=\Mis_{k+1}(U,i,j) \text{ and } I'=\Mis_{k+1}(V,i,j).$
  If $I \cap I' = \emptyset$, then $\delta_H(U,V) = |I| + |I'|$, $I=\Mis(U,i,j)$, and $I'=\Mis(V,i,j)$.
\end{lemma}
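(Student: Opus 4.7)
The plan is to deduce both conclusions from one local observation followed by a counting argument on misperiods. Fix any position $a \notin [i \dd j)$, and let $b \in [i \dd j)$ denote the unique position with $(j-i) \mid (b-a)$. Since $U[b] = V[b]$, the position $a$ is a misperiod of $U$ with respect to $U[i \dd j)$ iff $U[a] \neq V[b]$, and it is a misperiod of $V$ with respect to $V[i\dd j)$ iff $V[a] \neq U[b]$. Consequently, whenever $a$ is a misperiod of exactly one of $U, V$, we get $U[a] \neq V[a]$; and whenever $U[a] = V[a]$, the position $a$ is a misperiod of both $U$ and $V$ or of neither.

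The main step is to prove that every $a \in I$ satisfies $U[a] \neq V[a]$; the symmetric claim for $I'$ follows identically. Suppose, for contradiction, that $a \in \LeftMis_{k+1}(U, i, j)$ with $U[a] = V[a]$ (the $\RightMis$ case is analogous). By the observation, $a$ is then also a misperiod of $V$, and since $I \cap I' = \emptyset$ forces $a \notin \LeftMis_{k+1}(V, i, j)$, there must be at least $k+1$ misperiods of $V$ strictly between $a$ and $i$. Consider any such misperiod $c$. If $c$ is not a misperiod of $U$, the observation yields $U[c] \neq V[c]$, a mismatch. If $c$ is a misperiod of $U$, then $c > a$ together with $a \in \LeftMis_{k+1}(U, i, j)$ places $c$ in $\LeftMis_{k+1}(U, i, j) \subseteq I$; moreover, at most $k$ of the $\ge k+1$ misperiods of $V$ in $(a, i)$ can lie in $(c, i)$, so $c \in \LeftMis_{k+1}(V, i, j) \subseteq I'$ as well, contradicting $I \cap I' = \emptyset$. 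Hence all $k+1$ misperiods of $V$ in $(a, i)$ must be mismatches, which violates $\delta_H(U, V) \le k$.

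Having shown $I$ and $I'$ consist entirely of mismatch positions and are disjoint, we obtain $|I| + |I'| \le \delta_H(U, V) \le k$, so $|\LeftMis_{k+1}(U, i, j)|, |\RightMis_{k+1}(U, i, j)| \le k$. By definition, this strict inequality means these sets already enumerate all left/right misperiods of $U$, giving $I = \Mis(U, i, j)$ and, symmetrically, $I' = \Mis(V, i, j)$. Finally, any mismatch position $a$ lies outside $[i \dd j)$ and, by the contrapositive of the local observation, is a misperiod of $U$ or of $V$, hence belongs to $\Mis(U,i,j) \cup \Mis(V,i,j) = I \cup I'$; combined with the earlier lower bound, this yields $\delta_H(U, V) = |I| + |I'|$. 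The main obstacle is the double-counting argument in the second paragraph: one must track simultaneously that a candidate $c \in (a,i)$ which is a misperiod of both $U$ and $V$ lands in \emph{both} top-$(k+1)$ sets, which is precisely what forces the contradiction $c \in I \cap I'$.
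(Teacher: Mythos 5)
The paper cites this lemma from Charalampopoulos et al.\ without reproving it, so there is no in-paper proof to compare against; I am evaluating your argument on its own merits. Your overall strategy is sound: the local observation that (on positions outside $[i\dd j)$) a mismatch occurs exactly when the misperiod status of $U$ and $V$ disagree is the right tool, and the derivation in the last paragraph (from ``$I$, $I'$ are disjoint sets of mismatches'' to the three conclusions) is clean and correct.

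The weak spot is the justification inside the contradiction argument. You take an \emph{arbitrary} misperiod $c$ of $V$ in $(a,i)$ and assert that ``at most $k$ of the $\ge k+1$ misperiods of $V$ in $(a,i)$ can lie in $(c,i)$, so $c\in\LeftMis_{k+1}(V,i,j)$.'' This is simply false for an arbitrary such $c$: if there happen to be, say, $2k+2$ misperiods of $V$ in $(a,i)$ and $c$ is the smallest of them, then $(c,i)$ contains $2k+1>k$ misperiods of $V$ and $c\notin\LeftMis_{k+1}(V,i,j)$. The fix is small but necessary: do not quantify over all misperiods of $V$ in $(a,i)$; instead, let $c_1>\cdots>c_{k+1}$ be the $k+1$ \emph{maximal} misperiods of $V$ in $(a,i)$. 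Since $a$ is itself a misperiod of $V$ smaller than all the $c_\ell$, these $c_\ell$ are precisely the $k+1$ maximal misperiods of $V$ below $i$, i.e.\ $\{c_1,\ldots,c_{k+1}\}=\LeftMis_{k+1}(V,i,j)\subseteq I'$. Then for each $c_\ell$: if $c_\ell$ is not a misperiod of $U$, the observation gives a mismatch at $c_\ell$; if $c_\ell$ is a misperiod of $U$, then from $a\in\LeftMis_{k+1}(U,i,j)$ and $a<c_\ell<i$ you get $c_\ell\in\LeftMis_{k+1}(U,i,j)\subseteq I$, contradicting $I\cap I'=\emptyset$. Hence all $k+1$ positions $c_\ell$ are mismatches, contradicting $\delta_H(U,V)\le k$. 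This recovers exactly the conclusion you wanted, with the quantification tightened so that the membership in $\LeftMis_{k+1}(V,i,j)$ actually holds.

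One further cosmetic point: your sentence ``all $k+1$ misperiods of $V$ in $(a,i)$'' presumes there are \emph{exactly} $k+1$ of them, whereas you earlier only established ``at least $k+1$''; once you restrict to the top $k+1$ as above, this mismatch disappears. Everything else --- the symmetric treatment of $\RightMis$, the deduction $|\LeftMis_{k+1}(U,i,j)|\le k$ forcing $\LeftMis_{k+1}(U,i,j)$ to exhaust all misperiods below $i$, and the final equality $\delta_H(U,V)=|I|+|I'|$ via the contrapositive of the observation --- is correct.
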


We design a data structure that answers $k$-mismatch pattern queries with patterns of length $m \ge (k+1)\gamma$, for a specified integer $\gamma \ge 2$. We select two sets of positions in $T$, $A_1$ and $A_2$, called \emph{anchors}.
The set $A_1$ is a $\tau$-synchronizing set of $T$ for $\tau=\floor{\gamma/3}$. The set $A_2$ is a union of sets $\Mis_{k+1}(T,i,i+p)$ over all $\tau$-runs $T[i \dd j]$ in $T$ where $p=\per(T[i \dd j])$.

\begin{lemma}\label{lem:A1A2}
Let $T$ be a text of length $n$ and $\gamma \in [2 \dd \floor{n/(k+1)}]$. Then the sets of anchors $A_1$, $A_2$ for $T$ satisfy $|A_1| = \Oh(n/\gamma)$ and $|A_2|=\Oh(nk/\gamma)$.
\end{lemma}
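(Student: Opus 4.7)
The bound is a direct consequence of the two cited size estimates, so the plan is essentially to apply \cref{thm:sss} and \cref{lem:tauruns} with $\tau = \lfloor \gamma/3 \rfloor$ and then account for the $k+1$ misperiods per run. First, I would note that $\tau = \Theta(\gamma)$ (up to a constant factor) and that the hypothesis $\gamma \le \lfloor n/(k+1)\rfloor$ together with $\gamma\ge 2$ ensures $\tau \le n/2$, so \cref{thm:sss} is applicable.

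For $|A_1|$: by definition, $A_1$ is a $\tau$-synchronizing set of $T$, which by \cref{thm:sss} can be chosen of size $\Oh(n/\tau) = \Oh(n/\gamma)$. This part is a one-line appeal.

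For $|A_2|$: I would first observe that each individual set $\Mis_{k+1}(T,i,i+p)$ is, by the definition in the paragraph preceding the lemma, the union of $\LeftMis_{k+1}$ and $\RightMis_{k+1}$, each of cardinality at most $k+1$, so it contributes at most $2(k+1) = \Oh(k)$ anchors. Then I would invoke \cref{lem:tauruns} to bound the number of $\tau$-runs in $T$ by $\Oh(n/\tau) = \Oh(n/\gamma)$. Summing the per-run contribution over all $\tau$-runs yields $|A_2| \le \Oh(n/\gamma)\cdot \Oh(k) = \Oh(nk/\gamma)$, as required.

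There is no real technical obstacle; the only minor care points are (i) checking that $\tau \ge 1$ and $\tau \le n/2$ under the stated range of $\gamma$ so that both \cref{thm:sss} and \cref{lem:tauruns} apply with $\tau = \lfloor \gamma/3\rfloor$, and (ii) noting that $A_2$ is defined as a \emph{union} and not a multiset, so the bound is in fact an upper bound that holds regardless of overlap between the misperiod sets of different runs. Both checks are immediate, so the whole proof amounts to two short sentences.
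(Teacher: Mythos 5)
Your proof is correct and follows essentially the same argument as the paper, which is a three-line appeal to \cref{thm:sss}, \cref{lem:tauruns}, and the observation that $\tau=\Theta(\gamma)$. One small caveat: your claim that $\tau\ge 1$ is ``immediate'' actually fails for $\gamma=2$ (where $\lfloor\gamma/3\rfloor=0$), but this edge case, which the paper's proof also glosses over, is harmless because for any constant $\gamma$ the stated bounds $\Oh(n/\gamma)$ and $\Oh(nk/\gamma)$ hold trivially.
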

\begin{proof}
By \cref{thm:sss}, $|A_1|=\Oh(n/\tau)$. By \cref{lem:tauruns}, $|A_2|=\Oh(nk/\tau)$. Finally, $\tau=\Theta(\gamma)$.
\end{proof}

Given a query for a pattern $P$ of length $m \ge (k+1)\gamma$, we select two sets of positions in $P$, $B_1$ and $B_2$, also called \emph{anchors}.
We partition a prefix of $P$ into $k+1$ substrings of length $\gamma$. For each $i \in [0 \dd k]$, if $P[i\gamma \dd (i+1)\gamma)$ is a substring of the text $T$, the set $B_1$ contains the starting position $j$ of the leftmost $2\tau$-length string that occurs in $P[i\gamma \dd (i+1)\gamma)$ and is a $\tau$-synchronizing fragment of $T$, provided that such $j$ exists. That is, $j=\min\{j' \in [i\gamma \dd (i+1)\gamma-2\tau)\,:\,P[j' \dd j'+2\tau)=T[a \dd a+2\tau)\text{ for some }a \in A_1\}$, provided that the set under $\min$ is non-empty.

For every $i \in [0 \dd k]$, if $P[i\gamma \dd (i+1)\gamma)$ is a substring of the text $T$ and $p:= \per(P[i\gamma \dd (i+1)\gamma)) \le \tau/3$, the misperiods $\Mis_{k+1}(P,i\gamma,i\gamma+p)$ are inserted to $B_2$.

\begin{lemma}\label{lem:B1B2}
Let $T$ be a text of length $n$ and $\gamma \in [1 \dd \floor{n/(k+1)}]$. There is a data structure of size $\Oh(n)$ using which, given a pattern $P$ of length $m \ge (k+1)\gamma$, one can construct its sets of anchors $B_1$, $B_2$ in $\Oh(mk)$ time. Moreover, we have $|B_1| \le k+1$ and $|B_2| \le 2(k+1)^2$.
\end{lemma}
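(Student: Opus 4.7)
The plan is to set up an $\Oh(n)$-size oracle on $T$ and then realize the constructions of $B_1$ and $B_2$ as linear-time scans over the $k+1$ blocks $P[i\gamma \dd (i+1)\gamma)$, assisted by constant-time $\LCP$ queries inside $P$. The cardinality bounds are immediate: $B_1$ receives at most one element per block, so $|B_1|\le k+1$; and $B_2$ receives at most $2(k+1)$ misperiods (from the set $\Mis_{k+1}(P,i\gamma,i\gamma+p)$) per block, so $|B_2|\le 2(k+1)^2$.

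The oracle on $T$ would consist of the suffix tree of $T$ augmented with an LCA data structure (yielding $\Oh(1)$-time $\LCP$ queries between suffixes of $T$) together with, at every explicit node $v$ of string depth at least $2\tau$, a single \emph{synchronization bit} indicating whether the leaves in the subtree of $v$ correspond to suffixes starting at positions of $A_1$. By the consistency property of $\tau$-synchronizing sets, this bit is unambiguous and depends only on the length-$2\tau$ prefix of the path label of $v$; it can be filled in by one postorder traversal after computing $A_1$ via \cref{thm:sss}. The total space is $\Oh(n)$.

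Given a query pattern $P$, I would first preprocess $P$ and $P^R$ in $\Oh(m)$ time for $\Oh(1)$-time rooted $\LCP$ queries, and compute matching statistics of $P$ against $T$ in $\Oh(m)$ time to obtain, for every $j\in [0 \dd m)$, the locus $v_j$ in the suffix tree of $T$ of the longest prefix of $P[j \dd m)$ that occurs in $T$. To construct $B_1$, for each block $i$ with $v_{i\gamma}$ of depth at least $\gamma$ (so that the block occurs in $T$), I would scan $j=i\gamma,i\gamma+1,\ldots,(i+1)\gamma-2\tau$ in order, look up the synchronization bit stored at $v_j$ (or at the explicit descendant if $v_j$ is implicit), and insert the leftmost $j$ whose bit equals $1$ into $B_1$. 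By consistency, this bit equals $1$ precisely when $P[j \dd j+2\tau)$ is a $\tau$-synchronizing fragment of $T$, matching the definition of $B_1$; summed over the $k+1$ blocks, this phase costs $\Oh((k+1)\gamma)\subseteq \Oh(m)$.

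To construct $B_2$, for each block that occurs in $T$ I would compute $p=\per(P[i\gamma \dd (i+1)\gamma))$ in $\Oh(\gamma)$ time via a failure function on the block, and if $p\le\tau/3$ extract up to $k+1$ right misperiods iteratively. Maintaining a cursor $a$ initialized to $(i+1)\gamma$, I would issue one query $\ell=\LCP(P[a \dd m),P[b \dd m))$ with $b=i\gamma+((a-i\gamma)\bmod p)$. If $b+\ell<(i+1)\gamma$, then $b+\ell$ lies in the first period of the block, so $P[b+\ell]$ equals the expected periodic character at position $a+\ell$ and thus $a+\ell$ is a fresh right misperiod; I would record it and advance $a\leftarrow a+\ell+1$. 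Otherwise I would only deduce that the periodicity extends up to $a+\gamma-o$ with $o=(a-i\gamma)\bmod p$, and jump $a\leftarrow a+\gamma-o$. Left misperiods would be handled symmetrically via longest-common-suffix queries on $P^R$. The main obstacle is the correctness and amortization of this loop: I must verify the above case analysis so that each ``productive'' iteration reports a genuine misperiod, and argue that each ``non-productive'' iteration advances $a$ by at least $\gamma-p\ge 8\gamma/9=\Omega(\gamma)$ (using $p\le\tau/3\le\gamma/9$). Together these imply $\Oh(k+m/\gamma)$ iterations per block, yielding an overall cost of $\Oh(m+km/\gamma+k^2)\subseteq \Oh(mk)$, as claimed.
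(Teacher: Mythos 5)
Your proposal is correct but takes a more elaborate route than the paper, for both $B_1$ and $B_2$, and the extra machinery is not needed given the generous $\Oh(mk)$ query budget. For $B_1$, the paper does \emph{not} scan all positions $j$ inside a block: it stores, for every text position $i$, the smallest anchor in $A_1 \cap [i\dd n)$ (an $\Oh(n)$-size array), and for a block $P[i\gamma\dd(i+1)\gamma)$ with locus $v$ it takes an arbitrary terminal label $\ell$ stored at $v$, looks up $a=\min(A_1\cap[\ell\dd n))$, and inserts $a-\ell+i\gamma$ iff $a-\ell<\gamma-2\tau$; by consistency this is exactly the leftmost synchronizing position in the block, in $\Oh(1)$ per block. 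Your synchronization-bit plus per-position scan is valid and also $\Oh(n)$ space and $\Oh(m)$ time, but it reinvents, with more moving parts, what a single precomputed ``next anchor'' table and one representative text occurrence already give. For $B_2$, the paper simply computes $\Mis_{k+1}(P,i\gamma,i\gamma+p)$ by literally checking each position of $P$ against its reduced base position, $\Oh(m)$ per block and $\Oh(km)$ overall, which already matches the claimed bound; your kangaroo-jump loop is faster but unnecessary, and it is precisely the part you flag as uncertain. For the record, that loop does work: in a productive step $b+\ell<(i+1)\gamma$ guarantees every $t<\ell$ has $b+t$ inside the block (so $P[a+t]$ is the expected periodic character and $a+\ell$ is a genuine misperiod), and in a non-productive step $b+\ell\ge(i+1)\gamma$ forces $\ell\ge\gamma-o$, so all of $[a\dd a+\gamma-o)$ is certified misperiod-free and the jump is $\ge\gamma-p=\Omega(\gamma)$; the only omission is the boundary case $a+\ell=m$, where there is no character to compare and no misperiod should be recorded. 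The cardinality bounds are argued identically in both proofs.
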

\begin{proof}
As for the sizes of sets, we have $|B_1|\le k+1$ as in each $P[i\gamma \dd (i+1)\gamma)$, for $i \in [0 \dd k]$, we select at most one position. Moreover, $|B_2| \le 2(k+1)^2$ as for each $P[i\gamma \dd (i+1)\gamma)$ that is highly periodic, at most $k+1$ left misperiods and at most $k+1$ right misperiods are inserted to $B_2$.

\emph{Data structure:}
We store the suffix tree of $T$ enhanced with the data structure for $\TreeLCP$ computation of \cref{thm:WExp}.
Each node of the suffix tree stores the label of some terminal node in its subtree.
We also use the data structure of \cref{lem:rootedLCP_special}.
Finally, for each position $i \in [0 \dd n)$, we store the smallest anchor in $A_1 \cap [i \dd n)$, if it exists.
The data structures use $\Oh(n)$ space.

\emph{Queries:}
For each $i \in [0 \dd k]$, we check if $P[i\gamma \dd (i+1)\gamma)$ has a locus in the suffix tree of $T$.
This takes $\Oh(m)$ time in total by \cref{thm:WExp}.
We compute anchors based on values of $i$ for which the locus exists.

To compute the set $B_1$, if the locus is $v$, let $\ell$ be the label of any terminal node in the subtree of $v$.
We set $a = \min(A_1 \cap [\ell \dd n))$.
If $a$ exists and $a-\ell < \gamma-2\tau$, $a-\ell+i\gamma$ is inserted to $B_1$.
The algorithm uses just $\Oh(k)$ time in total.
Correctness of the query algorithm follows by the consistency condition of a synchronizing set.

To compute the set $B_2$ for a given $i \in [0 \dd k]$, we compute the smallest period $p$ of $P[i\gamma \dd (i+1)\gamma)$ using the Morris-Pratt algorithm~\cite{DBLP:journals/siamcomp/KnuthMP77} in $\Oh(\gamma)$ time, for a total of $\Oh(m)$ time. If $p \le \tau/3$, we compute the set $\Mis_{k+1}(P,i\gamma,i\gamma+p)$ by definition in $\Oh(m)$ time and insert its elements to $B_2$. We spend $\Oh(m)$ time for a given $i$, which totals in $\Oh(km)$ time.
\end{proof}

\begin{definition}\label{def:nearly}
We say that pattern $P$ has a \emph{$k$-nearly periodic occurrence} $T'=T[j \dd j+m)$ in text $T$ if $\delta_H(P,T') \le k$ and
there are integers $p \in [1 \dd \floor{\tau/3}]$ and $i \in [0 \dd k]$ such that
$\per(P[i\gamma \dd (i+1)\gamma))=p$,
$P[i\gamma \dd (i+1)\gamma)=T'[i\gamma \dd (i+1)\gamma)$ and $\delta_H(P,T')=|\Mis(P,i\gamma,i\gamma+p)|+|\Mis(T',i\gamma,i\gamma+p)|$.
\end{definition}

For two sets $A$, $B$ of integers, we denote $A \oplus B = \{a+b\,:\,a \in A,\,b \in B\}$ and $A \ominus B = \{a-b\,:\,a \in A,\,b \in B\}$.
For an integer $j$, we denote $A \oplus j = A \oplus \{j\}$, $A \ominus j = A \ominus \{j\}$.

\begin{lemma}\label{lem:anchors}
Let $T$ be a text of length $n$, $\gamma \in [1 \dd \floor{n/(k+1)}]$, and $P$ be a pattern of length $m \ge (k+1)\gamma$. Every $k$-mismatch occurrence of $P$ in $T$ starts at a position in $(A_1 \ominus B_1) \cup (A_2 \ominus B_2)$ or is a $k$-nearly periodic occurrence.
\end{lemma}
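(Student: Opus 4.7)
The plan is to fix a $k$-mismatch occurrence $T' := T[j\dd j+m)$ of $P$ and apply pigeonhole to the prefix $P[0\dd (k+1)\gamma)$ split into $k+1$ blocks of length $\gamma$: since $\delta_H(P,T')\le k$, there must exist an index $i\in[0\dd k]$ with $P[i\gamma\dd(i+1)\gamma)=T[j+i\gamma\dd j+(i+1)\gamma)$. I would then split the argument according to whether a $\tau$-synchronizing position of $T$ lies inside $[j+i\gamma\dd j+(i+1)\gamma-2\tau+1)$.

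When such a position exists, I would invoke the consistency condition of the $\tau$-synchronizing set: for every offset $b\in[i\gamma\dd (i+1)\gamma-2\tau]$, the window $P[b\dd b+2\tau)$ equals $T[j+b\dd j+b+2\tau)$, so $P[b\dd b+2\tau)$ matches some $\tau$-synchronizing fragment of $T$ iff $j+b\in A_1$. The leftmost such $b$ (which $B_1$ records for block $i$) therefore corresponds to a position $j+b\in A_1$, giving $j=(j+b)-b\in A_1\ominus B_1$.

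Otherwise, no $\tau$-synchronizing position of $T$ lies in $[j+i\gamma\dd j+(i+1)\gamma-2\tau+1)$. I would apply the density condition to every length-$\tau$ subwindow and chain Fine and Wilf across the overlapping length-$(3\tau-1)$ substrings to derive that $T[j+i\gamma\dd j+(i+1)\gamma)$ has a common period $\le\tau/3$. Let $p:=\per(P[i\gamma\dd(i+1)\gamma))$; the maximal periodic extension of this substring produces a $\tau$-run $R=T[i_R\dd j_R]$ with $\per(R)=p$. I would then apply \cref{lemma:Bringmann} with $U=P$, $V=T'$, and interval $[i\gamma\dd i\gamma+p)$: if $I:=\Mis_{k+1}(P,i\gamma,i\gamma+p)$ and $I':=\Mis_{k+1}(T',i\gamma,i\gamma+p)$ are disjoint, the lemma certifies a $k$-nearly periodic occurrence per \cref{def:nearly}. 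Otherwise pick any $a\in I\cap I'$; the defining conditions for adding $\Mis_{k+1}(P,i\gamma,i\gamma+p)$ to $B_2$ are satisfied for this $i$, so $a\in B_2$, and it remains only to establish $j+a\in A_2$.

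The hard part will be this last step, because $A_2$ retains only the $k+1$ misperiods closest to each $\tau$-run on either side, whereas $j+a$ could a priori be outranked by misperiods of $T$ lying outside $[j\dd j+m)$. I plan to rule this out by contradiction. WLOG $a$ is a left misperiod of $T'$ (the right case is symmetric). Because $[j+i\gamma\dd j+i\gamma+p)\subseteq R$ has length $p=\per(R)$, a position is a misperiod of $T$ with respect to this shifted interval iff it is a misperiod with respect to $[i_R\dd i_R+\per(R))$; hence $j+a$ is a left misperiod of $T$ with respect to $R$, lying in $[j\dd i_R)$. If it were not among the top $k+1$ such misperiods, then $(j+a\dd i_R)$ would contain $\ge k+1$ further left misperiods of $T$ with respect to $R$; shifting by $-j$ would yield $\ge k+1$ left misperiods of $T'$ strictly greater than $a$, contradicting $a\in\LeftMis_{k+1}(T',i\gamma,i\gamma+p)$. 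Thus $j+a\in A_2$, and together with $a\in B_2$ this gives $j\in A_2\ominus B_2$.
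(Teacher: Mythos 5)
Your proof follows essentially the same road map as the paper's: pigeonhole over the $k+1$ blocks, a dichotomy between a ``synchronizing'' case leading to $A_1\ominus B_1$ and a ``periodic'' case leading to $A_2\ominus B_2$ or a $k$-nearly periodic occurrence via \cref{lemma:Bringmann}, and the construction of the $\tau$-run containing the matching block. Two details differ and are worth noting. First, the paper's case split is on $\per(P_0)>\tau/3$ versus $\per(P_0)\le\tau/3$, and in the former case it invokes density to place a synchronizing position in the \emph{first} $\tau$-window $[j+i_0\gamma\dd j+i_0\gamma+\tau)$; as stated this requires the length-$(3\tau-1)$ \emph{prefix} of $P_0$ (not $P_0$ itself) to have period exceeding $\tau/3$, which does not follow immediately — your direction (split on existence of a synchronizing position in the block, and recover $\per(P_0)\le\tau/3$ via chained density windows and Fine–Wilf) makes the implication explicit and is arguably the cleaner way to present that step, at the cost of requiring the chaining argument you sketch. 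Second, for $A_2$-membership, the paper sets $A_2'=\Mis_{k+1}(T,x,x+p)$ (on the text) and $B_2'=\Mis_{k+1}(P,\cdot)$, observes $A_2'\ominus j\supseteq\Mis_{k+1}(T',\cdot)$, and splits on whether $(A_2'\ominus j)\cap B_2'=\emptyset$; you instead pick $a$ in the intersection of the misperiod sets of $P$ and $T'$ and show $j+a\in A_2$ by contradiction (any $k+1$ misperiods of $T$ between $j+a$ and the run start would restrict to $T'$ and outrank $a$). Both are valid; the paper's superset observation gets there a bit more economically, and in the disjointness branch it arrives at the $k$-nearly-periodic conclusion immediately, but your contradiction argument is equally correct. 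Overall the proposal is sound.
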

\begin{proof}
Let $j \in [0 \dd n-m]$ satisfy $\delta_H(T[j \dd j+m),P) \le k$. We will show that $j \in (A_1 \ominus B_1) \cup (A_2 \ominus B_2)$ or $T'=T[j \dd j+m)$ is a $k$-nearly periodic occurrence.

As $m \ge (k+1)\gamma$, at least one of the fragments $P[i\gamma\dd(i+1)\gamma)$, for $i \in [0 \dd k]$, matches the corresponding fragment $T[j+i\gamma \dd j+(i+1)\gamma)$ exactly. Let it be the fragment $P_0=P[i_0\gamma\dd(i_0+1)\gamma)$.

If $\per(P_0)>\tau/3$, by the density condition of synchronizers, $A_1$ contains an element in $[j+i_0\gamma \dd j+i_0\gamma+\tau)$. Let $a \in A_1$ be the minimum such element. We have $\gamma \ge 3\tau-1$, so $T[a \dd a+2\tau)$ is a synchronizing fragment that occurs in $P$ at position $b=a-j$. By the consistency condition of synchronizers, no synchronizing fragment of $T$ occurs in $P_0$ at a position smaller than $b-i_0\gamma$.
Thus, $b \in B_1$. Then $j=a-b \in A_1 \ominus B_1$, as required.

Now assume that $p:= \per(P_0)\le \tau/3$. Then $T[j+i_0\gamma \dd j+(i_0+1)\gamma)$ extends uniquely to a run $T[x \dd y]$ with period $p$. As $\gamma \ge 3\tau-1$, this run is a $\tau$-run. Let $A'_2=\Mis_{k+1}(T,x,x+p)=\Mis_{k+1}(T,j+i_0\gamma,j+i_0\gamma+p)$ and $B'_2=\Mis_{k+1}(P,i_0\gamma,i_0\gamma+p)$. Assume first that $(A'_2 \ominus j) \cap B'_2 \ne \emptyset$ and let $q$ be an element of this set. By definition, $A'_2 \subseteq A_2$ and $B'_2 \subseteq B_2$. Then $j=(q+j)-q \in A_2 \ominus B_2$, as desired.

Finally, assume that $(A'_2 \ominus j) \cap B'_2 = \emptyset$. We note that $A'_2 \ominus j\supseteq \Mis_{k+1}(T',i_0\gamma,i_0\gamma+p)$ for $T'=T[j \dd j+m)$ and recall that $B'_2=\Mis_{k+1}(P,i_0\gamma,i_0\gamma+p)$. By \cref{lemma:Bringmann}, $\Mis_{k+1}(T',i_0\gamma,i_0\gamma+p)=\Mis(T',i_0\gamma,i_0\gamma+p)$, $B'_2=\Mis(P,i_0\gamma,i_0\gamma+p)$, and $\delta_H(P,T')=|\Mis(T',i_0\gamma,i_0\gamma+p)| + |\Mis(P,i_0\gamma,i_0\gamma+p)|$. Hence, $T'$ is a $k$-nearly periodic occurrence of $P$.
\end{proof}

We deal with nearly periodic occurrences using properties of runs in the text.
A proof of the following lemma is deferred until the end of the section.

\begin{restatable}{lemma}{nearper}\label{lem:nearly_per}
For a text $T$ of length $n$ over an integer alphabet and $k=\Oh(1)$, there is an index of size $\Oh(n)$ that, given a pattern $P$ of length $m$, reports a set of $k$-mismatch occurrences of $P$ in $T$ that contains all $k$-nearly periodic occurrences in $\Oh(\log \log n + m + \occ)$ time where $\occ$ is the number of $k$-mismatch occurrences of $P$.
\end{restatable}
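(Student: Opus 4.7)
The plan is to build a compact index over the $\tau$-runs of $T$ (with $\tau = \lfloor\gamma/3\rfloor$) and, for each periodic window $P[i\gamma\dd(i+1)\gamma)$ of the query pattern, look up runs sharing the same period up to rotation, then enumerate valid alignments via interval stabbing. By \cref{def:nearly}, every $k$-nearly periodic occurrence arises from such an alignment, and two fragments have the same period up to rotation precisely when they share a common Lyndon root, which serves as a canonical matching key.

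\textbf{Preprocessing.} Compute all runs of $T$ in $\Oh(n)$ time~\cite{DBLP:journals/siamcomp/BannaiIINTT17} and restrict to the $\tau$-runs; by \cref{lem:tauruns} there are $\Oh(n/\gamma)$ of them and their periods sum to $\Oh(n)$. For each $\tau$-run $R=T[x\dd y]$ with period $p$, store (i)~the Lyndon root $\lambda_R$ of $T[x\dd x+p)$, computed via Booth's algorithm in $\Oh(p)$ time, together with its locus in the suffix tree of $T$; (ii)~the rotation offset $o_R = x \bmod p$; and (iii)~the nearest $k+1$ misperiods $\LeftMis_{k+1}(T,x,x+p)$ and $\RightMis_{k+1}(T,x,x+p)$, obtained via $\Oh(k)$ LCE queries per run. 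Each run contributes $\Oh(k)=\Oh(1)$ data, so the total space is $\Oh(n)$. Organize runs into a dictionary keyed by the suffix-tree locus of $\lambda_R$, accessible via a $y$-fast trie~\cite{DBLP:journals/ipl/Willard83} in $\Oh(\log\log n)$ time, further sub-bucketed by offset $o_R$, and build an interval-stabbing structure~\cite{DBLP:conf/focs/AlstrupBR00} over the admissible starting-position intervals contributed by these runs.

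\textbf{Query.} For each pivot $i\in[0\dd k]$, compute $p_i = \per(P[i\gamma\dd(i+1)\gamma))$ via Morris-Pratt in $\Oh(\gamma)$ time, skipping if $p_i > \tau/3$; otherwise compute $\lambda_{P,i}$ and the offset $o_{P,i} = i\gamma \bmod p_i$, locate $\lambda_{P,i}$ in the suffix tree of $T$ in $\Oh(\log\log n)$ time, retrieve all runs with matching Lyndon-root locus and offset, and compute $\Mis_{k+1}(P,i\gamma,i\gamma+p_i)$ by direct scan. Summed over $i$, these steps cost $\Oh(m+\log\log n)$. For each retrieved run $R$, the alignment $j \bmod p_i$ is forced and $j$ ranges over an interval determined by $[j+i\gamma,j+(i+1)\gamma)\subseteq[x_R,y_R]$ and $0\le j\le n-m$. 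Within this interval, only the stored misperiods of $R$ and $P$ can contribute to $\delta_H(P,T[j\dd j+m))$, so the mismatch count is a piecewise-constant function of $j$ with $\Oh(k)$ breakpoints; the $\Oh(1)$ subintervals on which it is at most $k$ are fed to the stabbing structure, which outputs each valid $j$ in $\Oh(1)$ time per reported occurrence. Crucially, the bound $|\Mis(P,\cdot)| + |\Mis(T',\cdot)| \le k$ implies $\delta_H(P,T[j\dd j+m)) \le k$ regardless of cancellations, so no false positives are reported.

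\textbf{Main obstacle.} The chief difficulty will be ensuring that the output size stays within $\Oh(\occ)$ despite potential overcounting: a single nearly periodic occurrence may be discoverable through several pivots $i$ or several runs $R$. I would attribute each such occurrence to a canonical $(i,R)$ pair --- for instance, the smallest $i$ whose periodic window lies within the $\tau$-run of $T$ that certifies the nearly-periodic structure --- and verify this attribution in $\Oh(1)$ time at the moment of output, so that each occurrence is emitted exactly once. A secondary subtlety is the canonicity of Lyndon roots under rotation: by fixing each root's representative to its suffix-tree locus (determined via Booth's algorithm in linear time), equality of periods up to rotation reduces to node equality in the suffix tree, which is what enables the $\Oh(\log\log n)$-time dictionary lookup.
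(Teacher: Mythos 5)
The proposal shares the same high-level ingredients as the paper (Lyndon roots as canonical keys for $\tau$-runs, misperiods, interval stabbing), but it misuses the interval-stabbing structure and, as written, the query is not output-sensitive.

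The crucial step in the paper's proof is that the stabbing structure is keyed by the \emph{pattern length} $m$: for each $\tau$-run and each choice of the number of left/right misperiods $(a,b)$ and alignment residue $t$, an interval of lengths $[r_b-\beta+1 \dd r_{b+1}-\alpha]$ is precomputed and inserted, and at query time the structure is stabbed with $m$. This guarantees that the only $\tau$-runs visited are ones that \emph{actually produce at least one} $k$-mismatch occurrence of $P$, which is exactly what bounds the query time by $\Oh(\log\log n+m+\occ)$. Your proposal instead retrieves ``all runs with matching Lyndon-root locus and offset'' and then loops \emph{``for each retrieved run $R$.''} That loop can visit $\Theta(n/\gamma)$ runs that share a Lyndon root with $P$'s periodic window yet contribute no occurrence at all (e.g., $T$ containing many disjoint copies of $(\texttt{ab})^{\Theta(\gamma)}$). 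Your later step, ``the $\Oh(1)$ subintervals on which it is at most $k$ are fed to the stabbing structure,'' is also not a coherent use of stabbing: a stabbing structure answers a point query, and once you have already fixed the run $R$ and computed the admissible subinterval of starting positions, there is nothing left to stab. In short, the stabbing index in the paper exists precisely so that runs are \emph{not} enumerated at query time; your proposal has the stabbing structure but not the pruning role it needs to play.

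Two smaller issues. First, sub-bucketing runs by $o_R=x\bmod p$ and looking them up by $o_{P,i}=i\gamma\bmod p$ is not a valid filter: for a fixed run $R$ and pivot $i$, \emph{every} run with the matching Lyndon root can contribute occurrences, just at a different residue class of the starting position $j\equiv d_T-d_P\pmod p$, where $d_T,d_P$ are the Lyndon-root positions in $T$ and $P$; the run-start offset $x\bmod p$ is not the right alignment invariant. Second, your deduplication plan (``attribute each occurrence to a canonical $(i,R)$ pair\ldots verify in $\Oh(1)$ time'') is left as a hope; it is not clear how to verify canonicity in $\Oh(1)$ time without extra structure. The paper sidesteps this entirely by simply proving that each occurrence is reported $\Oh(k^2)=\Oh(1)$ times, using the fact that at most $k$ $\tau$-runs with a given Lyndon root can certify the same non-periodic occurrence; a similar counting argument would be the cleaner route here as well.

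Your triangle-inequality argument that any reported $j$ really is a $k$-mismatch occurrence is correct (the statement only requires no false positives among reported occurrences, not that only nearly periodic ones are reported), and the preprocessing cost bookkeeping is sound. The fix is to restructure so that the stabbing query is on $m$ rather than on starting positions, precomputing per-run intervals of admissible pattern lengths, as in the paper.
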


In the 2D orthogonal range reporting queries problem, we are to construct a data structure over a specified set of points in a 2D grid that supports the following queries: given a rectangle, list all points that belong to the rectangle. We use the following data structure.

\begin{theorem}[Alstrup et al.~{\cite[Theorem 2]{DBLP:conf/focs/AlstrupBR00}}]\label{thm:RR}
For a set of $n$ points in an $n \times n$ grid, there exists a data structure of size $\Oh(n \log^{\eps} n)$, for any constant $\epsilon>0$, that answers orthogonal range reporting queries in $\Oh(t + \log \log n)$ time, where $t$ is the number of reported points.
\end{theorem}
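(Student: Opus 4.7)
The plan is to build a layered structure combining rank-space reduction, a shallow range tree on one coordinate, and fractional cascading on the other.

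\emph{First,} I would reduce to rank space by storing y-fast tries on the $x$- and $y$-coordinate sets. Each trie maps a query boundary to its rank in $\Oh(\log \log n)$ time and uses only $\Oh(n)$ space. After this reduction, it suffices to index $n$ points with distinct coordinates in $[1 \dd n] \times [1 \dd n]$ and to route every query through the tries at its outset.

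\emph{Next,} I would build a range tree $T$ on the $x$-coordinates with fan-out $\Delta = \Theta(\log^{\eps} n)$ and height $h = \Oh(\log n / \log \log n)$. At each internal node $v$, associate the list $L_v$ of the points in $v$'s $x$-subtree sorted by $y$-coordinate, together with fractional cascading bridges to the lists of $v$'s children. This way, a single $\Oh(\log \log n)$-time predecessor query at the root for $y_1$ and $y_2$ propagates to the positions of $y_1, y_2$ in every descendant list in $\Oh(1)$ time per edge descended.

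\emph{To answer} $[x_1,x_2]\times[y_1,y_2]$, I would decompose $[x_1,x_2]$ via the canonical nodes of $T$: each level contributes at most two ``boundary'' children, recursed into, and $\Oh(\Delta)$ ``middle'' children fully covered by the query, at which all points whose $y$-rank lies in the propagated interval are reported. Middle-child reporting is output-sensitive at $\Oh(1)$ time per point by combining the cascading pointers with a succinct ranked representation of $L_v$. The only nontrivial residual cost is the descent of length $h$ along the two boundary paths, which I would shave to $\Oh(\log \log n)$ via a \emph{ball-inheritance}-style trick: replicate sufficiently many downward pointers so that a single $\Oh(\log \log n)$-time predecessor query at the root locates the position of $x_1$ (resp.\ $x_2$) at every level of the boundary path at once, rather than walking level by level. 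The $\Oh(n\log^{\eps} n)$ space budget is exactly what is needed to afford this replication across $h = \Oh(\log n / \log \log n)$ levels with fan-out $\Theta(\log^\eps n)$.

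\emph{Main obstacle.} The hardest step will be calibrating fan-out, height, and secondary-structure size so that all three of (i) navigating the two boundary paths in $\Oh(\log \log n)$ total, (ii) locating $y_1, y_2$ in each middle child in $\Oh(1)$ amortized, and (iii) enumerating output points at $\Oh(1)$ each, fit simultaneously within the $\Oh(\log \log n + t)$ time budget while keeping the total space at $\Oh(n\log^\eps n)$. In particular, verifying that the ball-inheritance shortcuts plus the fractional-cascading tables stay within this space, and that per-middle-child processing is genuinely output-sensitive (rather than paying a constant overhead per middle child that sums to $\Oh(\Delta h)$ at empty nodes), is where the delicate accounting lies.
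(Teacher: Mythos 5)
This statement is not proved in the paper at all: it is imported verbatim as a black-box citation of Alstrup, Brodal, and Rauhe~\cite[Theorem 2]{DBLP:conf/focs/AlstrupBR00}, so there is no internal proof to compare your attempt against. What you have written is a plan for re-deriving that external result, and as a proof it has a genuine gap --- precisely at the point you yourself flag as the ``main obstacle,'' which is where essentially all of the difficulty of the theorem lives. Two concrete problems. First, the boundary-path cost: with fan-out $\Delta=\Theta(\log^{\eps}n)$ and height $h=\Theta(\log n/\log\log n)$, the canonical decomposition of $[x_1,x_2]$ touches up to $\Delta$ ``middle'' children at each of the $\Oh(h)$ boundary levels. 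Even if ball-inheritance-style shortcuts let you locate $x_1,x_2$ and the propagated $y$-ranks at every level in $\Oh(\log\log n)$ total, spending even $\Oh(1)$ per middle child (to decide whether it has any output) already costs $\Omega(\Delta h)=\Omega(\log^{1+\eps}n/\log\log n)$, which destroys the $\Oh(\log\log n+t)$ budget whenever $t$ is small. Making the query genuinely output-sensitive requires additional per-node machinery (e.g., constant-time structures that, given a contiguous range of children and a $y$-rank interval, either produce a point or certify emptiness), and nothing in your sketch supplies it. Second, the space accounting is asserted rather than verified: storing, for each point, its position in the $y$-sorted list at every level of its root-to-leaf path costs $\Theta(nh)$ words $=\Theta(n\log n/\log\log n)$, which exceeds $\Oh(n\log^{\eps}n)$; staying within the claimed space forces a bit-level encoding of the per-level rank information (this is exactly the ball-inheritance trade-off, which is a nontrivial result in its own right), and the claim that ``the $\Oh(n\log^{\eps}n)$ budget is exactly what is needed'' is not established by any calculation.

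So the proposal is a reasonable high-level outline in the spirit of known constructions (high-fan-out range tree, fractional cascading, rank-space reduction), but it does not constitute a proof of the theorem: the steps that separate a $\Oh(\log n)$-type query bound from the $\Oh(\log\log n+t)$ bound are exactly the ones left open. For the purposes of this paper, the correct move is the one the authors make --- cite \cite{DBLP:conf/focs/AlstrupBR00} and use the statement as a black box --- rather than attempt to reprove it.
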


Let us restate the main result of this section for convenience.
\llong*
\begin{proof}
Nearly periodic $k$-mismatch occurrences are handled by the data structure of \cref{lem:nearly_per}. The query complexity of that data structure is dominated by the final query complexity. The remaining $k$-mismatch occurrences are handled using the anchors from \cref{lem:anchors}.

\emph{Data structure.}
The index uses $k$-errata trees $\TT_t$ for $t \in \{1,2\}$ for the sets of suffixes $\{T[i \dd n)\,:\,i \in A_t\}$ of $T$ and $\TT'_t$ for $t \in \{1,2\}$ for the sets of suffixes $\{(T[0 \dd i))^R\,:\,i \in A_t\}$ of $T^R$.

Let us recall that for $k_1 \in [0 \dd k)$, a $k_1$-errata tree is formed by compact tries of the $k$-errata tree at levels in $[0 \dd k_1]$. Let $\TT_{t,k_1}$ be the $k_1$-errata tree formed from $\TT_t$. For every $t \in \{1,2\}$ and $k_1 \in [0 \dd k]$, let $S_{t,k_1}$ be a concatenation of labels of terminals in all compact tries in $\TT_{t,k_1}$. In each compact trie, the terminals are concatenated from left to right. The order of compact tries can be arbitrary. Each explicit node of each compact trie in $\TT_{t,k}$ stores the fragment of $S_{t,k_1}$ that corresponds to the terminal labels in its subtree. Similarly, we let $\TT'_{t,k_2}$ for $k_2 \in [0 \dd k]$ be the $k_2$-errata tree formed from $\TT'_t$ and $S'_{t,k_2}$ be a concatenation of labels of terminals in all compact tries in $\TT'_{t,k_2}$.

For every $t \in \{1,2\}$ and $k_1 \in [0 \dd k]$, we construct a 2D orthogonal range reporting data structure for the set of 2D points $\{(x,y)\,:\,x \in [0 \dd |S_{t,k_1}|),\, y \in [0 \dd |S'_{t,k_2}|),\,S_{t,k_1}[x]=n-S'_{t,k_2}[y]\}$ where $k_2=k-k_1$.

\emph{Space complexity.} By \cref{lem:A1A2}, the sizes of the sets of anchors $A_1,A_2$ are $\Oh(n/\gamma)$. Hence, by \cref{thm:kerrata}, the sizes of the $k$-errata trees are $\Oh(n+n \log^k n/\gamma)$.

By \cref{lem:kerrata_copies}, for each $k_1 \in [0 \dd k]$, string $S_{t,k_1}$ contains $\Oh(\log^{k_1} n)$ copies of each of the $\Oh(n/\gamma)$ characters. Similarly, string $S'_{t,k_2}$ for $k_2=k-k_1$ contains $\Oh(\log^{k_2} n)$ copies of every character. Over all $k_1$, the number of points created in the 2D orthogonal range reporting data structure is 
$\Oh(n\log^k n/\gamma)$.
By \cref{thm:RR}, this gives $\Oh(n\log^{k+\eps} n/\gamma)$ space of the range reporting data structures.

\emph{Queries.} Given a pattern $P$ of length $m \ge (k+1)\gamma$, we construct its sets of anchors $B_1$ and $B_2$ using \cref{lem:B1B2}. For each $t \in \{1,2\}$, each $k_1 \in [0 \dd k]$ and each anchor $b \in B_t$, we ask a query for the suffix $P[b \dd m)$ to the $k_1$-errata tree $\TT_{t,k_1}$ and a query for the suffix $(P[0 \dd b))^R$ of $P^R$ to the $k_2$-errata tree $\TT'_{t,k_2}$ where $k_2=k-k_1$. In each query, we do \emph{not} report all occurrences. Instead, each query results in a set of nodes in $\TT_{t,k_1}$ which translate to fragments of $S_{t,k_1}$ and a set of nodes in $\TT'_{t,k_2}$ which translate to fragments of $S'_{t,k_2}$. For every fragment $S_{t,k_1}[x_1 \dd x_2]$ and every fragment $S'_{t,k_2}[y_1 \dd y_2]$, we ask an orthogonal range reporting query for the rectangle $[x_1 \dd x_2] \times [y_1 \dd y_2]$ in the data structure constructed for $t$ and $k_1$. For every reported point $(x,y)$, we report a $k$-mismatch occurrence of $P$ in $T$ at position $S_{t,k_1}[x]-b$.

\emph{Query complexity.}
First, we analyze all steps excluding reporting of occurrences. The sets of anchors $B_1$, $B_2$ are constructed in $\Oh(m)$ time. For a given anchor, we ask $\Oh(k)$ queries to $(\le k)$-errata trees. After $\Oh(m)$ time preprocessing, a query for $k_1$ returns in $\Oh(\log^{k_1} n\log \log n)$ time $\Oh(\log^{k_1} n)$ nodes in $\TT_{t,k_1}$ and a query for $k_2$ returns in $\Oh(\log^{k_2} n\log \log n)$ time $\Oh(\log^{k_2} n)$ nodes in $\TT'_{t,k_2}$. Overall, in $\Oh(\log^{k} n\log \log n)$ time we obtain $\Oh(\log^k n)$ queries to a 2D orthogonal range reporting data structure, for a total of $\Oh(\log^k n\log \log n)$ time per anchor. The time complexity follows by \cref{thm:RR}.

Finally, let us count how many times a given $k$-mismatch occurrence of $P$ in $T$ can be reported. For every $k_1 \in [0 \dd k]$, $k_2=k-k_1$, $t \in \{1,2\}$, and $b \in B_t$, each $k_1$-mismatch occurrence of $P[b \dd m)$ in $T$ at a position in $A_t$ is considered in exactly one interval and each $k_2$-mismatch occurrence of $P[0 \dd b)$ in $T$ ending at a position in $A_t \ominus 1$ is also considered exactly once. Hence, for fixed $k_1$, $t$, and $b$, each resulting $k$-mismatch occurrence of $P$ will be reported at most once. Over all $k_1$ and $b$, each occurrence is reported $\Oh(k^3)=\Oh(1)$ times, by \cref{lem:B1B2}.
\end{proof}

We say that a string $U$ is \emph{primitive} if $U=V^t$ for string $V$ and positive integer $t$ implies that $t=1$.
Strings $U$ and $V$ are called \emph{cyclic shifts} if there exist strings $X$, $Y$ such that $U=XY$ and $V=YX$.
A \emph{Lyndon string} is a string $U$ that is primitive and is lexicographically minimal among its cyclic shifts.
All cyclic shifts of a primitive string are primitive and different~\cite{DBLP:books/daglib/0019130}.
Hence, every primitive string $U$ has a unique cyclic shift $V$ (the minimal one) that is a Lyndon string.
We extend this notation to runs as follows: a Lyndon root of a periodic string $U$ with period $p$ is the minimum cyclic shift of the string period $U[0 \dd p)$ of $U$
(see \cite{DBLP:journals/tcs/CrochemoreIKRRW14}).

The problem of interval stabbing is defined as follows: Preprocess a set of $n$ intervals so that, given a query point $a$, all intervals that contain the point $a$ can be reported efficiently.
Interval stabbing queries enjoy a known reduction to three-sided (actually, even two-sided) 2D orthogonal range reporting queries. An interval $[\ell \dd r]$ is treated as a point $(\ell,r)$ in 2D and a stabbing query for point $a$ requires to compute all points in a rectangle $(-\infty \dd a] \times [a \dd \infty)$. Using the data structure for three-sided 2D orthogonal range reporting of Alstrup et al.~\cite{DBLP:conf/focs/AlstrupBR00}, we obtain the following data structure for interval stabbing.

\begin{theorem}[see {\cite[Corollary 1]{DBLP:conf/focs/AlstrupBR00}}]\label{thm:Stab}
A set of $n$ intervals with endpoints in $[0 \dd N]$ can be preprocessed into a data structure of size $\Oh(n)$ that answers 
interval stabbing queries in $\Oh(t + \log \log N)$ time, where $t$ is the number of reported intervals.
\end{theorem}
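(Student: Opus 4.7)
The plan is to derive the interval-stabbing bound from the two-sided (equivalently, three-sided) 2D orthogonal range reporting structure of Alstrup, Brodal, and Rauhe via the classical point-interval duality already sketched in the text preceding the statement. Concretely, each input interval $I_i=[\ell_i \dd r_i]$ is represented as the 2D point $p_i=(\ell_i, r_i)$ in the grid $[0\dd N]^2$. Then $a\in I_i$ if and only if $\ell_i\le a\le r_i$, equivalently if and only if $p_i$ lies in the two-sided region $(-\infty,a]\times[a,+\infty)$. This sets up a bijection between the intervals containing $a$ and the points reported by the orthogonal range query, so correctness is immediate.

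Given this reduction, I would (i) store, for each input interval, its index together with its image point $p_i$; (ii) build the Alstrup–Brodal–Rauhe three-sided 2D orthogonal range reporting structure of \cite[Corollary 1]{DBLP:conf/focs/AlstrupBR00} on the $n$ points $\{p_i\}$; and (iii) answer a stabbing query for $a\in[0\dd N]$ by issuing the query $(-\infty,a]\times[a,+\infty)$ and returning the intervals attached to the reported points. The space is $\Oh(n)$ and the query time is $\Oh(t+\log\log N)$, inherited directly from the cited structure.

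The only subtle point is reconciling universe sizes: \cref{thm:RR} as quoted is stated for an $n\times n$ grid with an $\Oh(t+\log\log n)$ term, whereas here the coordinates range over $[0\dd N]$ and the query value $a$ can be an arbitrary element of $[0\dd N]$. I would handle this by a standard coordinate-compression step applied to the $\le 2n$ distinct endpoint values: build a $y$-fast trie~\cite{DBLP:journals/ipl/Willard83} over these values, which occupies $\Oh(n)$ space and supports predecessor/successor queries in $\Oh(\log\log N)$ time. On a stabbing query at $a$, replace $a$ by the appropriate predecessor/successor ranks to obtain the equivalent two-sided query on the compressed $n\times n$ grid, then invoke the Alstrup–Brodal–Rauhe structure there. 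This preserves the set of answers and adds only an additive $\Oh(\log\log N)$ to the query time, matching the stated bound while keeping total space at $\Oh(n)$.
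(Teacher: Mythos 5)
Your reduction is exactly the one the paper itself sketches immediately before the statement: map each interval $[\ell \dd r]$ to the point $(\ell,r)$, turn a stabbing query at $a$ into the two-sided query $(-\infty \dd a]\times[a \dd \infty)$, and invoke the linear-space three-sided structure of Alstrup, Brodal, and Rauhe. Your extra coordinate-compression step via a $y$-fast trie to reconcile the universe $[0\dd N]$ with the grid size is a correct and harmless refinement of the same argument, so the proposal is correct and follows essentially the paper's approach.
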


We are ready to prove \cref{lem:nearly_per}.

\begin{proof}[Proof of \cref{lem:nearly_per}]
\emph{Data structure.}
The data structure contains a trie $\M$ of all strings being a Lyndon root of a $\tau$-run in $T$.
Each terminal node of $\M$, representing a substring of $T$ that is a Lyndon root $U$ of some $\tau$-runs in $T$, holds $(k+1)|U|$ interval stabbing data structures indexed
by pairs in $[0 \dd k] \times [0 \dd |U|)$ that we now describe.

Let $T[x \dd y]$ be a $\tau$-run in $T$ with period $p$ and $Q=T[d \dd d+p)$ for $d \in [x \dd x+p)$ be its Lyndon root.
We consider the sets $L=\LeftMis_{k+1}(T,x,x+p)$ and $R=\RightMis_{k+1}(T,x,x+p)$.
If $|L| \le k$, we insert a sentinel position $-1$ to $L$.
Similarly, if $|R| \le k$, we insert a sentinel position $n$ to $R$.
Let $L=\{\ell_1,\ell_2,\ldots,\ell_{k_1}\}$ with $\ell_{k_1}< \ell_{k_1-1} < \cdots < \ell_1$ ($k_1 \le k+1$).
Let $R=\{r_1,r_2,\ldots,r_{k_2}\}$ with $r_1<r_2< \cdots < r_{k_2}$ ($k_2 \le k+1$).
We set $r_0=\ell_1$.

Let us consider each $a \in [1 \dd k_1)$ and $b \in [0 \dd k_2)$ such that $a+b \le k$ and each remainder $t \in [0 \dd p)$.
Let $\alpha$ and $\beta$ be the minimum and maximum element of the set $\{j \in (\ell_{a+1} \dd \ell_a]\,:\,d-j \equiv t \pmod{p}\}$, if the set is non-empty.
We then insert an interval $I=[r_b-\beta+1 \dd r_{b+1}-\alpha]$ to the data structure corresponding to the Lyndon root $Q$ indexed by the pair $(a+b,t)$.
Intuitively, the interval $I$ has the following interpretation: for every $m$ in $I$, there exists an index $j \in [\alpha \dd \beta]$ such that
$j \equiv \alpha \pmod{p}$ and $j+m-1 \in [r_b \dd r_{b+1})$.
Then for $T'=T[j \dd j+m)$, $|\Mis(T',d-j,d-j+p)|=a+b$, where $T'[d-j \dd d-j+p)=Q$, and $d-j \equiv t \pmod{p}$.
The interval $I$ in the data structure stores additional data: numbers $\alpha$, $\beta$ as well as the interval $[\alpha' \dd \beta'] = [r_b \dd r_{b+1})$.
Formally, the additional data can be stored using perfect hashing~\cite{DBLP:journals/jacm/FredmanKS84}.
With the additional data, we will be able to recover all values $j \in [\alpha \dd \beta]$ such that $j \equiv \alpha \pmod{p}$ and $j+m-1 \in [\alpha' \dd \beta']$ using simple arithmetics.

The case in which a pattern $P$ can match a length-$m$ periodic substring of $T$ is handled similarly.
For each remainder $t \in [0 \dd p)$, we compute the minimum $\alpha$ and the maximum $\beta$ of the set $\{j \in (\ell_1 \dd r_1)\,:\,d-j \equiv t \pmod{p}\}$ and insert
an interval $[0 \dd r_1-\alpha]$ to the data structure corresponding to the Lyndon root $Q$ of the run $T[x \dd y]$ indexed by the pair $(0,d)$.
The additional data stored with the interval are $\alpha$, $\beta$, and $\beta'=r_1$.

We also store the data structure of \cref{lem:rootedLCP_special} for $T$.

\emph{Data structure size.}
By \cref{lem:tauruns}, the total number of $\tau$-runs in $T$ is $\Oh(n/\tau)$.
The length of each Lyndon root of a $\tau$-run is at most $\tau/3$, so the total size of the trie $\M$ is $\Oh(n)$, even without compactification.
There are $\Oh((n/\tau) \cdot k \cdot \tau)=\Oh(n)$ interval stabbing data structures.
For each $\tau$-run, we consider $\Oh(k^2)$ pairs of indices $(a,b)$ and $\Oh(\tau)$ remainders $t$ and for each $a,b,t$ insert one interval into a data structure.
The interval stabbing data structures contain $\Oh(k^2 n)$ intervals in total and thus, by \cref{thm:Stab}, can be represented in $\Oh(n)$ total space. The data structure of \cref{lem:rootedLCP_special} takes just $\Oh(n)$ space.

\emph{Queries.}
Let $P$ be a query pattern of length $m \ge (k+1)\gamma$.
For every $i \in [0 \dd k]$, we compute the smallest period $p$ of $P[i\gamma \dd (i+1)\gamma)$.
If $p \le \tau/3$ and $P[i\gamma \dd (i+1)\gamma)$ is a substring of $T$, which we check using \cref{lem:rootedLCP_special}, we proceed as follows.
We compute the set $X=\Mis_{k+1}(P,i\gamma,i\gamma+p)$ by definition.
If $|X| \le k$, we proceed as follows.
The Lyndon root $Q=P[d \dd d+p)$, with $d \in [i\gamma \dd i\gamma+p)$, of $P[i\gamma \dd i\gamma+p)$ is computed using an algorithm for
a minimal cyclic shift~\cite{DBLP:journals/ipl/Booth80,DBLP:journals/jal/Duval83}.
We then check if $Q$ is a terminal in $\M$ using \cref{thm:WExp}.
If so, we query the interval stabbing data structures for $Q$ indexed by pairs $(h,d \bmod p)$ for all $h\in[0\dd k-|X|]$ for the length $m$.
Assume that $h>0$ and the data structure returned an interval $I$ ($m \in I$) that stores, as additional data, numbers $\alpha$, $\beta$ as well as the interval $[\alpha' \dd \beta']$.
We then report a $k$-mismatch occurrence of $P$ in $T$ at each index $j \in [\alpha \dd \beta]$ such that $j \equiv \alpha \pmod{p}$
and $m \in [\alpha'+1-j \dd \beta'+1-j]$; the final condition can be restated equivalently as
$j \in [\alpha'+1-m \dd \beta'+1-m]$.
There will be at least one such index $j$ by the fact that the interval $I$ was stabbed by $m$.
For $h=0$, if the data structure returned an interval with additional data $\alpha$, $\beta$, and $\beta'$, we report all $j \in [\alpha \dd \beta]$ such that $j \equiv \alpha \pmod{p}$
and $j+m-1 \le \beta'$.
They will be initial elements of the sequence $\alpha,\alpha+p,\ldots,\beta$.

\emph{Query correctness.}
Let us argue for the correctness of the query algorithm.
First, we show that if position $j$ is reported in the algorithm, then $\delta_H(T[j \dd j+m),P) \le k$, i.e.\ $j$ is the starting position of a $k$-mismatch occurrence of $P$ in $T$.
Assume that $j$ is reported for substring $P[i\gamma \dd (i+1)\gamma)$ for $i \in [0 \dd k]$, with smallest period $p$ and Lyndon root $Q=T[d \dd d+p)$,
when querying the data structure indexed by pair $(h,d \bmod p)$ for $h \in [0 \dd k-|X|]$, where $X=\Mis_{k+1}(P,i\gamma,i\gamma+p)$, $|X| \le k$.
Let $W = Q^\infty[c\dd c+m)$ for $c=(-d) \bmod p$.
We have $\delta_H(W,P) = |X|$ and $\delta_H(W,T[j \dd j+m)) = h \le k-|X|$.
By the triangle inequality, $\delta_H(P,T[j \dd j+m)) \le \delta_H(W,P)+\delta_H(W,T[j \dd j+m))\le k$, as required.

Now we show that if $T'=T[j \dd j+m)$ is a $k$-nearly periodic occurrence of $P$ in $T$, then $j$ is reported in the algorithm.
By definition, there is $i \in [0 \dd k]$ such that $P[i\gamma \dd (i+1)\gamma)$ has smallest period $p \le \tau/3$.
Let $d \in [i\gamma \dd i\gamma+p)$ be such that $Q=P[d \dd d+p)$ is the Lyndon root of $P[i\gamma \dd i\gamma+p)$.
By definition, $P[i\gamma \dd (i+1)\gamma) = T'[i\gamma \dd (i+1)\gamma)$.
The fragment $T[j+i\gamma \dd j+(i+1)\gamma)$ extends to a run $T[x \dd y]$ in $T$ with period $p$; it is a $\tau$-run as $\gamma \ge 3\tau-1$ and $p \le \tau/3$.
Let $X = \Mis(P,i\gamma,i\gamma+p)$ and $Y = \Mis(T',i\gamma,i\gamma+p)$.
By definition, $|X|+|Y| \le k$.
We have $(Y \oplus j) \subseteq (L \cup R)$ where $L=\LeftMis_{k+1}(T,x,x+p)$ and $R=\RightMis_{k+1}(T,x,x+p)$.
Then $j$ is reported for $a=|L \cap (Y \oplus j)|$, $b=|R \cap (Y \oplus j)|$, and the remainder $t=d \bmod{p}$.

\emph{Query complexity.}
The smallest period of the fragment $P[i\gamma \dd (i+1)\gamma)$ is computed in $\Oh(\gamma)$ time~\cite{DBLP:journals/siamcomp/KnuthMP77}, for a total of $\Oh(m)$ time over $i \in [0 \dd k]$.
We check if $P[i\gamma \dd (i+1)\gamma)$ is a substring of $T$ in $\Oh(1)$ time after $\Oh(m)$ preprocessing of \cref{lem:rootedLCP_special}.
The Lyndon roots for all the fragments are also computed in $\Oh(m)$ total time using a linear-time algorithm~\cite{DBLP:journals/ipl/Booth80,DBLP:journals/jal/Duval83}.
All the sets $\Mis_{k+1}(P,i\gamma,i\gamma+p)$ for $i \in [0 \dd k]$ such that $P[i\gamma \dd (i+1)\gamma)$ has period $p \le \tau/3$ are computed in $\Oh(mk)$ time by definition.
The trie $\M$ is descended for $U$ in $\Oh(\tau)$ time (\cref{thm:WExp}), for a total of $\Oh(m)$ time.
Then, for a given $i$ we ask $\Oh(k)$ interval stabbing queries, which requires $\Oh(k^2 \log \log n)$ time in addition to the time for reporting occurrences.

Finally, let us show that each $k$-mismatch occurrence $T[j \dd j+m)$ of $P$ is reported at most $\Oh(k^2)$ times by the data structure.
First, it can be reported for each fragment $P[i\gamma \dd (i+1)\gamma)$ with $i \in [0 \dd k]$.
This fragment implies exactly one Lyndon root $Q$.
If $T[j \dd j+m)$ has period $p$, it is reported for exactly one $\tau$-run in $T$ with Lyndon root $Q$, namely, for the $\tau$-run that extends $T[j \dd j+m)$.
In total, if $T[j \dd j+m)$ has period $p$, it is reported $\Oh(k)$ times for $i \in [0 \dd k]$.

Henceforth, we assume that $T[j \dd j+m)$ does not have period $p$.
For a given $\tau$-run $T[x \dd y]$ in $T$ with Lyndon root $Q$, at most one interval $I$ is inserted to an interval stabbing data structure such that $m \in I$,
$j \in [\alpha \dd \beta]$, $j \equiv \alpha \pmod{p}$, and $j+m-1 \in [\alpha' \dd \beta']$ where $\alpha$, $\beta$, and $[\alpha' \dd \beta']$ are the additional data stored for the interval $I$.
It suffices to note that $T[j \dd j+m)$ will be reported by at most $k$ $\tau$-runs with Lyndon root $Q$ in $T$.
Indeed, each such $\tau$-run $T[x \dd y]$ must satisfy $x > j$.
Let $T[x_1 \dd y_1],\ldots,T[x_{k'} \dd y_{k'}]$ be all such $\tau$-runs ordered by increasing starting positions.
(Two runs with period $p$ can intersect on at most $p-1$ positions; otherwise they would be the same run.)
By maximality of runs, for any $k''>k$, $\LeftMis(T,x_{k''},x_{k''}+p)$ contains at least $k+1$ elements that are at least $j$.
This means that $T[j \dd j+m)$ will not be reported by $T[x_{k''} \dd y_{k''}]$.

This concludes that the query complexity is $\Oh(m + \log \log n + \occ)$.
\end{proof}

\section{\texorpdfstring{$\Oh(n \log^{k-1} n)$}{O(n log\^(k-1) n)}-space \texorpdfstring{$k$}{k}-Approximate Index}\label{sec:general}
We improve upon the approach of Chan et al.~\cite{DBLP:journals/algorithmica/ChanLSTW10} to make it work for a text over any alphabet within the same complexity.
First we show a simpler data structure for (exact) indexing with up to $k$ wildcards in the pattern, as it highlights our approach better. Then we show how it can be adjusted to $k$-Mismatch Indexing.

\subsection{Index for Wildcards in the Pattern}\label{sec:gen_wild}
A wildcard is a don't care symbol that is not in the alphabet of the text or the pattern. A wildcard matches every character of the alphabet; the relation of matching extends naturally to strings with wildcards. We improve the space complexity of an index for pattern matching queries for patterns with up to $k$ wildcards in a string text as shown in \cref{tab:wild_patt}.

\begin{table}[htpb]
\centering
\begin{tabular}{c|c|c|c}
\multirow{2}{*}{\textbf{Space}} & \textbf{Query time} & \multirow{2}{*}{\textbf{Reference}} & \multirow{2}{*}{\textbf{Comment}} \\
& (plus $\Oh(m+\occ)$) & & \\\hline
$\Oh(n \sigma^{k^2} (\log \log n)^k)$ & $\Oh(k)$ & \cite{DBLP:journals/mst/BilleGVV14} &\\\hline
$\Oh(n^{k+1})$ & $\Oh(k)$ & \cite{DBLP:journals/mst/BilleGVV14} &\\\hline
\textcolor{gray}{$\Oh(n \log^k n)$} & \multirow{2}{*}{$\Oh(2^k \log \log n)$} & \cite{DBLP:conf/stoc/ColeGL04} & improved here \\\cline{1-1}\cline{3-4}
$\Oh(n \log^{k-1} n)$ & & \textbf{Thm.~\ref{thm:kwildP}} & \\\hline
$\Oh(n \log n (\log_\alpha n)^{k-1})$ & $\Oh(\alpha^k\log \log n)$ & \cite{DBLP:journals/mst/BilleGVV14} & \\\hline
\textcolor{gray}{$\Oh(n \log n)$} & \textcolor{gray}{$\Oh(\sigma^k \log \log n)$} & \cite{DBLP:conf/stoc/ColeGL04} & improved by \cite{DBLP:journals/mst/BilleGVV14} \\\hline
$\Oh(n)$ & $\Oh(\sigma^k\min\{m,\log \log n\})$ & \cite{DBLP:journals/mst/BilleGVV14} &
\end{tabular}
\caption{Indexing for patterns with up to $k$ wildcards}\label{tab:wild_patt}
\end{table}

\subsubsection{Original \texorpdfstring{$k$}{k}-errata tree for wildcards in the pattern}
Cole et al.~\cite{DBLP:conf/stoc/ColeGL04} proposed the following version of $k$-errata tree for indexing patterns with wildcards.

\begin{theorem}[\cite{DBLP:conf/stoc/ColeGL04}]\label{thm:kerrata_wild_patt}
Let $T$ be a string text of length $n$. The $k$-errata tree for $T$ uses space $\Oh(n \log^k n)$ and, for a pattern $P$ of length $m$ with up to $k$ wildcards, answers a pattern matching query for $P$ in $\Oh(2^k \log\log n + m + \occ)$ time, where $\occ$ is the number of occurrences of $P$ in $T$.
\end{theorem}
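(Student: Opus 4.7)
The plan is to instantiate the $k$-errata tree construction of \cref{sec:errata} on the full set of $n$ suffixes of $T$ (taking $x=n$), specialized to handle wildcards in the pattern rather than mismatches. The key specialization is that a wildcard matches any character, so it suffices to use only the node-based substitution trees (type (\ref{a}) in \cref{sec:errata}), in which the first edge character of each off-path subtree is replaced by the wildcard symbol $\psi$; the heavy-path-based substitution trees (type (\ref{b})), needed for the mismatch variant, are not required here. Construction proceeds recursively level by level, using heavy-light decomposition, weighted search trees of group trees, and attached unrooted $\TreeLCP$ data structures from \cref{thm:unrootedLCP}.

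For the space bound, I invoke \cref{lem:kerrata_copies} with $x=n$: each terminal label (i.e., each suffix of $T$) appears in at most $\Oh(\log^k n)$ of the compact tries of the structure. Summing over all $n$ terminal labels yields $\Oh(n\log^k n)$ explicit nodes across the whole construction, and the $\TreeLCP$ infrastructure contributes only linear overhead per trie by \cref{thm:unrootedLCP}, preserving the bound.

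The query algorithm first preprocesses $P$ in $\Oh(m)$ time using \cref{thm:unrootedLCP}. It then performs a recursive search whose state is (a current compact trie $\T$, a node $v$ in $\T$, a remaining suffix $P'$ of $P$, a remaining wildcard budget $j$). Each recursive step issues a single unrooted $\TreeLCP_v(\T,P')$ query in $\Oh(\log\log n)$ time and extends the match; if $P'$ is exhausted it reports the terminals of the reached subtree in $\Oh(\occ)$ amortized time via preorder-indexed label lists. If the first mismatch is a wildcard of $P'$, the algorithm spawns two recursive calls: one continues in $\T$ past the wildcard (so the wildcard matches the character naturally labeling the heavy-path continuation at this position, with $j$ unchanged), and one descends into the appropriate node-based group tree at the current branching location (matching $\psi$ against the wildcard and consuming one budget unit, $j\mapsto j-1$). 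Each wildcard doubles the branching, and the depth of ``descend'' calls is at most $k$, so the recursion tree has $\Oh(2^k)$ leaves, giving total query time $\Oh(2^k\log\log n+m+\occ)$.

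The main obstacle I expect is ensuring that a single level transition into the correct group tree takes $\Oh(\log\log n)$, which relies on the weighted-search-tree organization of Cole et al.\ to pinpoint the group tree containing the substitution tree at the desired depth on the current heavy path in amortized constant time, on top of the $\Oh(\log\log n)$ cost of the $\TreeLCP$ query itself. Care is also needed to suppress duplicate reports, since the same occurrence may be reachable along multiple recursion paths; this is handled by a standard canonical-branch argument that attributes each occurrence to a unique recursion leaf (e.g., the leaf in which the ``descend'' choice is made precisely at the highest wildcard whose match diverges from the heavy-path character).
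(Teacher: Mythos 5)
Your proposal is essentially the paper's (i.e., Cole, Gottlieb, and Lewenstein's) wildcard-in-pattern errata tree: only node-based (type~(a)) substitution trees, no heavy-path (type~(b)) trees, the space bound via the same heavy-path telescoping argument as \cref{lem:kerrata_copies}, and a two-way branch per wildcard (heavy-path character vs.\ off-path group tree) giving $\Oh(2^k\log\log n + m + \occ)$ query time, so it is correct and follows the same route. The only deviations are harmless over-complications: since a wildcard matches every character, each node needs just a single merged group tree reachable by an $\Oh(1)$-time pointer (the weighted-search-tree hierarchy and the ``pinpointing'' concern belong to the mismatch variant), and no duplicate-suppression is needed because the ``continue'' branch reaches exactly the heavy child's terminals while the ``descend'' branch reaches the light children's terminals, which are disjoint.
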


The recursive structure of the $k$-errata tree in this case is very similar as for mismatches; only the details on how compact tries at subsequent levels are formed are different (actually, simpler). More precisely, all substitution trees of the type \ref{a}, i.e., off-path subtrees $\T^s_{v,a}$, are organized into a single group tree, whereas substitution trees and group trees of type \ref{b} are not formed. Thus, each group tree can be viewed as a sparse suffix tree of $T$ (technically, with a wildcard-character edge as in item \ref{it1} in \cref{sec:errata}).

The query algorithm also has the same recursive structure, but it is simpler, as here the positions of wildcards are known, whereas the positions of mismatches were not known. Here, if the query algorithm is called recursively on a compact trie $\T$ with a suffix $P'$ of the pattern $P$, then the level of the trie plus the number of wildcards in $P'$ does not exceed $k$. The query time is $\Oh(2^k \log \log n)$.

We show how to improve the space complexity from $\Oh(n \log^k n)$ to $\Oh(n \log^{k-1} n)$ retaining the query complexity. We note that this variant was not considered by Chan et al.~\cite{DBLP:journals/algorithmica/ChanLSTW10} as their data structure does not give a significant improvement here. Particularly, it would give space $\Oh(n \log^{k-1} n)$ but make the query time $\Oh(\sigma \cdot 2^k \log \log n)$.

\subsubsection{Storing suffixes at level \texorpdfstring{$k$}{k} compactly}\label{subsec:k_wild}
We save space by explicitly constructing only the $(k-1)$-errata tree. The final step of the recursion is replaced by storing arrays equivalent to group trees at level $k$ compactly.

Let $\T$ be a compact trie at level $k-1$ of the errata tree. For a node $u$ of $\T$, let $\T_{u}$ be the subtree of $\T$ rooted at $u$. By $|\T_u|$ we denote the number of terminals in $\T_u$ (these are the terminals from $\T$ that are located in $\T_u$). We use the heavy-light decomposition of $\T$. For an explicit node $v$ of $\T$, let $u_1,\ldots,u_t$ be all explicit children of $v$ that are not located on the same heavy path as $v$. Let us create a list of path labels of all terminals of the tree $\T_v$ that are located in the subtrees $\T_{u_1},\ldots,\T_{u_t}$, trim their first characters (that immediately follow $v$), and order the resulting strings lexicographically. We denote the list of resulting suffixes of $T$ as $S'_{v,1},\ldots,S'_{v,b_v}$. Furthermore, by $S_{v,1},\ldots,S_{v,b_v}$ we denote the list of these suffixes without their first character removed, but \emph{in the order} of their truncated versions $S'_{v,1},\ldots,S'_{v,b_v}$.

\newcommand{\llabel}{\mathit{label}}
Over all compact tries at level $k-1$ of the errata tree, the lists of trimmed suffixes have total length $\Oh(n \log^k n)$ (as each terminal in $\T$ generates suffixes in $\Oh(\log n)$ nodes $v$, by the properties of heavy paths). Instead of storing them explicitly, we will store the indices
$$\llabel'_v[i]:= n-1-|S'_{v,i}|$$
of the trimmed suffixes using just $\Oh(n \log^k n)$ bits of space in total, i.e., $\Oh(n \log^{k-1} n)$ words of space. This will be done using similar techniques as in Chan et al.~\cite{DBLP:journals/algorithmica/ChanLSTW10}.

\newcommand{\strank}{\mathit{st\mbox{-}rank}}
\newcommand{\treepointer}{\mathit{tree\mbox{-}pointer}}
\newcommand{\modifiedrank}{\mathit{modified\mbox{-}rank}}

Let $v$ be an explicit node of a compact trie $\T$ at level $k-1$ and $u_1,\ldots,u_t$ be all the explicit children of $v$ not on the same heavy path. We assign subsequent \emph{ranks} $1,2,\ldots$ to the subtrees $\T_{u_i}$ so that trees with more terminals receive a smaller rank, breaking ties arbitrarily. We store the following arrays:
\begin{itemize}
\item $\strank'_v[1 \dd b_v]$: If $S_{v,i}$ is a terminal of $\T_v$ located in $\T_{u_j}$, $\strank'_v[i] \in [1 \dd t]$ is the rank of $\T_{u_j}$.
\item $\treepointer'_v[1 \dd t]$: $\treepointer'_v[j]$ points to the subtree $\T_{u_i}$ with rank $j$.
\item $\modifiedrank'_{v,u_a}[1 \dd |\T_{u_a}|]$: $\modifiedrank'_{v,u_a}[j]=i \in [1 \dd b_v]$ if the $j$th lexicographic terminal in $\T_{u_a}$ generates $S'_{v,i}$.
\end{itemize}

\begin{remark}
In \cite{DBLP:journals/algorithmica/ChanLSTW10}, similar arrays $\strank$, $\treepointer$, $\modifiedrank$ were used but in a different context; see \cref{sec:compact_kMis}.
\end{remark}

The proof of next lemma resembles the proof of \cite[Lemma 2]{DBLP:journals/algorithmica/ChanLSTW10}.

\begin{lemma}\label{lem:strank}
The sum of values $\lceil{\log \strank'_v[i]}\rceil$, over all explicit nodes $v$ in compact tries $\T$ at level $k-1$ and $i \in [1 \dd b_{v}]$, is $\Oh(n \log^k n)$.
\end{lemma}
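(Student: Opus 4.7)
The plan is to bound the sum for each individual compact trie $\T$ at level $k-1$ by $\Oh(|\T| \log |\T|)$ and then aggregate across all such tries. First I would fix an explicit node $v$ of $\T$ with off-path explicit children $u_1, \ldots, u_t$ indexed so that $|\T_{u_1}| \ge \cdots \ge |\T_{u_t}|$. Since $\strank'_v[i] = j$ exactly when $S_{v,i}$ lies in $\T_{u_j}$, the inner sum rewrites as $\sum_{i=1}^{b_v} \lceil \log \strank'_v[i] \rceil = \sum_{j=1}^t |\T_{u_j}| \cdot \lceil \log j \rceil$. Because ranks are assigned in non-increasing order of size, $j \cdot |\T_{u_j}| \le \sum_{j' \le j} |\T_{u_{j'}}| \le |\T_v|$, which yields the key size-rank inequality $\lceil \log j \rceil \le \log(|\T_v|/|\T_{u_j}|) + 1$.

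Next I would swap the order of summation and charge each contribution to a terminal of $\T$. A terminal $\ell$ contributes to the sum at $v$ precisely when $v$ is an ancestor of $\ell$ at which the root-to-$\ell$ path leaves a heavy path and enters an off-path subtree. Let $v_1, \ldots, v_r$ be these ancestors and $u_{s_1}, \ldots, u_{s_r}$ the corresponding off-path children on the path to $\ell$. By the heavy-light decomposition property, $|\T_{u_{s_i}}| \le |\T_{v_i}|/2$, and $|\T_{v_{i+1}}| \le |\T_{u_{s_i}}|$ since $v_{i+1}$ is a descendant of $u_{s_i}$. Hence $\ell$'s contribution telescopes: $\sum_{i=1}^r \log(|\T_{v_i}|/|\T_{u_{s_i}}|) = \log \prod_i (|\T_{v_i}|/|\T_{u_{s_i}}|) \le \log(|\T_{v_1}|/|\T_{u_{s_r}}|) \le \log |\T|$, while $r = \Oh(\log |\T|)$ absorbs the additive $1$'s from the ceilings. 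Summing over all terminals of $\T$ then gives $\sum_v \sum_i \lceil \log \strank'_v[i] \rceil = \Oh(|\T| \log|\T|)$ for each individual~$\T$.

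To finish, I would aggregate across all compact tries at level $k-1$. Since each such trie satisfies $|\T| \le n$, we have $|\T| \log|\T| \le |\T| \log n$. By an analogue of \cref{lem:kerrata_copies} applied to the wildcard errata tree (each terminal of the root trie appears in $\Oh(\log^{k-1} n)$ compact tries at level $k-1$), the total terminal count across all compact tries at level $k-1$ is $\Oh(n \log^{k-1} n)$, so $\sum_\T |\T| \log|\T| \le \log n \cdot \sum_\T |\T| = \Oh(n \log^k n)$, as required. I do not anticipate a substantial obstacle beyond verifying the telescoping estimate: the subtree-halving property of heavy-light decomposition is what makes all intermediate terms cancel, so that the sum of $\log$-ratios along a single root-to-leaf path collapses to $\log|\T|$ independently of the number $r$ of off-path transitions.
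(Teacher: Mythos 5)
Your proof is correct and takes essentially the same approach as the paper: both arguments charge each $\strank'_v[i]$ entry to the terminal it corresponds to, bound the per-terminal contribution by $\Oh(\log n)$ via a telescoping product along the heavy-path decomposition of the root-to-terminal path, and aggregate using the fact that the compact tries at level $k-1$ have $\Oh(n\log^{k-1} n)$ terminals in total. Your reorganization—first rewriting the per-node sum as $\sum_j |\T_{u_j}|\lceil\log j\rceil$ before swapping the order of summation—is only a cosmetic difference from the paper's direct per-terminal accounting.
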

\begin{proof}
Let $\ell$ be a terminal in a compact trie $\T$ at level $k-1$. We will sum up values $\ceil{\strank'_v[i]}$ for $v$, $i$ such that $S_{v,i}$ corresponds to $\ell$.

Let $C_1,C_2,\ldots,C_\alpha$ be all heavy paths, top-down, visited on the root-to-$\ell$ path $\pi$. By $\T_{C_i}$ we denote the subtree of $\T$ rooted at the root of path $C_i$.
For $j \in [1 \dd \alpha)$, let $v$ be a node in heavy path $C_j$ where the path $\pi$ leaves the path $C_j$, $u_a$ be the next explicit node on $\pi$, and let $r_j$ be the rank of $\T_{u_a}$. By the definition of rank,
\[r_j\ \le\ \tfrac{|\T_v|}{|\T_{u_a}|}\ \le\ \tfrac{|\T_{C_j}|}{|\T_{u_a}|}\ =\ \tfrac{|\T_{C_j}|}{|\T_{C_{j+1}}|}.\]
We thus have
\[\sum_{j=1}^{\alpha-1} \log r_j\ \le\ \sum_{j=1} ^{\alpha-1} \log\tfrac{|\T_{C_j}|}{|\T_{C_{j+1}}|} \ \le\ \log|\T_{C_1}|\ \le\ \log n.\]

We notice that for every node $v$ and index $i \in [1 \dd b_v]$, $\strank'_v[i]$ uniquely corresponds to $r_j$ for $\ell$ being the locus of $L_vS_{v,i}$ where $L_v$ is the path label of $v$ and $j$ such that the root-to-$\ell$ path leaves the $j$th heavy path $C_j$ at node $v$.
By \cref{thm:kerrata_wild_patt}, there are $\Oh(n \log^{k-1} n)$ terminals in compact tries at level $k-1$, so the sum of all values $\log\strank'_v[i]$ is $\Oh(n \log^k n)$. The sum of values $\ceil{\log\strank'_v[i]}$ is also $\Oh(n \log^k n)$ as the number of these values is $\Oh(n \log^k n)$.
\end{proof}

By \cref{lem:strank}, the $\strank'$ arrays can be represented in $\Oh(n \log^k n)$ bits using variable size encoding that allows $\Oh(1)$-time access~\cite{DBLP:conf/fsttcs/Munro96}. The $\treepointer'$ arrays contain $\Oh(n \log^{k-1} n)$ values in total, one per each compact edge of a level-$(k-1)$ compact trie. Finally, $\modifiedrank'$ arrays can be stored efficiently according to the next lemma that resembles \cite[Lemma 11]{DBLP:journals/algorithmica/ChanLSTW10}.

\begin{lemma}\label{lem:modifiedrank}
All $\modifiedrank'$ arrays can be stored in $\Oh(n \log^k n)$ bits of space so that a (restricted) rank query on an array $\modifiedrank'_{v,u_a}$ can be answered in $\Oh(1)$ time.
\end{lemma}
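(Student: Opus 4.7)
The plan is to store each $\modifiedrank'_{v,u_a}$ using the representation of \cref{thm:Raman}. First I would verify that the sequence is strictly increasing: every terminal in $\T_{u_a}$ has a path label in $\T_v$ whose first character is the one labeling the edge from $v$ to $u_a$, and this character is the same for all of them and distinct from those of the other off-path children. Trimming that shared first character therefore preserves the relative lexicographic order among the terminals of $\T_{u_a}$, so $\modifiedrank'_{v,u_a}$ sends the local lexicographic rank $j\in[1\dd |\T_{u_a}|]$ to a strictly increasing sequence of global ranks in $[1\dd b_v]$. This is exactly the setting of \cref{thm:Raman}, giving a representation of size $\Oh(|\T_{u_a}|\lceil\log(b_v/|\T_{u_a}|)\rceil)$ bits with $\Oh(1)$-time rank.

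The main step is the global space accounting, following the strategy of \cref{lem:strank}. I would split each summand into a logarithmic part and a ceiling ``$+1$'' part. The ``$+1$'' contributions sum to the total number of entries across all $\modifiedrank'$ arrays. Each such entry corresponds to a terminal $\ell$ in a compact trie $\T$ at level $k{-}1$ together with a heavy path that the root-to-$\ell$ path leaves; since the heavy-light decomposition produces only $\Oh(\log n)$ such turns per terminal, and by \cref{thm:kerrata_wild_patt} there are $\Oh(n\log^{k-1}n)$ terminals across level-$(k{-}1)$ tries, this piece contributes $\Oh(n\log^k n)$ bits.

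For the logarithmic part, I would charge each $\log(b_v/|\T_{u_a}|)$ to the same terminal-and-turn pair. At the node $v$ on heavy path $C_j$ where the path to $\ell$ leaves $C_j$, the off-path child it enters is precisely the root of the next heavy path $C_{j+1}$, so $|\T_{u_a}|=|\T_{C_{j+1}}|$, and $b_v\le|\T_v|\le|\T_{C_j}|$. Hence the contribution of this turn is at most $\log(|\T_{C_j}|/|\T_{C_{j+1}}|)$, and telescoping over $j\in[1\dd \alpha-1]$ on a single root-to-$\ell$ path gives $\log|\T_{C_1}|\le\log n$ bits. Summing over all $\Oh(n\log^{k-1}n)$ terminals yields $\Oh(n\log^k n)$ bits overall, matching the claimed bound.

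The main technical care lies in setting up the bijection between entries of the $\modifiedrank'$ arrays and (terminal, heavy-path turn) pairs in level-$(k{-}1)$ tries; once that matches the correspondence already used to analyze $\strank'$ in \cref{lem:strank}, the telescoping and the applications of \cref{thm:Raman} are routine. I would also note that the footnote in \cref{thm:Raman} absorbs the trivial $\ell=r$ case (where rank is the identity and no pointer is needed) and the $o(\ell)+\Oh(\log\log r)$ overhead per array, so only the main $\Oh(|\T_{u_a}|\lceil\log(b_v/|\T_{u_a}|)\rceil)$ term needs to be accounted for.
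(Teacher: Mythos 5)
Your proof is correct, and it takes a genuinely different (and arguably cleaner) route than the paper. After observing that each $\modifiedrank'_{v,u_a}$ is increasing (trimming the shared first character preserves lexicographic order among terminals of $\T_{u_a}$) and applying \cref{thm:Raman} to each array, the paper bounds the total space via a subtree-size recursion: setting $f(\T_v)$ to be the total space for all arrays in $\T_v$, it writes $f(\T_v) \le \sum_i \Oh(|\T_{u_i}|\log(|\T_v|/|\T_{u_i}|)) + f(\T_{u_i})$ (summing over the off-heavy-path explicit children) and solves this to $\Oh(|\T_v|\log|\T_v|)$. You instead charge the space directly to (terminal, heavy-path-turn) pairs and telescope along the heavy paths, which is exactly the same accounting argument the paper already uses to prove \cref{lem:strank}. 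The two approaches yield the same $\Oh(n\log^k n)$-bit bound; your version has the advantage of reusing the argument from \cref{lem:strank} verbatim, and it also makes explicit the entry-level bijection with terminal/turn pairs (which the recursion in the paper glosses over — in particular the paper's recurrence as written does not visibly account for the heavy child's contribution and must be read slightly charitably). One minor bookkeeping point you handle correctly but should keep in mind: the ``$+1$'' (ceiling) contributions are one per array entry, not one per array, and the bound $\Oh(n\log^k n)$ on the total number of entries — each terminal of a level-$(k{-}1)$ trie appearing $\Oh(\log n)$ times, once per turn — is exactly what makes that part work.
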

\begin{proof}
Let $v$ be an explicit node in a compact trie of level $k-1$ and $u_1,\ldots,u_t$ be its explicit children located on a different heavy path. Recall that in a rank query, given array $\modifiedrank'_{v,u_a}$ and a value $i$, we are to retrieve the index $j$ such that $\modifiedrank'_{u_a}[j]=i$ or state that $j$ does not exist.
The sequence $\modifiedrank'_{v,u_a}$ is increasing, has length $|\T_{u_a}|$ and contains elements in $[1 \dd |\T_v|]$, so by \cref{thm:Raman}, it can be represented in $\Oh\left(|T_v| \ceil{\log\tfrac{|\T_v|}{|\T_{u_a}|}}\right)$ bits of space to allow rank queries. We have $\Oh\left(|T_v| \ceil{\log\tfrac{|\T_v|}{|\T_{u_a}|}}\right)=\Oh\left(|T_v| \log\tfrac{|\T_v|}{|\T_{u_a}|}\right)$ as $|\T_{u_a}| \le \tfrac12 |\T_v|$ (the child $u_a$ is light).

Let $f(\T_v)$ be the total space in bits required to store the $\modifiedrank'$ arrays for all valid pairs of nodes in the subtree $\T_v$. We have
\[f(\T_v) \le \sum_{i=1}^t \Oh\left(|\T_{u_i}| \log\tfrac{|\T_v|}{|\T_{u_i}|}\right)+f(\T_{u_i})\]
which solves to $f(\T_v)=\Oh(|\T_v| \log |\T_v|)$ for any node $v$. By taking as nodes $v$ the roots of all compact tries at level $k-1$, we obtain the desired bound.
\end{proof}

Finally, an array called a sparse suffix array storing a lexicographic list of all terminals of $\T$ is stored, with each terminal represented as the position in $T$ of the corresponding suffix. Each explicit node $u$ of $\T$ stores a fragment of the list that corresponds to the terminals of $\T$ that are located in $\T_{u}$. The total size of the sparse suffix arrays stored is $\Oh(n \log^{k-1} n)$.

In the next lemma, we show that the arrays are sufficient to store labels of the trimmed suffixes.

\begin{lemma}\label{lem:label_wild}
Let $v$ be an explicit node in a compact trie $\T$ at level $k-1$ of the errata tree. Assuming random access to the sparse suffix array of $\T$, arrays $\strank'_v$ and $\treepointer'_v$, and $\Oh(1)$-time rank queries on all arrays $\modifiedrank'_{v,u_a}$, given $i$, $\llabel'_v[i]$ can be computed in $\Oh(1)$ time. 
\end{lemma}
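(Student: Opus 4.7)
The plan is to compute $\llabel'_v[i]$ by chasing three short pointers through the stored arrays and then doing one lookup into the sparse suffix array, with each step costing $\Oh(1)$ under the hypotheses of the lemma. The whole proof amounts to checking that the arrays are defined so that this chain lands on the correct terminal.

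First, I would use $\strank'_v$ to identify which off-path subtree $\T_{u_a}$ contains the terminal whose trimmed suffix is $S'_{v,i}$: set $r := \strank'_v[i]$, which is the rank of the relevant child of $v$, and then set $u_a := \treepointer'_v[r]$, which is a pointer to that child. Both steps are $\Oh(1)$ by random access. Next, I would recover the lexicographic index $j$ of this terminal within $\T_{u_a}$. By definition, $\modifiedrank'_{v,u_a}[j] = i$ holds exactly when the $j$-th lexicographic terminal of $\T_{u_a}$ is the one that generates $S'_{v,i}$, so the desired $j$ is returned by a single (restricted) rank query on $\modifiedrank'_{v,u_a}$ with key $i$, which by hypothesis takes $\Oh(1)$ time.

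Finally, I would read the starting position in $T$ of the corresponding suffix directly from the sparse suffix array. The sparse suffix array of $\T$ stores all terminals of $\T$ in lexicographic order, and $u_a$ has attached to it the contiguous fragment corresponding to terminals of $\T_{u_a}$ (listed in the lexicographic order of the full suffixes, which coincides with the order of their first-character-trimmed versions, since all terminals in $\T_{u_a}$ share that first character). Hence the $j$-th entry in this fragment is precisely the starting position $p$ of the suffix $S_{v,i} = T[p \dd n)$. Because $|S_{v,i}| = n - p$ and $|S'_{v,i}| = |S_{v,i}| - 1$, this value equals $n - 1 - |S'_{v,i}| = \llabel'_v[i]$, as required.

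The four lookups — $\strank'_v[i]$, $\treepointer'_v[r]$, one rank query on $\modifiedrank'_{v,u_a}$, and one sparse suffix array access at a known offset inside the fragment attached to $u_a$ — each cost $\Oh(1)$, so the total time is $\Oh(1)$. There is no genuine technical obstacle here; the only subtlety to verify carefully is that the lexicographic order used by the sparse suffix array (on untrimmed terminals inside $\T_{u_a}$) agrees with the order that defined $S'_{v,1},\ldots,S'_{v,b_v}$ (on trimmed terminals), which holds precisely because all terminals inside a fixed subtree $\T_{u_a}$ begin with the same character after $v$.
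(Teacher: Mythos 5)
Your proof traces exactly the same pointer chase as the paper's: use $\strank'_v[i]$ and $\treepointer'_v$ to locate the off-path subtree $\T_{u_a}$, recover the lexicographic index $j$ within $\T_{u_a}$ by a restricted rank query on $\modifiedrank'_{v,u_a}$, and then read the $j$-th entry of the fragment of $\T$'s sparse suffix array attached to $u_a$. The first two steps and the reasoning about why the order within the $\T_{u_a}$-fragment agrees with the order indexing $\modifiedrank'_{v,u_a}$ are fine.

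The last step is slightly off, though. The sparse suffix array stores, for each terminal of $\T$, the position $\ell$ in $T$ of the \emph{full} suffix that terminal represents (its label), not the starting position of $S_{v,i}$. These two quantities differ by the string depth $d$ of $v$ in $\T$: the path label from $v$ to the terminal is $T[\ell+d\dd n)$, so $S_{v,i}=T[\ell+d\dd n)$, $S'_{v,i}=T[\ell+d+1\dd n)$, and $\llabel'_v[i]=n-1-|S'_{v,i}|=\ell+d$, whereas the array lookup returns $\ell$. So the claim ``the $j$-th entry is precisely the starting position $p$ of $S_{v,i}$'' is not what is stored; you still need to add $d$ (a known constant for the node $v$) to the retrieved value. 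This is an $\Oh(1)$ correction, so the stated time bound is unaffected, and after this fix your proof matches the paper's (which compresses this last step into ``from the suffix we retrieve $S'_{v,i}$, which implies $\llabel'_v[i]$'').
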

\begin{proof}
The original suffix $S_{v,i}$ that generates the trimmed suffix $S'_{v,i}$ originates from the trie $\T_{u_a}=\treepointer'_v[\strank'_v[i]]$ that can be retrieved in $\Oh(1)$ time. We compute an index $j$ such that $\modifiedrank'_{v,u_a}[j]=i$ using a rank query in $\Oh(1)$ time. The suffix represented by the $j$th terminal in $\T_{u_a}$ can be extracted from the $j$th element in the fragment of the sparse suffix array for $\T$ that corresponds to $\T_{u_a}$. From the suffix we retrieve $S'_{v,i}$ which implies $\llabel'_v[i]$.
\end{proof}

\subsubsection{Compact \texorpdfstring{$k$}{k}-errata tree for wildcards in the pattern}
The key improvements that allow to achieve a more compact data structure are the developments of \cref{subsec:k_wild} as well as sampling suffixes among $S'_{v,1},\ldots,S'_{v,b_v}$ according to \cref{lem:utility}.

\wildP*
\begin{proof}
\emph{Data structure.}
As mentioned before, we store the $(k-1)$-errata tree explicitly together with the data structure for answering unrooted $\TreeLCP$ queries in compact tries (\cref{thm:unrootedLCP}). By \cref{thm:kerrata_wild_patt}, this takes $\Oh(n \log^{k-1} n)$ space. For all explicit nodes $v$ of compact tries at level $k-1$, we store (1) the array $\strank'_v$ using variable size encoding~\cite{DBLP:conf/fsttcs/Munro96}, which take $\Oh(n \log^{k-1} n)$ total space by \cref{lem:strank}; (2) the $\treepointer'_v$ array in $\Oh(n \log^{k-1} n)$ total space; and (3) the $\modifiedrank'_{v,u_a}$ arrays for all $u_a$ with rank queries using \cref{lem:modifiedrank} also in $\Oh(n \log^{k-1} n)$ total space. We also store the sparse suffix array of each compact trie $\T$ at level $k-1$.

For each node $v$ of each compact trie $\T$ at level $k-1$ of the data structure, let $\T'_v$ denote the compact trie of trimmed suffixes $S'_{v,1},\ldots,S'_{v,b_v}$. We select every $\floor{\log n}$th trimmed suffix among $S'_{v,1},\ldots,S'_{v,b_v}$, as well as $S'_{v,1}$ and $S'_{v,b_b}$, and construct a compact trie $\T''_v$ of these trimmed suffixes. Each terminal $S'_{v,i}$ in $\T''_v$ stores $i$ as its number. The total size of the compact tries $\T''_v$ is $\Oh(n \log^{k-1} n)$, whereas trees $\T'_v$ are \emph{not} stored explicitly. We construct a data structure for answering unrooted $\TreeLCP$ queries on each of $\T''_v$ (\cref{thm:unrootedLCP}), which does not affect the space complexity.

We also store the data structure of \cref{lem:rootedLCP_special} together with the data structure for $\LCP$-queries in $T$~\cite{DBLP:conf/latin/BenderF00}; both take just $\Oh(n)$ space.

\emph{Queries.}
We execute a $(k-1)$-errata tree query for $P$ (\cref{thm:kerrata_wild_patt}). As a result, in time $\Oh(2^k \log \log n)$, $\Oh(2^{k-1})$ pairs $(\T,P')$ are computed where $\T$ is a compact trie at level $k-1$ and $P'$ is a suffix of $P$ with at most 1 wildcard; our goal is to match $P'$ against terminals in $\T$. We compute $v=\TreeLCP(\T,P')$ (in $\Oh(\log \log n)$ time). If the whole $P'$ was matched, it did not contain the wildcard symbol and all the terminals in the subtree of $v$ can be reported. If the first unmatched character of $P'$ was not the wildcard, there are no occurrences. Otherwise, we need to select a character on an edge outgoing from $v$ to be matched against the wildcard. First, we consider choosing the symbol on the heavy path of $v$. To this end, we ask an unrooted $\TreeLCP$ query on the node located one symbol below $v$ on the heavy path and the suffix $P''$ of $P$ starting immediately after the last wildcard symbol.

Next, we consider choosing the symbol that is not on the heavy path of $v$. In other words, we need to identify all trimmed suffixes among $S'_{v,1},\ldots,S'_{v,b_v}$ that have $P''$ as a prefix. The result will consist of a connected sublist of this list that is computed in $\Oh(\log \log n)$ time using \cref{lem:utility} for $k=k'=0$. Here we use the fact that for any index $i$, we can identify the suffix $S'_{v,i}$ by computing $\llabel'_v[i]$ in $\Oh(1)$ time by \cref{lem:label_wild}. To compare a suffix $P'$ of $P$ without wildcards with a suffix of $T$ in $\Oh(1)$ time, we use \cref{lem:rootedLCP_special} together with the data structure for $\LCP$-queries in $T$~\cite{DBLP:conf/latin/BenderF00}.\footnote{We could also use \cref{lem:kangaroo_k} but it is not necessary, as there are no modifications in suffixes.}

Overall, the query complexity is $\Oh(2^k \log \log n + m + \occ)$, as desired.
\end{proof}

\subsection{\texorpdfstring{$k$}{k}-Mismatch Indexing}\label{sec:compact_kMis}
We show how our approach from \cref{sec:gen_wild} can be used for $k$-Mismatch Indexing. Let us recall some parts of the approach of Chan et al.~\cite{DBLP:journals/algorithmica/ChanLSTW10}.\footnote{Actually, their approach was stated explicitly only for the case of $k=1$; we make the necessary adaptations for an arbitrary $k$.}

\subsubsection{Storing 1-modified suffixes at level \texorpdfstring{$k$}{k} compactly}
As before, we explicitly construct only the $(k-1)$-errata tree.
Let $\T$ be a compact trie at level $k-1$. For a heavy path $C$ in $\T$, let $\T_C$ be the subtree of the root of $C$. We consider all terminals in $\T_C$; for a terminal representing suffix $S$, we find the node $v$ of $C$ where the path representing $S$ diverges from $C$ (if any) and create a 1-modified suffix by changing the character after $v$ in $S$ to the character that follows $v$ in the heavy path. Let $S'_{C,1},\ldots,S'_{C,b_C}$ be the list of all such 1-modified suffixes, ordered lexicographically, and let $S_{C,1},\ldots,S_{C,b_C}$ be the list of original suffixes but in the order of the 1-modified suffixes. Each terminal $\ell$ of $\T$ implies a 1-modified suffix $S'_{C,i}$ for $\Oh(\log n)$ heavy paths $C$ on the root-to-$\ell$ path, so the total number of modified suffixes $S'_{C,i}$ is $\Oh(n \log^k n)$. Instead of storing modified suffixes $S'_{C,i}$ explicitly, we store in a compact way the indices $\llabel_C[i]:= n-1-|S'_{C,i}|$.

The next lemma is proved by representing in a compact way arrays $\strank$, $\treepointer$, and $\modifiedrank$ defined for heavy paths of each compact trie, as well as a sparse suffix array for each compact trie.

\begin{lemma}[{\cite[Section 3.3 and Lemma 11]{DBLP:journals/algorithmica/ChanLSTW10}}]\label{lem:labelChan}
There is a data structure of size $\Oh(n \log^{k-1} n)$ that, given a heavy path $C$ of a compact trie $\T$ at level $k-1$ of the errata tree and index $i \in [1 \dd b_C]$, returns $\llabel_C[i]$ in $\Oh(1)$ time.
\end{lemma}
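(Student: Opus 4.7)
The plan is to mirror the construction of \cref{sec:gen_wild} but indexed by heavy paths rather than branching nodes. For each heavy path $C$ in a level-$(k-1)$ compact trie $\T$, enumerate the roots $u_1, \ldots, u_{t_C}$ of the off-path subtrees attached at any node of $C$, ranked in decreasing order of size. Store the following: $\strank_C[1 \dd b_C]$, giving for each index $i$ the rank of the off-path subtree containing the terminal represented by $S_{C,i}$ (with a reserved value for the terminals that lie directly on $C$ and therefore produce unmodified suffixes); $\treepointer_C[1 \dd t_C]$, inverting the rank assignment; $\modifiedrank_{C,u_a}[1 \dd |\T_{u_a}|]$, the increasing sequence of indices $i \in [1 \dd b_C]$ contributed by terminals of $\T_{u_a}$; and a sparse suffix array for each $\T$, with each explicit off-path root $u_a$ pointing into its contiguous range.

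The space analyses replay those of the wildcard case. For $\strank$, fix a terminal $\ell$ and let $C_1, \ldots, C_\alpha$ be the heavy paths traversed on the root-to-$\ell$ path. When leaving $C_j$ into $C_{j+1}$, the entered subtree has size $|\T_{C_{j+1}}|$, and at most $|\T_{C_j}|/|\T_{C_{j+1}}|$ of $C_j$'s off-path subtrees can be that large or larger (their sizes sum to at most $|\T_{C_j}|$). Hence the rank assigned at $C_j$ is at most $|\T_{C_j}|/|\T_{C_{j+1}}|$, and $\sum_j \log(|\T_{C_j}|/|\T_{C_{j+1}}|)$ telescopes to $\log |\T| \le \log n$ per terminal. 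Multiplied by the $\Oh(n \log^{k-1} n)$ terminals across level-$(k-1)$ tries (from \cref{thm:kerrata}), this yields $\Oh(n \log^k n)$ bits, i.e., $\Oh(n \log^{k-1} n)$ words via the variable-length encoding of~\cite{DBLP:conf/fsttcs/Munro96}. For $\modifiedrank$, \cref{thm:Raman} stores $\modifiedrank_{C,u_a}$ in $\Oh(|\T_{u_a}| \lceil \log(b_C / |\T_{u_a}|) \rceil)$ bits (using $|\T_{u_a}| \le \tfrac12 |\T_C|$ as $u_a$ is off-path), and the same telescoping across heavy paths bounds the total at $\Oh(n \log^k n)$ bits. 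The $\treepointer$ arrays and the sparse suffix arrays each contribute $\Oh(n \log^{k-1} n)$ words overall.

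Given $(C, i)$, the $\Oh(1)$-time lookup parallels \cref{lem:label_wild}: read $r = \strank_C[i]$, take $u_a = \treepointer_C[r]$, run a constant-time rank query on $\modifiedrank_{C,u_a}$ to locate the position $j$ of the terminal generating $S'_{C,i}$, and read the starting position $p$ of the $j$-th terminal of $\T_{u_a}$ from the sparse suffix array via $u_a$'s stored offset. Since the 1-modification is a character substitution, $|S'_{C,i}| = |S_{C,i}|$, so $\llabel_C[i]$ follows immediately. The main obstacle is the first telescoping step: because off-path subtrees along $C$ can be attached at many distinct nodes, the rank of $\T_{C_{j+1}}$ is not tied to a single branching node as in the wildcard case. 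This is resolved by the counting observation above — ranks are assigned globally along $C$ by size, so bounding the rank by $|\T_{C_j}|/|\T_{C_{j+1}}|$ reduces to a simple sum-of-sizes argument — and by handling the reserved-rank case (terminal lying on $C$) via an explicit pointer into the heavy-path terminal list, following~\cite{DBLP:journals/algorithmica/ChanLSTW10}.
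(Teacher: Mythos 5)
Your proposal is correct and follows exactly the approach the paper attributes to Chan et al. — heavy-path-indexed $\strank_C$, $\treepointer_C$, and $\modifiedrank_{C,u_a}$ arrays plus a sparse suffix array — with the same rank-telescoping and Raman-encoding analyses the paper spells out in detail for the analogous node-based arrays in \cref{lem:strank}, \cref{lem:modifiedrank}, and \cref{lem:label_wild}. The one genuine adaptation you make, bounding the globally-assigned rank along a heavy path by $|\T_{C_j}|/|\T_{C_{j+1}}|$ via a sum-of-sizes argument rather than a per-branching-node argument, is the right fix and matches what the cited reference does.
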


\subsubsection{Compact \texorpdfstring{$k$}{k}-errata tree}
The main differences between the proof of the theorem below and of \cref{thm:kwildP} are that: (1) here we use compact representations of both types of labels, $\llabel_C$ for heavy paths $C$ and $\llabel'_v$ for nodes $v$, and (2) for reporting 1-mismatch occurrences using labels $\llabel'_v$, we use an additional data structure to skip over potential exact occurrences to avoid reporting such occurrences multiple times. The data structure consists of bit arrays $B$ together with queries: $\mathit{rank}_q(x)=|\{i \in [1 \dd x]\,:\,B[i]=q\}|$, $\mathit{select}_q(i)=\min\{x \in [1 \dd |B|]\,:\,\mathit{rank}_q(x)=i\}$, for $q \in \{0,1\}$. It is known \cite{DBLP:conf/focs/Jacobson89} that such queries can be answered in $\Oh(1)$ time using a data structure of $o(|B|)$ bits.

\thmone*
\begin{proof}
\emph{Data structure.}
We store the $(k-1)$-errata tree explicitly. We also store the data structure of \cref{lem:labelChan} for computing $\llabel_C$. Finally, we store the arrays $\strank'_v$, $\treepointer'_v$, and $\modifiedrank'_{v,u_a}$ with rank queries, as well as sparse suffix arrays for all compact tries at level $k-1$ as in wildcard indexing. The data structures take $\Oh(n \log^{k-1} n)$ space.

For each heavy path $C$ of each compact trie $\T$ at level $k-1$ of the data structure, let $\T'_C$ denote the compact trie of 1-modified suffixes $S'_{C,1},\ldots,S'_{C,b_C}$. We select every $\floor{\log n}$th modified suffix among $S'_{C,1},\ldots,S'_{C,b_C}$, as well as $S'_{C,1}$ and $S'_{C,b_C}$, and construct a compact trie $\T''_C$ of these modified suffixes. Each terminal $S'_{C,i}$ in $\T''_C$ stores $i$ as its number. The total size of the compact tries $\T''_C$ is $\Oh(n \log^{k-1} n)$, whereas trees $\T'_C$ are \emph{not} stored explicitly. We construct our data structure for answering unrooted $\TreeLCP$ queries on each of $\T''_C$, with just $k=1$ modification (\cref{lem:modifiedLCP}).

We also store the (analogous) compact tries $\T''_v$ of sampled terminals $S'_{v,1}$ with unrooted $\TreeLCP$ data structure exactly as in the proof of \cref{thm:kwildP}. A data structure for $\LCP$-queries in $T$ is stored~\cite{DBLP:conf/latin/BenderF00}. Also, a data structure for kangaroo jumps (for $k=1$) of \cref{lem:kangaroo_k} is stored.

\newcommand{\diff}{\mathit{diff}}
Let $v$ be an explicit node in a compact trie at level $k-1$. For $i \in [1 \dd b_v]$, by $c_{v,i}$ we denote the first character of suffix $S_{v,i}$ (i.e., $S_{v,i}=c_{v,i} S'_{v,i}$). We store a bit array $\diff_v$ of size $b_v-1$ such that $\diff_v[i]=1$ if and only if $c_{v,i} \ne c_{v,i+1}$, together with a data structure for $\mathit{rank}_q$ and $\mathit{select}_q$ queries~\cite{DBLP:conf/focs/Jacobson89} on $\diff_v$. The bitmasks and rank/select data structures take $\Oh(n \log^k n)$ bits in total, i.e., $\Oh(n \log^{k-1} n)$ words of space.

\emph{Queries.}
We execute a $(k-1)$-errata tree query for $P$. As a result, in time $\Oh(\log^{k-1} \log \log n)$, $\Oh(\log^{k-1} n)$ pairs $(\T,P')$ are computed where $\T$ is a compact trie at level $k-1$ and $P'$ is a suffix of $P$; our goal is to match $P'$ against terminals in $\T$ with up to 1 mismatch.
We compute $v=\TreeLCP(\T,P')$ (in $\Oh(\log \log n)$ time). If the whole $P'$ was matched, all the terminals in the subtree of $v$ can be reported for the start.

Let us decompose the path from the root of $\T$ to $v$ into $h=\Oh(\log n)$ prefix fragments $C'_1,\ldots,C'_h$ of heavy paths $C_1,\ldots,C_h$, top-down. First, we find 1-mismatch matches of $P'$ that contain the mismatch within one of $C'_1,\ldots,C'_h$, say $C'_p$. In other words, we need to identify all 1-modified suffixes among $S'_{C_p,1},\ldots,S'_{C_p,b_{C_p}}$ that have a certain suffix $P''$ of $P'$ as a prefix. The result will consist of a connected sublist of this list that is computed in $\Oh(\log \log n)$ time using \cref{lem:utility} for $k=1$ and $k'=0$, with the aid of queries to $\llabel_{C_p}$ array (\cref{lem:labelChan}) and kangaroo jumps (\cref{lem:kangaroo_k}).

We find the remaining 1-mismatch matches of $P'$ similarly as in \cref{thm:kwildP}, but using the $\diff$ arrays. If such a match contains the character that immediately follows $C'_p$ on heavy path $C_p$, it is found using an unrooted $\TreeLCP$ query on $\T$. Otherwise, let $v_p$ be the bottommost node of $C'_p$. First let $p<h$. Then the result will correspond to a connected sublist of $S'_{v_p,1},\ldots,S'_{v_p,b_{v_p}}$ but without the elements $i$ such that $c_{v_p,i}$ matches the first character of $P'$ after having matched $C'_p$ (to avoid reporting exact matches of $P'$ multiple times). We find the connected sublist in $\Oh(\log \log n)$ time using \cref{lem:utility} for $k=k'=0$ and a suffix of $P'$, with the aid of queries to $\llabel'_{v_p}$ array. Then we report only the trimmed suffixes for which the preceding character is not equal to the corresponding character of $P'$ using rank/select on $\diff_{v_p}$ array. More precisely, whenever we find a trimmed suffix $S'_{v_p,i}$ that does have the unwanted preceding character $c_{v_p,i}$, we skip to the next one that does not by identifying the nearest index $i' \ge i$ such that $\diff_{v_p}[i']=1$. Thus, at most every second step a new occurrence is reported. If $p=h$, we do not need to use the $\diff_{v_p}$ array as $P'$ was not matched beyond $C'_h$.

After $\Oh(m)$ preprocessing, for each initial pair $(\T,P')$, we consider $\Oh(\log n)$ heavy paths and for each of them perform computations in $\Oh(\log \log n)$ time, plus the output size. Overall, the query complexity is $\Oh(\log^k n \log \log n + m + \occ)$, as desired.
\end{proof}

\section{Proofs of Lemmas from Section \ref{sec:unrooted}}\label{sec:unrooted_proofs}

In the weighted ancestor queries problem over compact trie $\T$, given a node $v$ of $\T$ and an integer $\ell \ge 0$, we are to compute the (explicit or implicit) node $w$ that is an ancestor of $v$ and has string depth $\ell$. For a compact trie $\T$, there is a data structure of $\Oh(|\T|)$ size that can answer weighted ancestor queries on $\T$ in $\Oh(\log \log |\T|)$ time~\cite{DBLP:journals/talg/AmirLLS07}.

\canonical*
\begin{proof}
For each tree $\T_i$, $i \in [1 \dd t]$, we create a compact trie $\T'_i$ of suffixes of $T$ as follows: for every fragment $T[a \dd b)$ that is a terminal in $\T_i$, suffix $T[a \dd n)$ is represented in $\T'_i$. Each suffix represented in $\T'_i$ stores one fragment from $\T_i$ it originates from. Each explicit node of each $\T'_i$ stores an arbitrary terminal from its subtree. Moreover, each explicit node of each $\T'_i$ stores the corresponding node---that is, the node with the same path label---in $\T_i$, if it exists, and \emph{vice versa}. We create the data structure of \cref{thm:unrootedLCP} for answering unrooted $\TreeLCP$ queries in tries $\T'_1,\ldots,\T'_t$. For each of the tries $\T'_i$, we also store a data structure for weighted ancestor queries on $\T'_i$~\cite{DBLP:journals/talg/AmirLLS07}.

Let $v'$ be the node of $\T'_i$ that corresponds to $v$ in $\T_i$. Node $v'$ is computed via the nearest explicit descendant $y$ of $v$, its corresponding node $y'$ in $\T'_i$, and a weighted ancestor query from $y'$. Upon a query $\TreeLCP_v(\T_i,P')$, we ask a query $\TreeLCP_{v'}(\T'_i,P')$. Let $w'$ be the resulting node, $T[a \dd n)$ be the string label of any terminal in the subtree of $w'$, and $T[a \dd b)$ be the corresponding fragment of $T$ that was represented in $\T_i$. We ask a weighted ancestor query for $w'$ at depth $\min(b-a,d)$, where $d$ is the string depth of $w'$, that returns a node $u'$. Then $\TreeLCP_v(\T_i,P')$ returns the node $u$ in $\T_i$ that corresponds to $u'$.

Let us argue for the correctness of the query procedure. Let node $x$ in $\T_i$ be the correct result of $\TreeLCP_v(\T_i,P')$ query. The path label of $u'$ is a prefix of $T[a \dd b)$, so node $u$ indeed exists in $\T_i$. We will show that $x=u$. The path from $v'$ to $w'$ has a label being a prefix of $P'$. Hence, the path from $v'$ to $u'$ (equivalently, the path from $v$ to $u$) has a label $P'[0 \dd z)$ for some index~$z$. This concludes that $x$ is a descendant of $u$. Assume to the contrary that $x$ is a strict descendant of $u$. In this case, $u' \ne w'$ and $u'$ is a terminal with path label $T[a \dd b)$. Then the label of the path from $v$ to $x$ has a prefix $P'[0 \dd z]$. Let $X$ be the path label of $v$. String $X\cdot P'[0 \dd z]$ is a prefix of $T[a \dd n)$, so it equals $T[a \dd b]$. As $x$ in present in $\T_i$, this means that $\T_i$ is not given in a canonical form, as the terminal representing $T[a \dd b)$ has an outgoing edge starting with label $T[b]$; a contradiction.

All data structures use $\Oh(n+\sum_{i=1}^t (|\T_i|+|\T'_i|))=\Oh(n+\sum_{i=1}^t |\T_i|)$ space and any unrooted query $\TreeLCP_v(\T_i,P)$ is answered in $\Oh(\log \log n)$ time (after $\Oh(m)$ preprocessing for pattern $P$, $|P|=m$, that is required in \cref{thm:unrootedLCP}).
\end{proof}

\modified*
\begin{proof}
\emph{Data structure.}
Each trie $\T_i$, for $i \in [1 \dd t]$, is (edge-)decomposed into compact tries of factors of $T$ and compact tries of single-character strings recursively in phases numbered 1,2,\ldots; see \cref{fig:decomp}. Odd and even phases are processed differently. In an odd phase, assume a subtree $\T$ of $\T_i$ is still to be decomposed. For each $(\le k)$-modified suffix of $T$ that is present in $\T$, we select its longest prefix until the next modification. A compact trie of these prefixes, denoted $\T'$, forms a tree in the decomposition. The subtrees of $\T$ obtained upon the removal of $\T'$ become the input of an even phase. In an even phase, for each remaining tree, we form a tree of string depth one by taking the first character of every edge going from the root; the remaining trees are input to the next odd phase. The process ends when there are no further trees to be decomposed. Naturally, each tree in the decomposition stores links to the trees formed at a subsequent phase in the appropriate nodes (that were explicit in $\T$).

\begin{figure}[htpb]
\centering
\renewcommand{\tabcolsep}{0pt}
\newcommand{\dy}{0.6}
\begin{tikzpicture}[scale=1]

\begin{scope}[xshift=-9cm]
\draw (0,0) node[above right] {aa\textcolor{red}{b}aaba};
\draw (0,-0.5) node[above right] {aaba\textcolor{red}{a}cb};
\draw (0,-1) node[above right] {aba\textcolor{red}{a}aa};
\draw (0,-1.5) node[above right] {abaabb};
\draw (0,-2) node[above right] {bca\textcolor{red}{b}};
\draw (0,-2.5) node[above right] {bc\textcolor{red}{b}a};
\draw (0,-3) node[above right] {bc\textcolor{red}{c}aaa};
\draw (0,-3.5) node[above right] {bc\textcolor{red}{c}abc};
\end{scope}

\draw (0,0.5*\dy) -- node[sloped,left,rotate=270] {
\begin{tabular}{c}
a
\end{tabular}
} (-2,-1*\dy);

\draw (-2,-1*\dy) -- node[sloped,left,rotate=270] {
\begin{tabular}{c}
a\\\textcolor{red!70!black}{\,b\,}\\a
\end{tabular}
} (-3,-4*\dy);

\draw (-3,-4*\dy) -- node[sloped,right,rotate=90] {
\begin{tabular}{c}
\textcolor{red!70!black}{\,a\,}
\end{tabular}
} (-3,-5*\dy);

\draw (-3,-5*\dy) -- node[sloped,left,rotate=270] {
\begin{tabular}{c}
b\\a
\end{tabular}
} (-3.5,-7*\dy);

\draw (-3,-5*\dy) -- node[sloped,right,rotate=90] {
\begin{tabular}{c}
c\\b
\end{tabular}
} (-2.5,-7*\dy);

\draw (-2,-1*\dy) -- node[sloped,right,rotate=90] {
\begin{tabular}{c}
b\\a\\\textcolor{red!70!black}{\,a\,}
\end{tabular}
} (-1,-4*\dy);

\draw (-1,-4*\dy) -- node[sloped,left,rotate=270] {
\begin{tabular}{c}
a
\end{tabular}
} (-1.5,-5*\dy);

\draw (-1.5,-5*\dy) -- node[sloped,left,rotate=90] {
\begin{tabular}{c}
a
\end{tabular}
} (-1.5,-6*\dy);

\draw (-1,-4*\dy) -- node[sloped,right,rotate=90] {
\begin{tabular}{c}
b\\b
\end{tabular}
} (-0.5,-6*\dy);

\draw (0,0.5*\dy) -- node[sloped,right,rotate=90] {
\begin{tabular}{c}
b\\c
\end{tabular}
} (1.5,-2*\dy);

\draw (1.5,-2*\dy) -- node[sloped,left,rotate=270] {
\begin{tabular}{c}
a
\end{tabular}
} (0.5,-3*\dy);

\draw (0.5,-3*\dy) -- node[sloped,right,rotate=90] {
\begin{tabular}{c}
\textcolor{red}{b}
\end{tabular}
} (0.5,-4*\dy);

\draw (1.5,-2*\dy) -- node[sloped,right,rotate=270] {
\begin{tabular}{c}
\textcolor{red}{b}
\end{tabular}
} (1.3,-3*\dy);

\draw (1.3,-3*\dy) -- node[sloped,right,rotate=90] {
\begin{tabular}{c}
a
\end{tabular}
} (1.3,-4*\dy);

\draw (1.5,-2*\dy) -- node[sloped,right,rotate=90] {
\begin{tabular}{c}
\textcolor{red}{c}
\end{tabular}
} (2.5,-3*\dy);

\draw (2.5,-3*\dy) -- node[sloped,right,rotate=90] {
\begin{tabular}{c}
a
\end{tabular}
} (2.5,-4*\dy);

\draw (2.5,-4*\dy) -- node[sloped,left,rotate=270] {
\begin{tabular}{c}
a\\a
\end{tabular}
} (2,-6*\dy);

\draw (2.5,-4*\dy) -- node[sloped,right,rotate=90] {
\begin{tabular}{c}
b\\c
\end{tabular}
} (3,-6*\dy);

\draw[ultra thick,blue] (0,0.5*\dy) -- (-2,-1*\dy) -- (-3,-4*\dy)  (-2,-1*\dy) -- (-1,-4*\dy) -- (-0.5,-6*\dy)  (0,0.5*\dy) -- (1.5,-2*\dy) -- (0.5,-3*\dy);
\draw[ultra thick,brown] (-3.5,-7*\dy) -- (-3,-5*\dy) -- (-2.5,-7*\dy)  (-1.5,-5*\dy) -- (-1.5,-6*\dy)  (1.3,-3*\dy) -- (1.3,-4*\dy)  (2.5,-3*\dy) -- (2.5,-4*\dy) -- (2,-6*\dy)  (2.5,-4*\dy) -- (3,-6*\dy);


\end{tikzpicture}
\vspace*{-1cm}
\caption{Left: an ordered list of $(\le 1)$-modified fragments of some hypothetical text $T$; modifications are shown in red. (One of the fragments is actually 0-modified.) Right: the compact trie of these fragments decomposed as in the proof of \cref{lem:modifiedLCP}. The subtrees in bold (blue and brown) are created in odd phases and are compact tries of substrings of $T$; the remaining even-phase subtrees have string depth 1. The characters drawn in half red, half black were modified in one fragment but unmodified in some other one.}\label{fig:decomp}
\end{figure}

Each compact trie created in an odd phase represents a set of fragments of $T$. As described in \cref{sec:unrooted}, it can be represented in a canonical form by extending the terminals. Thus, a data structure of \cref{thm:unrootedLCPfactors} for all trees in the decomposition created in odd phases can be stored. The suffix tree of $T$ is also stored. Each explicit node of a tree $\T'$ of the decomposition stores the corresponding explicit node of $\T_i$, and each explicit node of $\T_i$ stores the corresponding node(s) of trees in the decomposition (there are up to two such trees, one from an odd phase and one from an even phase). We also store the data structure of \cref{thm:WExp} for each tree in the decomposition.

\begin{claim}
If $k\ge1$, the total size of trees in the decomposition is $\Oh(Nk)$ and the number of phases is $\Oh(k)$.
\end{claim}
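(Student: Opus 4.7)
The plan is to bound the two quantities separately, handling the phase count first and then the total size.

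For the phase count, I would track the progress through any fixed $(\le k)$-modified suffix $S$ during the decomposition. An odd phase consumes the prefix of $S$ up to (but not including) its next remaining modification; if no modifications remain, the whole tail of $S$ is consumed and this branch terminates. The immediately following even phase consumes exactly the one character of $S$ that constitutes the next modification. Hence one full odd–even cycle advances past one modification of $S$. Since $S$ carries at most $k$ modifications, the process terminates after at most $k$ such cycles plus possibly one final odd phase that disposes of the portion of $S$ after its last modification, giving at most $2k+1 = \Oh(k)$ phases overall.

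For the total size, I would charge per phase rather than per edge of the original tries. Within a single phase, the forest of trees constructed has terminals in bijection with the (remnants of) modified suffixes still being decomposed at that phase, either at a modification boundary (at the ends of odd-phase trees) or one character past such a boundary (at the ends of even-phase trees). At any phase, the number of surviving suffixes is bounded by the total number of terminals of $\T_1,\ldots,\T_t$, which is $\Oh(N)$; each original terminal yields at most one surviving descendant at any given moment. Since the size of a compact trie is linear in its number of terminals, the total size of all trees built in one phase is $\Oh(N)$. Multiplying by the $\Oh(k)$ bound on the phase count yields $\Oh(Nk)$.

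The main subtle point, and the likely source of confusion, is that the decomposition does \emph{not} partition the edges of the original tries $\T_i$: an edge that spans a modification gets split into a prefix part (in an odd-phase tree), a single-character middle (in an even-phase tree), and a continuation (in a later odd-phase tree). Consequently, one cannot hope for a bound of $\Oh(N)$, and the factor $k$ genuinely arises from this splitting. The per-phase accounting above sidesteps the splitting issue by pushing it cleanly into the phase count, which is already controlled by the first part of the proof.
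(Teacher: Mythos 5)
Your proof is correct and takes essentially the same approach as the paper's: both track a fixed $(\le k)$-modified terminal $S$ through the decomposition and observe that each odd--even cycle advances past exactly one of its at most $k$ modifications, giving $\Oh(k)$ phases. The remaining double count is the same in opposite summation order---you sum $\Oh(N)$ total tree size per phase over $\Oh(k)$ phases, while the paper sums $\Oh(k)$ trees per terminal over $\Oh(N)$ terminals.
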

\begin{proof}
Let $S$ be a $k''$-modified suffix of $T$ being a terminal of $\T_i$. It suffices to show that fragments originating from $S$ are present in at most $2k''$ subsequent trees in the decomposition of $\T_i$. If $k''=0$, the whole $S$ is contained already in the topmost tree. Assume now that $k''>0$.

Let $\T$ be the topmost tree in the decomposition of $\T_i$ and $a$ be the position of the first modification in $S$. By definition, $\T$ contains a prefix of $S$ of length at least $a$ which is trimmed for the next phase. The next tree, created at an even phase, removes one character from the remaining suffix. Effectively, by the next odd phase, a prefix of $S$ containing the first modification is removed, and we are left with a $(k''-1)$-modified suffix of $S$. The conclusion follows by induction.
\end{proof}

\emph{Preprocessing of the pattern.} We perform the preprocessing of \cref{thm:unrootedLCPfactors} for odd-phase trees that takes $\Oh(m)$ time.

\emph{Query.} Let us consider a query $\TreeLCP_v(\T_i,P')$ for $v$ being a node of $\T_i$ and $P'$ being a ($\le k'$)-modified suffix of the pattern. Then $P'$ can be partitioned into $\Oh(k'+1)$ fragments, each of which is a fragment of $P$ or a single character originating from a modification. The fragments in the partition are processed consecutively. At each point, a node $u$ in a tree $\T$ from the decomposition of $\T_i$ is stored; the node is such that the path label of the corresponding node $u'$ in $\T_i$ equals the already processed prefix of $P'$. Initially, $u$ and $\T$ are such that $u$ corresponds to the node $v$ in $\T_i$.

A fragment $F$ in the partition of $P'$ is processed as follows. First, we ask a $\TreeLCP_u(\T,F)$ query and obtain a node $w$ in $\T$ that corresponds to a node $w'$ of $\T_i$. If $\T$ was formed at an even phase, this query is simply processed in $\Oh(1)$ time. If $|F|=1$, we apply \cref{thm:WExp} in $\Oh(1)$ time. Otherwise, the query is answered in $\Oh(\log \log n)$ time by \cref{thm:unrootedLCPfactors}. If the length $\ell$ of the path from $u$ to $w$ is smaller than $|F|$, a prefix of $F$ of length $\ell$ is removed, and the next call will be for the root of the next tree in the decomposition that starts in $w$, if there is any such tree, and otherwise, node $w'$ of $\T_i$ is returned. Finally, if $\ell=|F|$, the next call will be for node $w$ in $\T$ and the next fragment of $P'$.

Each subsequent call removes one of the $\Oh(k'+1)$ fragments of $P'$ or moves to the next tree in the decomposition. By the claim, the query complexity is $\Oh((k+k'+1) \log \log n)$.
\end{proof}

\kangaroo*
\begin{proof}
\emph{Data structure.} We use the suffix tree $\TS$ of $T$ and the data structure~\cite{DBLP:conf/latin/BenderF00} for $\LCP$-queries on $T$. We also use the $\Oh(n)$-space data structure of \cref{lem:rootedLCP_special}.

\emph{Preprocessing of the pattern.} We  compute $\TreeLCP(\TS,U)$ for all suffixes $U$ of pattern $P$ in $\Oh(m)$ total time using \cref{lem:rootedLCP_special}.

\emph{Query.} Let $S$ and $P'$ be as in the statement of the lemma. We partition $S$ into $\Oh(k)$ fragments $X_1,\ldots,X_x$, each being a fragment of $T$ or a single-character modification. Similarly, we partition $P'$ into $\Oh(k')$ fragments $Y_1,\ldots,Y_y$, each being a fragment of $P$ or a single-character modification. Each subsequent fragment $Y_i$ such that $|Y_i|>1$ is partitioned recursively by cutting off its longest prefix being a substring of $T$. That is, if the current $Y_i$ is a prefix of a suffix $U$ of $P$, we cut off the prefix of $Y_i$ of length $\min(|Y_i|,\ell)$ where $\ell$ is the string depth of node $\TreeLCP(\TS,U)$. This process is interrupted if more than $y+2x+1$ fragments in the partition of $P'$ have been formed; in this case, the remaining suffix $V$ of $P'$ is discarded. Thus, the partitioning takes $\Oh(k+k'+1)$ time. Let $Z_1 \cdots Z_z V$ be the final partition of $P'$.

For each $Y_i$ such that $|Y_i|>1$, all fragments formed from it except for the last one are called \emph{special}.

\begin{claim}
Let $X$ be a single-character string or a substring of $T$ and $j \in [1 \dd z)$ be such that $Z_j$ is special. Denote $Z'=Z_j(Z_{j+1}[0])$ and $\ell=\LCP(X,Z')$. Then either $\ell=|X|$ or $\ell < \min(|X|,|Z'|)$.
\end{claim}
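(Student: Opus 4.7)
The plan is to show that $Z'$ is never a substring of $T$, after which the claim follows by a short case analysis.

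First I would unpack the specialness of $Z_j$. Since $Z_j$ is special, it must be a piece cut from some $Y_i$ that is a fragment of $P$ with $|Y_i|>1$, and it is not the last piece produced from $Y_i$. Let $U$ denote the suffix of $P$ whose prefix is the remainder of $Y_i$ at the step when $Z_j$ was cut off, and let $\ell_U$ be the string depth of $\TreeLCP(\TS,U)$, i.e., the length of the longest prefix of $U$ that is a substring of $T$. The algorithm cuts off a prefix of length $\min(|\text{remainder}|,\ell_U)$; because $Z_j$ is not the last piece of $Y_i$, this minimum must equal $\ell_U$ (otherwise the cut would consume the entire remainder and $Z_j$ would be the last piece), so $|Z_j|=\ell_U$ and $\ell_U<|\text{remainder}|$. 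In particular, $U[\ell_U]$ lies within $Y_i$.

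Next I would identify $Z_{j+1}[0]$. Because $j<z$, the fragment $Z_{j+1}$ exists, and because $Z_j$ is not the last piece of $Y_i$ while the pieces obtained from a single $Y_i$ are consecutive in the global partition, $Z_{j+1}$ must be the subsequent piece cut from the remainder of $Y_i$. Thus $Z_{j+1}$ begins at position $|Z_j|$ of $U$, giving $Z_{j+1}[0]=U[\ell_U]$. Consequently $Z'=Z_j\cdot Z_{j+1}[0]=U[0\dd \ell_U+1)$, a prefix of $U$ strictly longer than $\ell_U$, so by the maximality of $\ell_U$ in $\TS$, $Z'$ is not a substring of $T$.

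To conclude, I would argue by contradiction. The only way the claim can fail is to have $\ell<|X|$ and $\ell\ge\min(|X|,|Z'|)$; since $\ell\le\min(|X|,|Z'|)$ always, this forces $\ell=|Z'|<|X|$, i.e., $Z'$ is a proper prefix of $X$. If $X$ is a substring of $T$, then so is $Z'$, contradicting the previous paragraph. If $X$ is a single character, then $|X|=1$ while $|Z'|=|Z_j|+1\ge 2$ (as $Z_j$ is a non-empty fragment), contradicting $|Z'|<|X|$. Either way we reach a contradiction, so the claim holds. The only non-trivial step is the identification of $Z'$ with the prefix $U[0\dd\ell_U+1)$ of $U$, which requires carefully tracking the recursive partitioning; once this is done the maximality of $\ell_U$ gives the result at once.
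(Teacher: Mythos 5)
Your proof is correct and takes essentially the same approach as the paper's: both rest on the observation that $Z'$ is not a substring of $T$, from which the dichotomy follows by elementary reasoning about prefixes. The paper states this key fact tersely as ``by definition,'' whereas you carefully trace the recursive partitioning to derive it; also note that your parenthetical ``as $Z_j$ is a non-empty fragment'' is stronger than needed---$|Z'|\ge 1$ already contradicts $|Z'|<|X|=1$.
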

\begin{proof}
If $|X|=1$, then $\ell \in \{0,1\}$ and the conclusion holds. Otherwise, $X$ is a substring of $T$. If $X$ is a prefix of $Z'$, then $\ell=|X|$. By definition, string $Z'$ is not a substring of $T$. Hence, $Z'$ is not a prefix of $X$. If $X$ is not a prefix of $Z'$, then the longest common prefix of $X$ and $Z'$ is shorter than the two strings.
\end{proof}

\begin{algorithm}[h!]
    \SetCommentSty{textrm}
    \caption{Compute $\LCP(S,P')$}\label{algo:LCP}
    $S:= X_1 \cdots X_x$\tcp*{$|X_i|=1$ or $X_i$ is a fragment of $T$}
    $P':= Z_1 \cdots Z_zV$\tcp*{$|Z_j|=1$ or $Z_j$ is a fragment of $T$ and $V$ is a discarded suffix}
    $i:= j:= 1$\;
    $\mathit{lcp}:= 0$\;
    \While{$i \le x$ \KwSty{and} $j \le z$}{
        \If{$|X_i|=1$}{
            \lIf{$Z_j[0] \ne X_i$}{\KwSty{break}}
            \lIf{$|Z_j|=1$}{$j:= j+1$}
            \lElse{$Z_j:= Z_j[1 \dd |Z_j|)$}
            $i:= i+1$\;
            $\mathit{lcp}:= \mathit{lcp}+1$\;
        }
        \ElseIf{$|Z_j|=1$}{
            \lIf{$X_i[0] \ne Z_j$}{\KwSty{break}}
            $X_i:= X_i[1 \dd |X_i|)$\;
            $j:= j+1$\;
            $\mathit{lcp}:= \mathit{lcp}+1$\;
        }
        \Else(\tcp*[f]{$X_i$ and $Z_j$ are fragments of $T$}){
            $\ell:= \LCP(X_i,Z_j)$\;
            $\mathit{lcp}:= \mathit{lcp}+\ell$\;
            \lIf{$\ell<\min(|X_i|,|Z_j|)$}{\KwSty{break}}
            \lIf{$\ell=|X_i|$}{$i:= i+1$}
            \lElse{$X_i:= X_i[\ell \dd |X_i|)$}
            \lIf{$\ell=|Z_j|$}{$j:= j+1$}
            \lElse{$Z_j:= Z_j[\ell \dd |Z_j|)$}
        }
    }
    \Return{$\mathit{lcp}$;}
\end{algorithm}

We compute $\LCP(S,P')$ by subsequently removing common prefixes of the two strings, as shown in Algorithm~\ref{algo:LCP}. Let $X_i$ and $Z_j$ be the fragments being the current prefixes of $S$ and $P'$, respectively. If $|X_i|=1$ or $|Z_j|=1$, we check if it agrees with the first character of the other string and either break the computations or cut the first character from both strings. Otherwise, we are left with computing the $\LCP$ of two substrings of $T$, which can be done in $\Oh(1)$ time~\cite{DBLP:conf/latin/BenderF00}.

Each step of the while-loop is performed in $\Oh(1)$ time and removes a fragment from the decomposition of $S$ or $P'$, or breaks the loop. By the claim, whenever a special fragment of $P'$ followed by another fragment is processed, a whole fragment of $S$ is removed or the algorithm breaks. In summary, at most $2x$ special fragments can be processed in the algorithm, which shows correctness of interrupting the partitioning of $P'$. The query algorithm works in $\Oh(k+k'+1)$ time.
\end{proof}

\section{Conclusions}
We presented the first general space improvement—by a factor of $\Theta(\log n)$—for $k$-mismatch indexing over the errata trees of Cole, Gottlieb, and Lewenstein~\cite{DBLP:conf/stoc/ColeGL04}, while retaining the same query time for $k=\Oh(1)$.
A variant of our index optimized for the case of $k,\sigma=\Oh(1)$ achieves an overall space improvement of nearly $\log^2 n$, surpassing the $\Theta(\log n)$-factor improvement of Chan, Lam, Sung, Tam, and Wong~\cite{DBLP:journals/algorithmica/ChanLSTW10} for $\sigma=\Oh(1)$.

An immediate open problem is to design even more space-efficient $k$-mismatch indexes; our intermediate results imply that one can focus on patterns of polylogarithmic length. 
Another natural question is whether the query time of $k$-errata trees can be reduced.

The original $k$-errata tree can be constructed in time proportional to its size~\cite{DBLP:conf/stoc/ColeGL04} for $k=\Oh(1)$. 
We believe our data structures can be built in similar time, possibly with a small polylogarithmic-factor overhead.
However, some of the black-box components we rely on, particularly \cref{thm:Raman} (from \cite{DBLP:journals/talg/RamanRS07}) and \cref{thm:unrootedLCP} (from \cite{DBLP:journals/algorithmica/ChanLSTW10}), lack efficient construction algorithms in the literature. 
Unpacking these components to provide efficient preprocessing is likely possible, though somewhat tedious.

We hope that the techniques developed in this work will prove useful in other applications of $k$-errata trees (see the references in \cref{sec:intro}). In ongoing work, we are exploring extensions to $k$-error text indexing.

\bibliographystyle{alphaurl}
\bibliography{references}

\end{document}